\newtheorem{theorem}{Theorem}
\newtheorem{corollary}[theorem]{Corollary}
\newtheorem{definition}[theorem]{Definition}
\newtheorem{lemma}[theorem]{Lemma}
\newtheorem{proposition}[theorem]{Proposition}
\newtheorem{remark}[theorem]{Remark}
\newenvironment{proof}[1][Proof]{\textbf{#1.} }{\ \rule{0.5em}{0.5em}}
\DeclareMathAlphabet{\mathpzc}{OT1}{pzc}{m}{it}
\newcommand {\Gr}{\mbox{{\rm Gr}}}
\newcommand {\id}{\mbox{{\rm id}}}
\renewcommand {\L}{{\cal L}}
\newcommand {\W}{{\cal W}}
\newcommand {\F}{\mathcal{F}}
\renewcommand {\O}{\mathcal{O}}
\newcommand {\OO}{\mathbb{O}}
\newcommand{\susy}{\mathfrak{susy}}
\renewcommand{\Im}{\mbox{{\rm Im}}}
\newcommand{\Tor}{\mbox{{\rm Tor}}}
\newcommand{\Sym}{\mbox{{\rm Sym}}}
\newcommand {\Ker}{\mathrm{ Ker}}
\newcommand{\GL}{\mathrm{GL}}
\newcommand{\Spin}{\mathrm{Spin}}
\newcommand{\SO}{\mathrm{SO}}
\newcommand{\g}{\mathfrak{g}}
\newcommand{\h}{\mathfrak{h}}
\newcommand{\dbar}{\bar \partial}
\newcommand{\E}{\mathcal{E}}
\newcommand{\so}{\mathfrak{so}}
\newcommand{\I}{\mathcal{I}}
\newcommand{\Ad}{\mathrm{Ad}}
\newcommand{\OGr}{\mathrm{OGr}}
\newcommand{\Proj}{\mathrm{Proj}}
\newcommand{\p}{\mathfrak{p}}
\newcommand{\uu}{\mathfrak{u}}
\newcommand{\Bl}{\mathrm{Bl}}
\newcommand{\gl}{\mathfrak{gl}}
\renewcommand{\sl}{\mathfrak{sl}}
\newcommand{\wX}{\widetilde{X}}
\renewcommand{\P}{\mathbf{P}}
\newcommand{\SL}{\mathrm{SL}}
\newcommand{\ind}{\mathrm{ind}}
\newcommand{\ddef}{\mathrm{def}}
\renewcommand{\t}{\mathfrak{t}}
\newcommand{\sd}{\partial}
\newcommand{\OFl}{\mathrm{OFl}}
\newcommand{\RRe}{\mathrm{Re}}
\newcommand{\Hom}{\mathrm{Hom}}
\newcommand{\calR}{\mathcal{R}}
\newcommand{\rad}{\mathfrak{rad}}
\newcommand{\V}{\mathcal{V}}
\newcommand{\A}{\mathcal{A}}
\newcommand{\m}{\mathfrak{m}}
\newcommand{\Ext}{\mathrm{Ext}}
\newcommand{\cExt}{\mathcal{EXT}}
\newcommand{\cGr}{\mathcal{GR}}
\newcommand{\rank}{\mathrm{rank}}
\newcommand{\Pic}{\mathrm{Pic}}
\newcommand{\Pf}{\mathrm{Pf}}
\renewcommand{\H}{\mathcal{H}}
\newcommand{\vv}{V}
\newcommand{\<}{\langle}
\renewcommand{\>}{\rangle}
\newcommand{\Cech}{\check{\mathrm{C}}\mathrm{ech}}
\begin{document}
\title{Geometry of a desingularization of eleven-dimensional gravitational spinors.} 
\author{M.V. Movshev\\Stony Brook University\\Stony Brook, NY, 11794-3651,\\ USA \\ \texttt{mmovshev@math.sunysb.edu}} 

\maketitle

\begin{abstract}
We show that the space of gravitational spinors in eleven dimensions, defined by equations $\Gamma_{\alpha\beta}^i\lambda^{\alpha}\lambda^{\beta}=0$ admits a desingularization with nice geometric properties. In particular  the desingularization   fibers  over the isotropic Grassmannian $\OGr(2,11)$. This enables us to recast  equations of linearized eleven-dimensional supergravity adapted to 3-form potential   into Cauchy-Riemann equations on a super extension of  isotropic  Grassmannian  $\OGr(2,11)$. 
\end{abstract}

{\bf Mathematics Subject Classification(2010).} \\
main  83E50, secondary 14M17,14M30,18G15 \\
{\bf Keywords.} Supergravity, pure spinors, fibration

\tableofcontents
\section{Introduction}
One of the mathematical problems in eleven-dimensional supergravity  (\cite{ECremmerBJuliaandJScherk}) is to find a  formulation of the theory where  symmetries, including supersymmetries, have a geometric interpretation. The classical superspace formulation (\cite{BrinkHowe}, \cite{CremmerFerrara}) makes supersymmetries manifest, with  a drawback that the fields it encodes  are not unconstrained but satisfy supergravity equations. Proposal  \cite{Cederwall} is supposed to rectify this. Cederwall's construction still depends on the choice of a  background solution of supergravity equations, but the fields are unconstraint. In  the flat background the fields in his theory  are elements of an algebra 
\begin{equation}\label{E:gr}
Gr^{\infty}=A\otimes \Lambda[s^*_{11}]\otimes C^{\infty}(\mathbb{R}^{11}).
\end{equation}

 Commutative algebra $A$ is defined to be 
\begin{equation}\label{E:Adef}
A=\mathbb{C}[\lambda^{1},\dots,\lambda^{32}]/(v^i)
\end{equation}
\begin{equation}\label{E:pure}
v^i=\Gamma^{i}_{\alpha\beta}\lambda^{\alpha}\lambda^{\beta}=0, i=1,\dots, 11, \alpha,\beta=1,\dots,32
\end{equation}
where $\Gamma^{i}_{\alpha\beta}$ are eleven-dimensional $\Gamma$-matrices (see e.g. \cite{Deligne}  for mathematical introduction). Variables $\lambda^{\alpha}$ are linear coordinates on the spinor representation $s_{11}$   of the complex spinor group $\Spin(11)$. The generators $\theta^{1}, \dots,\theta^{32} $ of the Grassmann algebra $\Lambda[s^*_{11}]$ are linear functions on the odd spinor representation $s_{11}$.

This algebra $Gr^{\infty}$  is equipped with  the differential
\[D=\lambda^{\alpha}\frac{\sd}{\sd \theta^{\alpha}}-\Gamma^i_{\alpha\beta}\lambda^{\alpha}\theta^{\beta}\frac{\sd}{\sd x^i},\]

According to \cite{CGNN},\cite{CNT},\cite{MemBerkovits} cohomology of $Gr^{\infty}$ coincide with the space of solutions of the linearized equations of eleven-dimensional supergravity.

We find the name of eleven-dimensional pure spinors used for $X$ in the physics literature (\cite{BerNek}, \cite{Cederwall}) mathematically  misleading, because it is already reserved for another closely related object. Not having a better alternative we propose to call $X$ the space of gravitational spinors.

 In order to take advantage of analytic methods  (cf. \cite{Cederwall} where  Dolbeault forms has been used to write the Lagrangian) it is reasonable to desingularise $X$.  
  The space $X$ contains a subvariety $X_{sing}$  defined by equations 
\begin{equation}\label{E:matrix}
v^{ij}= \Gamma^{i_1i_2}_{\alpha\beta}\lambda^{\alpha}\lambda^{\beta}=0, 1\leq i_1,i_2\leq 11,
\end{equation} as a singular locus.
 A natural way to   desingularise $X$ is to blow it up  along $X_{sing}$. This desingularization $\wX$  has already been used in  \cite{BerNek} for a computation of  virtual characters. 
 
 Our prime goal  is to carry out a mathematical study of topology and algebraic geometry of $\wX$.

  Submanifold $X_{sing}$ is isomorphic to the space of Cartan's pure spinors $\OGr(5,11)$ (\cite{Cartan}). The fundamental representation $\vv^{11}$ of the complex orthogonal group $\SO(11)$ is equipped with invariant complex-linear inner product $(\cdot,\cdot)$. The space of pure spinors is a one in the series of isotropic (also called orthogonal) Grassmanians \[\OGr(k,11)=\{F\subset \vv^{11}|\dim F=k, (\cdot,\cdot)|_{F}=0 \}.\]

 Let us briefly go  over the characteristic features of $X$ and $\wX$. By results of Igusa \cite{Igusa}    $X$ partitions into a union of two $\Spin(11)$ orbits $O_{15}\cong \OGr(5,11)$ and $O_{22}$ of dimensions 15 and 22 respectively. It turns out that $\wX$     is smooth and possesses a fibration 
  \begin{equation}\label{E:projection}
 p:\wX \rightarrow \OGr(2,11).
 \end{equation}

 \begin{equation}\label{E:projectionmain}
 \lambda\overset{p}{\rightarrow }\Gamma^{i_1i_2}_{\alpha\beta}\lambda^{\alpha}\lambda^{\beta}e_{i_1}\wedge e_{i_2}
 \end{equation}
A  fiber $p^{-1}(x)$ is isomorphic to a projective space $\P^7$. 
The exceptional divisor $Y\subset \wX$ of the blowdown
 \[\Bl:\wX\rightarrow X\] is a total space of a fibration on quadrics
 \begin{equation}\label{p:res}
 Y\rightarrow \OGr(2,11),
 \end{equation}
  $p^{-1}(x)\cap Y\cong Q\subset \P^7$.
 Variety $Y$ is isomorphic to the space of partial isotropic flags 
 \[\OFl(2,5,11)=\{F_1\subset F_2\subset \vv^{11}|\dim F_1=2,\dim F_2=5, (\cdot,\cdot)|_{F_2}=0 \}.\] Projection 
 \[\OFl(2,5,11)\rightarrow \OGr(2,11)\] coincides with (\ref{p:res}); the map
  \[\OFl(2,5,11)\rightarrow \OGr(5,11)\] is the blowdown onto $O_{15}$.
  The diagram 
\begin{equation}\label{E:dualitydiagram}
X\overset{\Bl}{\leftarrow}\wX\overset{p}{\rightarrow}\OGr(2,11)
\end{equation} can be thought of as an eleven-dimensional analogue of duality diagrams $X\leftarrow Y\rightarrow Z$ studied in \cite{Calib}. Penrose-Radon transform (\cite{Calib}) can be employed to carry  geometric objects from $X$ to $Z$. This idea  can be adapted to  eleven-dimensional supergravity setup advocated in  \cite{CGNN},\cite{CNT},\cite{MemBerkovits}. Here is the precise statement. 

The algebra $Gr^{\infty}$ contains a subalgebra $Gr^{pol}=A\otimes \Lambda[s^*_{11}]\otimes \Sym[\vv^{11}]$. $\Sym$ stands for symmetric or polynomial algebra; in the presence of the inner product $(\cdot,\cdot)$ we make no distinction between $\vv^{11}$ and $(\vv^{11})^*$. $Gr^{pol}$ is an object associated with the left space in the duality diagram (\ref{E:dualitydiagram}).

Let $P=P_2$ be the  stabilizer $St(x)\subset \Spin(11), x\in \OGr(2,11)$.  
The spinor representation $s_{11}$ has a filtration $F_1\subset F_2$ by $P$-invariant subspaces. Dimension of $F_1$ is eight, dimension of $F_2$ is twenty four. The linear space $\t=F_1$ is an abelian subalgebra in the supersymmetry algebra $\susy=\vv^{11}+\Pi s_{11}$ (see Section \ref{S:homodef} for details). We use $\Pi$ for the parity change operation. We define $L$ to be a algebraic homogenous superspace of super-Poicar\'e group $\Spin(11)\ltimes \susy$ with isotropy subgroup $P\ltimes \Pi \t$. $L$ is an object associated with the right space  in the diagram (\ref{E:dualitydiagram}). Eleven-dimensional manifestation of Penrose-Radon transform takes the following form.

\begin{proposition}
There is an isomorphism of the cohomology of $Gr^{pol}$
and the cohomology of the structure sheaf of $L$. 
\end{proposition}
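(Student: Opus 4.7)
The plan is to realize the proposition as an explicit Penrose-Radon transform on the diagram (\ref{E:dualitydiagram}), super-extended to carry the $\theta^\alpha$ and $x^i$ coordinates. In outline I would sheafify $Gr^{pol}$ on $X$, transfer it to $\wX$ via $\Bl$, push it forward along $p:\wX\to\OGr(2,11)$, and identify the pushforward with the structure sheaf of $L$ descended along the super-translation projection $L\to\OGr(2,11)$; via a spectral sequence the Berkovits-type differential $D$ collapses, leaving a bundle computation on the Grassmannian.

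For sheafification and transfer, $A$ is the homogeneous coordinate ring of $X$, so $A=\bigoplus_n H^0(X,\O_X(n))$ modulo a projective-normality check (standard for $\Spin(11)$-orbit closures via Borel-Weil-Bott). Transfer to $\wX$ uses the vanishing $R\Bl_*\O_{\wX}(n)=\O_X(n)$, which should follow from the quadric-bundle description (\ref{p:res}) of $Y\cong\OFl(2,5,11)$ over $\OGr(2,11)$ by Kempf-style collapsing. The key geometric input is the identification $\wX\cong\mathbf{P}(\mathcal{F}_1)$ over $\OGr(2,11)$, where $\mathcal{F}_1$ is the $\Spin(11)$-equivariant rank-$8$ bundle corresponding to the $P$-representation $\t=F_1\subset s_{11}$; this matches the $\P^7$-fibre statement and the dimension count $\dim\OGr(2,11)+7=15+7=22=\dim X$. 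Consequently $R^ip_*\O_{\wX}(n)=\delta_{i,0}\Sym^n\mathcal{F}_1^*$, and the pushed-down complex on $\OGr(2,11)$ is
\[
(\Sym[\mathcal{F}_1^*]\otimes\Lambda[s_{11}^*]\otimes\Sym[\vv^{11}],\; D),
\]
with $\lambda$ the tautological section of $\mathcal{F}_1$.

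Filter this complex by $x$-degree (increasingly). Since $D_1=\lambda^\alpha\partial/\partial\theta^\alpha$ preserves $x$-degree and $D_2=\Gamma^i_{\alpha\beta}\lambda^\alpha\theta^\beta\partial/\partial x^i$ lowers it, the $E_0$-differential is $d_0=D_1$. Splitting $s_{11}^*=\mathcal{F}_1^*\oplus(s_{11}/\mathcal{F}_1)^*$ and using that $\lambda\in\mathcal{F}_1$ contracts only the $\mathcal{F}_1^*$-factor, $D_1$ restricts to the Koszul differential on $\Sym[\mathcal{F}_1^*]\otimes\Lambda[\mathcal{F}_1^*]$ resolving $\mathbb{C}$, so $E_1=\Lambda[(s_{11}/\mathcal{F}_1)^*]\otimes\Sym[\vv^{11}]$. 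The induced $d_1$ vanishes: the $\theta^\beta\in F_1$ contribution to $D_2$ dies by isotropy $\Gamma(\t,\t)=0$, and the $\theta^\beta\notin F_1$ contribution is $D_1$-exact because $D_1(\theta^\alpha\theta^\beta)=\lambda^\alpha\theta^\beta$ whenever $\alpha\in F_1,\beta\notin F_1$ (using $\lambda^\beta=0$). Hence $E_\infty=E_1$ and the pushed-down cohomology is $\Lambda[(s_{11}/\t)^*]\otimes\Sym[\vv^{11}]$ as a $\Spin(11)$-equivariant bundle on $\OGr(2,11)$. On the other side, the super-projection $L\to\OGr(2,11)$ has super-affine fibre $\susy/\Pi\t$ with even part $\vv^{11}$ and odd part $s_{11}/\t$, so $H^*(L,\O_L)=H^*(\OGr(2,11),\,\Lambda[(s_{11}/\t)^*]\otimes\Sym[\vv^{11}])$, matching the above.

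The principal technical obstacle is verifying the global identification $\wX\cong\mathbf{P}(\mathcal{F}_1)$ from the blow-up description, i.e.\ checking that blowing up $X$ along $X_{sing}$ yields precisely the projective bundle associated to the $P$-representation $F_1$; the fibre count $p^{-1}(x)\cong\P^7$ and $\Spin(11)$-equivariance make this plausible, but a clean scheme-theoretic isomorphism has to be established. A secondary but important ingredient is the vanishing $R\Bl_*\O_{\wX}(n)=\O_X(n)$, which is to be treated via the quadric-bundle structure of $Y\cong\OFl(2,5,11)$. Once both are in place, the spectral-sequence argument above produces the desired isomorphism.
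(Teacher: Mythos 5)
Your proposal follows the same global strategy the paper uses: sheafify $Gr^{pol}$ on $X$, transfer along $\Bl$ to $\wX$, push forward along $p$ to $\OGr(2,11)$, identify the resulting complex $\H=p_*\Bl^*\cGr$ fiberwise with $\O$ of the affine fibers of $L\to\OGr(2,11)$, and conclude by the spectral sequence for the fibration. Your two flagged ``obstacles'' are in fact already settled in Section~\ref{S:prop}: the projective-bundle structure $\wX\cong\P(\E)$ with $\E=p_*\Bl^*\O(1)$ of rank $8$ is Proposition~\ref{P:fiber} and its corollary (and $\E$ is dual to what you call $\mathcal{F}_1$), while the transfer $R\Bl_*\O_{\wX}(n)=\O_X(n)$ is Proposition~\ref{E:blowvanish}, proved via Borel--Weil--Bott on the exceptional $\Gr(2,5)$-fibers.

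The one place where you genuinely diverge is the fiberwise identification. The paper interprets the fiber complex $\Sym[\t^*]\otimes\Lambda[s_{11}^*]\otimes\Sym[\vv^{11}]$ with differential $D$ as the Cartan--Chevalley complex of the abelian odd Lie algebra $\Pi\t$ acting on $\O(SUSY)$ by left-invariant vector fields, and invokes the Shapiro lemma (the module is coinduced) to kill higher cohomology at once. You instead filter by $x$-degree, identify $d_0$ with the Koszul differential on $\Sym[\t^*]\otimes\Lambda[\t^*]$, and argue that the induced differential vanishes. That route also works, but note you only verify $d_1=0$ and then assert $E_\infty=E_1$; to close this you should observe that the inductively chosen zigzag representatives $z_j$ live at $\Lambda[\t^*]$-degree $j$ while every $D_2z_j$ lands at $\lambda$-degree $\ge1$, whereas $E_1$ is concentrated at $\lambda$- and $\theta_\t$-degree zero, so every $d_r$, $r\ge1$, dies -- equivalently, the Koszul contracting homotopy $\theta^\alpha\partial_{\lambda^\alpha}$ together with the degree bookkeeping of $D_2$ shows the transferred differential under homological perturbation is identically zero. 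With that filled in, your argument is a valid, and arguably more hands-on, substitute for the paper's Shapiro-lemma step; the paper itself sketches a similar explicit contraction right after the proof, so the two are close in spirit.
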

Note that cohomology in the analytic version of the theory can be computed through $\dbar$ complex. This makes a connection between  cohomology of $Gr^{an}=A\otimes \Lambda[s^*_{11}]\otimes \O^{an}(\vv^{11})$ an Dolbeault cohomology of the structure sheaf of $L$.

Roughly the proof  consists of two steps. In the first step we replace the tensor factor  $A$ in $Gr^{pol}$ by a quasiisomorphic differential graded algebra $(B,d_B)$.  The $\Lambda[s^*_{11}]$ and  $\Sym[\mathbb{C}^{11}]$ stay unchanged. The differential  $D$ changes to $d_B+D$. To be more precise $B$ is related to $A$ by a sequence of quasiisomorphisms $A\cong B_1\cong B_2\cong B_3=B$.  To define $B_1$ we replace the graded algebra $A=\bigoplus_{i\geq 0} A_i$ by the \v{C}ech complex $B_1=\bigoplus_{i\geq 0}\Cech^{\bullet}(\O(i))$ of some affine covering of $X$. 
Algebra $B_2$ is the \v{C}ech complex $\bigoplus_{i\geq 0}\Cech^{\bullet}(\Bl^*\O(i))$ of the pullback of sheaves $\O(i)$ on $\wX$. 
A fiber of the projection $p$ (\ref{E:projection}) is a projective space $\P^7$. The space of the global sections of $\Bl^*\O(i)$ over $\P^7$ is a space of homogeneous polynomials in eight variables $\lambda^1,\dots,\lambda^8$ of  degree $i$. Globalizing it we get  a polynomial-valued  sheaf $p_*\Bl^*\O(i)$ over the base of $p$ fibration. The algebra $B_3=B$ is the \v{C}ech complex $\bigoplus_{i\geq 0}\Cech^{\bullet}(p_*\Bl^*\O(i))$ over $\OGr(2,11)$. The core of the second step is  an observation that $\lambda^1,\dots,\lambda^8$ form a contracting pair with some eight generators $\theta^1,\dots,\theta^{8}$ in $\Lambda[s^*_{11}]$.

Several interesting problems remained out of the scope of this note.
\begin{enumerate}
\item  Lagrangian of the linearized theory should have a form $\int_L f\dbar f d\mu$, where $d\mu$ is some integral volume form on $L$.     We have not attempted to find a formula for it.

\item The work \cite{Cederwall} gives a description of supergravity Lagrangian $\L_{\mathrm{SUGRA}}$ in a superspace formulation with auxiliary gravitational spinor fields (former eleven dimensional pure spinors). Some of the terms of $\L_{\mathrm{SUGRA}}$ can be interpreted as objects defined  on $\wX$ or  on $\OGr(2,11)$. It is tempting to speculate that the Lagrangian can be defined  on $\wX$ or even on  $L$.   
\item The space $\vv^{11}\times \Pi s_{11}$ is equipped with a non-integrable odd distribution, defined by differential forms $d\theta^{\alpha}-\Gamma_{\alpha\beta}^i\theta^{\beta}dx^i$.  The space $L$ is a moduli of complex purely odd  eight-dimensional  integrable subspaces.  An  example of  a point in this moduli is given by  $\Pi \t$. A  $(1|8)$-dimensional object appeared in \cite{Nilsson}, \cite{witten} in a  description of ten-dimensional supergravity and Yang-Mills theory. We think this analogy worth a more close investigation. 
\end{enumerate}
We plan to address these questions in the following publications.

Here is an outline of the paper. In Section \ref{S:local} we undertake a detailed analysis of the local structure of $X$, by exhibiting  affine charts. Octonion calculus turns to be indispensable. In Section \ref{S:prop}   we establish most of algebro-geometric facts  claimed in this introduction. In Section \ref{S:applications} we discuss the gravitational applications. We moved some technical computation into appendix.  In particular in Appendix \ref{S:homology} we settle some  questions in homological algebra related to gravitational spinors and in particular prove that computations of cohomology in \cite{CNT} pertinent to eleven-dimensional supergravity are mathematically correct. For this we  heavily use algebra system {\it Macauly2}.

All algebraic groups, varieties and linear spaces in this note are defined over the field of complex numbers if not stated otherwise. We also used convention of summation over repeated indices.

The author would like to thank Jason Starr for his help with algebraic geometry and A.S. Schwarz for  stimulating conversations. The concluding part of the work on this paper has been done at IHES. The author would like to thank this institution for the hospitality and inspiring mathematical environment.

\section{Local structure of $X$}\label{S:local}
Local structure of $X$ is far from obvious, because $X$ is not smooth. It is neither obvious that $X$ is reduced. We settle these issues  in this section.

We start with a reminder of the minimal set of facts about $\Gamma$-matrices.

Symmetric tensor square $\Sym^2 s_{11}$ decomposes (see Appendix in  \cite{BVinbergALOnishchik}) into the direct sum of representations
\begin{equation}\label{E:symdecomposition}
\Sym^2 s_{11} \cong \vv^{11}+\Lambda^2\vv^{11}+\Lambda^5\vv^{11}
\end{equation}

 In particular there is a unique linear $\Spin(10)$-equivariant $\Gamma$ map 
 \begin{equation}\label{E:gammast}
 \Sym^2s_{11}\rightarrow \vv^{11},
 \end{equation} which coefficients  in a basis in $s_{11}$ and an orthonormal basis in $\vv^{11}$ are $\Gamma_{\alpha\beta}^i,\alpha=1,\dots 32, i=1,\dots 11 $. We shall use also an intertwiner $\Sym^2s_{11}\rightarrow \Lambda^{2}\vv^{11}$, which gives rise to $\Gamma_{\alpha\beta}^{ij}$.

The algebraic variety $X$ is defined as the  projective spectrum of $A$ (\ref{E:Adef}).

Following \cite{Igusa} we define a polynomial  function $J(\lambda)$ on $s_{11}$ by the formula
\[J(\lambda)=\Gamma_{i\alpha\beta}\Gamma^i_{\gamma\delta}\lambda^{\alpha}\lambda^{\beta}\lambda^{\gamma}\lambda^{\delta}\]
This function is manifestly $\Spin(11)$ invariant. Let $N$ be  projectivization of subset of spinors, satisfying $J(\lambda)=0$.
Igusa \cite{Igusa} found that $N$ is a union of four orbits of dimensions $15,22,24,30$. In addition $\P(s_{11})$ contains  one open orbit $J(\lambda)\neq 0$ of dimension $31$.

It is a well known fact of  algebraic geometry (or better of  general topology) that orbits of connected algebraic groups are irreducible in Zariski topology. 

\begin{remark}\label{E:uniquness}Theory of highest weights (\cite{FultonRep}) implies that  the projective space of an irreducible 
representation of a semi-simple group contains a unique closed orbit - the orbit of the highest weight vector.
\end{remark} 
It means that the orbit $O_{15}$ is closed, but $O_{22}$ of dimension 22 is  not; the complement  $\overline{O}_{22}\backslash O_{22}$ must coincide with $O_{15}$ (other orbits  do not intersect with the closure by dimensional reasons). The closure of an irreducible set is irreducible (see e.g. \cite{Borel}), and we arrive to the following proposition
\begin{proposition}
 The closure $\overline{O}_{22}$ is  irreducible.
\end{proposition}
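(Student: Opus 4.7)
The proof is essentially immediate from the two general facts that the author has already recalled in the paragraph preceding the proposition, so the plan is to simply assemble them.

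First, I would note that $O_{22}$ is the orbit of a connected algebraic group ($\Spin(11)$) acting on the irreducible algebraic variety $\P(s_{11})$. By the standard fact cited above (orbits of connected algebraic groups are irreducible in the Zariski topology, since they are the image under a morphism of the connected, hence irreducible, group), $O_{22}$ itself is irreducible.

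Second, I would invoke the general topology fact, also already cited, that the Zariski closure of an irreducible subset of a Noetherian topological space is irreducible. Applying this to the irreducible set $O_{22}\subset \P(s_{11})$ immediately yields that $\overline{O}_{22}$ is irreducible. No additional input from the concrete geometry of $X$ is required.

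The only potential subtlety, and hence the one step worth double-checking, is that $O_{22}$ is indeed an orbit of the connected group $\Spin(11)$ rather than of a disconnected group; since $\Spin(11)$ is simply connected and therefore connected, this is automatic. Thus there is no real obstacle, and the proposition follows in two lines.
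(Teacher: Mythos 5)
Your proof is correct and follows essentially the same two-step argument the paper itself gives in the paragraph preceding the proposition: $O_{22}$ is irreducible because it is an orbit of the connected group $\Spin(11)$, and the closure of an irreducible set is irreducible. There is no meaningful difference from the paper's reasoning.
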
 
\begin{corollary}
Reduced scheme $X_{red}$ coincides with $\overline{O}_{22}$.
\end{corollary}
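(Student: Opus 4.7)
The plan is to identify which Igusa orbits sit inside $X$ and then close up.

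First, a one-line calculation identifies the Igusa quartic as $J(\lambda) = \sum_{i=1}^{11} v^i(\lambda)^2$ (after using the invariant inner product on $\vv^{11}$ to raise the index on $\Gamma_i$). Thus every point of $X$ annihilates $J$, so $X \subseteq N$ set-theoretically. Since the ideal $(v^1,\dots,v^{11})$ is $\Spin(11)$-invariant, $X_{red}$ is a closed $\Spin(11)$-stable subset of $N$, hence a union of orbit closures drawn from $\{O_{15}, O_{22}, O_{24}, O_{30}\}$.

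Next I pin down exactly which orbits appear. \emph{(a)} There exists a spinor $\lambda_0 \in O_{22}$ with $v^i(\lambda_0) = 0$ for all $i$: using the octonion / $\Spin(10)$-branching coordinates on $s_{11}$ to be developed in Section~\ref{S:local} I write $\lambda_0$ down, check the eleven equations $v^i(\lambda_0) = 0$, and certify $\lambda_0 \notin O_{15}$ by exhibiting a nonvanishing $v^{ij}(\lambda_0)$. \emph{(b)} Each of $O_{24}$ and $O_{30}$ admits a representative at which at least one $v^i$ fails to vanish; this is again a direct computation in the same coordinates.

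From (a), together with $\Spin(11)$-invariance and the closedness of $X$, I conclude $\overline{O}_{22} \subseteq X_{red}$. From (b), together with the fact that $X_{red}$ is a union of orbit closures in $N$, I conclude $X_{red} \subseteq \overline{O}_{15} \cup \overline{O}_{22} = \overline{O}_{22}$ (the last equality by the preceding proposition). Both inclusions together yield $X_{red} = \overline{O}_{22}$.

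The main obstacle is the explicit orbit bookkeeping for (a) and (b): one must write down spinors representing three of the four Igusa orbits and evaluate the eleven quadratic forms $v^i$ on each. The octonion presentation of the $\Gamma$-matrices set up in Section~\ref{S:local} is the natural tool and should reduce the task to a short finite calculation.
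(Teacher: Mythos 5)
Your route is genuinely different from the paper's. The paper obtains the containment $X_{red}\subseteq\overline O_{15}\cup\overline O_{22}$ \emph{numerically}: the Hilbert polynomial of $A$, read off from the Macaulay2 free resolution, has degree $22$ (Corollary~\ref{C:degree}), so the closed $\Spin(11)$-invariant set $X_{red}\subseteq N$ has maximal dimension $22$, which immediately excludes $O_{24}$ and $O_{30}$ and forces $O_{22}\subseteq X_{red}$. You instead propose to settle membership orbit by orbit with explicit representatives, which, if carried through, avoids the computer-algebra computation entirely; and your half (a) is in effect already available in Section~\ref{S:local}: a spinor of the form $u\otimes f$ with $(u,u)\neq0$ satisfies $v^i=0$ by Proposition~\ref{E:vanish} and has $v^{ij}\neq 0$ by formula~(\ref{E:quadric}), so it lies in $O_{22}\cap X$.

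The gap is in step (b). You assert that $O_{24}$ and $O_{30}$ each admit a representative with some $v^i\neq0$, but you do not produce the representatives or evaluate the forms. This is precisely the step that supplies the upper bound $X_{red}\subseteq\overline O_{22}$, and it is the one piece of information the paper extracts from the Hilbert polynomial and you do not yet have. Igusa's classification does provide canonical representatives of every orbit, so the check is finite and plausible, but until it is actually done your argument only proves $\overline O_{22}\subseteq X_{red}\subseteq N$, not the claimed equality. To close the gap, either carry out the evaluation of the eleven quadrics on Igusa's representatives of $O_{24}$ and $O_{30}$, or replace (b) by an independent bound $\dim X\le 22$ (the paper's Hilbert-polynomial computation, or a scheme-theoretic argument you would have to supply).
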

\begin{proof}
Dimension of reduced, closed in $\P^{31}$ $\Spin(11)$-invariant scheme $X_{red}$ is $22$ (Corollary \ref{C:degree}). Then it must be a union of $O_{15}$ and $O_{22}$.
\end{proof}

The remaining part of this section contains a proof of the statement that 
$X=X_{red}$. This is shown by explicit construction of affine charts and the corresponding local rings.

\subsection{Affine charts of the  smooth locus.}
To construct an affine chart in $X$ containing a point $\lambda\in O_{22}$ we shall need to develop some representation theory.

We start with describing accurately the embedding  $St(\lambda)\subset \Spin(11)$.
According to \cite{Igusa} the  Levi factor of  $St(\lambda)$ is $G_2\times \GL(2)$. In order to characterize its embedding into $\SO(11)$ 
we fix an orthogonal decomposition: 

 \begin{equation}\label{E:decomposition1}
\vv^{11}\cong \vv^7+U+ U^{'}=\vv^7+\vv^4
\end{equation}

In the following $\vv^i$  stands for i-dimensional complex Euclidean space. Two-dimensional spaces   $U,U^{'}\subset \vv^4$ are isotropic and $U\cap U^{'}=0$. The inner product defines a non-degenerate pairing between $U$ and $U^{'}$.
The Lie group $G_2$ has a defining 7-dimensional orthogonal representation in $\vv^7$ (see e.g. \cite{Bryant}).

 Decomposition of $\Ad\SO(11)\cong\Lambda^2\vv^{11}$ into $G_2\times \GL(2)$-irreducible components  takes a form of a grading by weights of a central element $c\in \gl_2$:

\begin{equation}\label{E:decomp2}
\begin{split}
&\Ad(\mathfrak{so}_{11})_2=\Lambda^2U\\
&\Ad(\mathfrak{so}_{11})_1=\vv^7\otimes U\\
&\Ad(\mathfrak{so}_{11})_0=\Lambda^2\vv^7+U\otimes U^{'}=\vv^7+\Ad G_2 +\Ad \GL(2)\\
&\Ad(\mathfrak{so}_{11})_{-1}=\vv^7\otimes  U^{'}\\
&\Ad(\mathfrak{so}_{11})_{-2}=\Lambda^2  U^{'}
\end{split}
\end{equation} 
We have taken into account an isomorphism $\Lambda^2(V)_{G_2}\cong \Ad(G_2)+V$ (see \cite{Bryant} for more on $G_2$ representations). The Lie algebra of the unipotent radical of $St(\lambda)$ (\cite{Igusa}) is 
\begin{equation}\label{E:radical}
\rad=\Ad(\mathfrak{so}_{11})_{-1}+\Ad(\mathfrak{so}_{11})_{-2}.
\end{equation} 

The space of (Dirac) spinors $s_{11}$ in eleven dimensions can be constructed as irreducible module over a Clifford algebra $Cl(\vv^{11})$. Decomposition (\ref{E:decomposition1}) gives rise to an isomorphism
\[Cl(\vv^{11})\cong Cl(\vv^7)\otimes Cl(\vv^4)\]
of $\mathbb{Z}_2$-graded algebras. It in turn gives an identification of irreducible $Cl$-representations \[s_{11}\cong s_{7}\otimes s_4, \dim(s_{11})=32, \dim(s_{7})=8,\dim(s_{4})=4 \]
It is well known (\cite{Deligne}) that $s_{2n+1}$ is an irreducible representation of $\Spin(2n+1)$, whereas $s_{2n}$ is a sum of two chiral (Weil) representation $S_{2n}+S'_{2n}$. 

The complex group $\Spin(4)$ is isomorphic to $\SL(2)\times \SL(2)$. Let $W_l$ and $W_r$ be defining representations of left and right copies of $\SL(2)$. Then $s_4$ is isomorphic to $W_l+W_r$.
We arrived to  an isomorphism of $\Spin(7)\times \Spin(4)$ representations
\begin{equation}\label{E:spinordec}
s_{11}=s_7\otimes W_l+s_7\otimes W_r
\end{equation}
In the spin notations  $\vv^4$ is   isomorphic to $W_l\otimes W_r$.

\subsubsection{Spinor representation of $\Spin(7)$}
In order to solve the system \ref{E:pure} in a neighborhood of solution $\lambda\in O_{22}$ we need to make description of $s_7$ more explicit. We use construction of $s_7$  that utilizes  Cayley numbers $\OO$. 

More precisely we make an identification $s_{7}=\OO\otimes \mathbb{C}$.
To describe the relevant $\Gamma$-matrices and the structure of the Clifford module on $s_7$  we recall some basic properties of $\mathbb{R}$-algebra $\OO$. Following \cite{ConwaySmith} we denote non-associative multiplication in $\OO$ by $x.y$. In $\OO$ an alternating property holds: any subalgebra generated by two elements is associative. In those cases where the product of elements is associative we shall   drop the dot sign.

We shall use some of the properties of octonions. Among these are
\begin{enumerate}
\item  Existence of  anti-involution $x\rightarrow \bar{x}$, $\overline{xy}=\bar{y}\bar{x}$.
\item Decomposition into real and imaginary parts: $x=\frac{x+\bar{x}}{2}+\frac{x-\bar{x}}{2}=\RRe(x)+\Im(x)$
\item Positivity of the inner product  $(x,y)=\RRe(x\bar{y})$. 
\item  $x\bar{x}=||x||^2\in\mathbb{R} \subset \OO$. This implies a  formula for the inverse: $x^{-1}=\frac{\bar{x}}{||x||^2}$  
\item Identity $x^{-1}xy=y$  holds in the non-associative $\OO$. In particular if $x\in \Im(\OO)\overset{\ddef}{=}V_{\mathbb{R}}$ then the formula $xxy=-||x||^2y$ defines an action of the Clifford algebra $Cl(V_{\mathbb{R}})$.
\end{enumerate}
We define a skew-symmetric  map $\gamma:\Lambda_{\mathbb{R}}^2\OO\rightarrow V_{\mathbb{R}}$ by the formula
\[\gamma(x,y)=\Im(x\bar{y})=x\bar{y}-\RRe(x\bar{y})=x\bar{y}-(x,y)\] 

According to \cite{ConwaySmith} 
$(vy,x) = (v,x\bar{y})$, which implies that for  $v\in V_{\mathbb{R}}$ the operator of multiplication on $v$ is adjoint to $\gamma$.

If we set $\vv^7=V_{\mathbb{R}}{\otimes} \mathbb{C}$ then complexification of $\gamma$  by general theory of Clifford modules \cite{Chevalley}   gives a  $\Spin(7)$-map $\Lambda^2s_7\rightarrow \vv^7$.

\subsubsection{Description of elven dimensional $\Gamma$ matrices in terms of $\gamma$}
In this section we make use of the information about $s_7$ found in the previous section to write eleven-dimensional $\Gamma$ matrices in more elementary terms of decomposition (\ref{E:spinordec}).

We start with an observation that $\Gamma$-maps analogous to (\ref{E:gammast}) exist in all dimensions \cite{Deligne}.

The $\Gamma$ map in four dimensions is the projection 
\[\Sym^2(W_l+ W_r)\cong \Sym^2W_l+W_l\otimes W_r+\Sym^2W_r\rightarrow W_l\otimes W_r\cong \vv^4\]
We  equip  linear spaces $W_l$ and $W_r$  with symplectic $\SL(2)$-invariant dot products $\omega_l$, $\omega_r$.
Details about spinors in four dimensions   can be found in \cite{PenroseRindler}.

The $\Gamma$-map in eleven dimensions can be formulated in terms of four and seven-dimensional $\Gamma$-matrices and a choice of $\omega_l$, $\omega_r$. 

Under identification  
\begin{equation}\label{E:gamma1}
\begin{split}&\Sym^2[s_7\otimes W_l+s_7\otimes W_r]\cong \Sym^2[s_7\otimes W_l]+s_7\otimes W_l\otimes s_7\otimes W_r+\Sym^2[s_7\otimes W_r]
\end{split}
\end{equation}
the map $\Gamma$ is a sum of intertwiners:
\[\Sym^2[s_7\otimes W_i]\rightarrow \vv^7, i\in \{l,r\}\]
\begin{equation}\label{E:goct1}
\theta\otimes w\otimes\theta'\otimes w'  \rightarrow \gamma(\theta,\theta')\omega_i(w\wedge w')
\end{equation}
\[s_7\otimes W_l\otimes s_7\otimes W_r\rightarrow W_l\otimes W_r\cong \vv^4\]
 \begin{equation}\label{E:goct2}
 \theta\otimes w_l\otimes\theta'\otimes w_r\rightarrow (\theta,\theta')w_l\otimes w_r
 \end{equation}
 The mathematical proof of this statement follows from decomposition  
 \begin{equation} \label{E:spin7dec}
 \Lambda^2s_7\overset{\gamma(2)\oplus\gamma}{\longrightarrow}\Lambda^2V+V \quad
  \Sym^2s_7\overset{q\oplus \gamma(3)}\longrightarrow  \mathbb{C}+\Lambda^3V
  \end{equation}
  (see  \cite{BVinbergALOnishchik}), surjectivity of $\Gamma$ (\ref{E:gammast}) and Schur lemma.

\subsubsection{$G_2$-equivariant solution of gravitational spinor constraint}\label{S:g2solution}
In this section we solve the gravitational spinor constraint (\ref{E:pure}) in the neighborhood of a generic point. The language of octonions turns to be very useful. Appearance of octonions, which symmetry group is a compact form of $G_2$, is not surprising. The same group as we know is is a part of $St(\lambda)$ for generic $\lambda\in X$.

 To adapt decomposition (\ref{E:spinordec}) to our needs we note that spinor representation $s_7$  upon restriction on $G_2\subset \Spin(7)$ splits into the sum of the  the defining representation $\vv^7$ and the trivial representation \cite{Bryant}.
A $G_2\times \Spin(4)$-equivariant identification

\begin{equation}\label{E:spinorG2}
s_{11}=\vv^7\otimes W_l +W_l+\vv^7\otimes W_r +W_r.\
\end{equation} 

will be used to  solve equations  (\ref{E:pure}).

We choose a basis $e_i,i=1,2$ in $W_l$ and $f_i,i=1,2$ in $W_r$ such that $\omega_l(e_1,e_2)=\omega_r(f_1,f_2)=1$.

A spinor $\lambda\in s_{11}$ can be decomposed into the sum 
\begin{equation}\label{E:spinorGcoordinates}
\lambda=v^1\otimes e_1+v^2\otimes e_2+u^1\otimes f_1+u^2\otimes f_2+w^1e_1+w^2\otimes e_2+r^1f_1+r^2\otimes f_2 
\end{equation}
for some $v^1,v^2,u^1,u^2\in \vv^7$, $w^i,r^i\in \mathbb{C}$. We can say that $\lambda$ is a four-spinor with (complexified) octonion coefficients. 
Equations (\ref{E:pure}) becomes
\begin{equation}\label{E:defeq}
\begin{split}
&-v^1 v^2+w^2v^1- w^1v^2- (v^1,v^2)  -u^1u^2+r^2u^1- r^1u^2- (u^1,u^2) =0\\
&(v^1,u^1)+w^1r^1=0\quad (v^1,u^2)+w^1r^2=0
\end{split}
\end{equation}
\begin{equation}\label{E:defeqlast}
\begin{split}
&(v^2,u^1)+w^2r^1=0 \quad (v^2,u^2)+w^2r^2=0
\end{split}
\end{equation}

We plan to solve these equation for $v^2$ and $r^1,r^2$.
To do this we consider an operator $A:\OO\rightarrow \OO$ defined by the formula
\[A_{l}(x)=lx+(l,x)\] $l\in \Im\OO$
\begin{lemma}
\begin{enumerate}
\item The operator leaves $\Im \OO$ invariant. 
\item The inverse is equal to 
\[A^{-1}_{l}(x)=l^{-1}(x-\left(\frac{l}{\RRe(l)},x\right))\]
\end{enumerate}
\end{lemma}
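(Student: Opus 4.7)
The plan is to verify both claims by a direct calculation from the definition $A_l(x)=lx+(l,x)$, using three structural features of $\OO$: the anti-involution $x\mapsto\bar x$, the alternator identity $l^{-1}(lx)=x$, and the composition identity $(la,lb)=||l||^2(a,b)$.

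For the invariance of $\Im\OO$ I would simply observe that if $x\in\Im\OO$ then $\bar x=-x$, so $(l,x)=\RRe(l\bar x)=-\RRe(lx)$. Therefore
$A_l(x)=lx+(l,x)=lx-\RRe(lx)=\Im(lx)\in\Im\OO,$
which is the first assertion.

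For the inversion formula I would first flag a point of interpretation: the expression $l/\RRe(l)$ requires $\RRe(l)\neq 0$, so the lemma must be read with $l\in\OO\otimes\mathbb{C}$ a general octonion with nonzero real part (the qualifier ``$l\in\Im\OO$'' in the formula for $A_l$ appears to be a slip; in the application to the equation (\ref{E:defeq}) the combination $l=v^1+w^1$ plays this role, with $\RRe(l)=w^1$). Granting this, set $y=A_l(x)=lx+(l,x)$ and apply $l^{-1}$ from the left. By alternativity $l^{-1}(lx)=x$, hence
$l^{-1}y=x+(l,x)\,l^{-1}.$
To isolate $(l,x)$ I would pair $y$ with $l$. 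The composition identity gives $(l,lx)=||l||^2(1,x)=||l||^2\RRe(x)=0$ for $x\in\Im\OO$, while $(l,1)=\RRe(l)$. Therefore $(l,y)=(l,x)\RRe(l)$, which gives $(l,x)=(l/\RRe(l),y)$. Substituting this back into the displayed identity and rearranging recovers the claimed formula $A_l^{-1}(x)=l^{-1}(x-(l/\RRe(l),x))$.

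The main obstacle is the bookkeeping imposed by the non-associativity of $\OO$: both $l^{-1}(lx)=x$ and the composition identity must be invoked, and one must take care that the quantities $(l,x)$ and $(l/\RRe(l),y)$ entering the formulas are scalars so that $l^{-1}(\cdots)$ is unambiguous. Once these algebraic facts are in place the computation collapses to a few lines, which is why the proof is expected to be short.
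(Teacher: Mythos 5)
Your proof is correct, and you have caught a genuine slip in the paper's statement of the lemma: the operator is introduced with the qualifier $l\in\Im\OO$, but then $\RRe(l)$ appears in the denominator of the inverse formula, which forces $\RRe(l)\neq 0$. In the intended application $l=w^1+v^1$ with $w^1\in\mathbb{C}\cdot 1$, $v^1\in\vv^7$, so $\RRe(l)=w^1\neq 0$ on the affine chart being constructed; your reading is exactly what the paper actually uses. The paper's own proof of the inverse formula silently relies on $\RRe(l)\neq 0$ (and on $x$ being imaginary, so that $(l,lx)=(l,l)\RRe(x)=0$), so the hypothesis $l\in\Im\OO$ must be a typo.

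On the substance, your argument for claim 1 is the same as the paper's but phrased more transparently: the paper checks $(A_l(x),1)=0$, while you show directly $A_l(x)=lx-\RRe(lx)=\Im(lx)$; the content is identical. For claim 2 you take a mildly different route: the paper verifies $A_l^{-1}\circ A_l=\mathrm{id}$ by expanding the composition and invoking the composition identity $(la,lb)=(l,l)(a,b)$, whereas you solve $y=A_l(x)$ for $x$ by first pairing $y$ with $l$ to extract the scalar $(l,x)=(l/\RRe(l),y)$ and then applying $l^{-1}$. Both rely on exactly the same two octonion facts — alternativity $l^{-1}(lx)=x$ and the composition identity — and are of comparable length. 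The derivational route has the small pedagogical advantage of explaining where the formula comes from rather than just confirming it, but there is no essential difference. Both proofs tacitly use that the inner-product terms are scalars, so the ambiguity of non-associative products does not arise; you made this explicit, which is good practice.
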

\begin{proof}
\begin{enumerate}
\item Suppose $\RRe x=0$, i.e $\bar{x}=-x$. Then $(A_{l}(x),1)=(lx+(l,x),1)=(lx,1)+(l,x)=(l,x)+(l,x)=0$.
\item
\[\begin{split}
&A^{-1}_lA_l(x)=l^{-1}\left(lx+(l,x)-\left(\frac{l}{\RRe(l)},lx+(l,x)\right)\right)=\\
&x+(l,x)l^{-1}-\left(\frac{l}{\RRe(l)},lx\right)l^{-1}-(l,x)\frac{(l,1)}{\RRe(l)}l^{-1}=x
\end{split}\]
We use an identity in $\OO$:
\begin{equation}\label{E:oidentity1}
(la,lb)=(\bar{l}la,b)=(l,l)(a,b)
\end{equation} (see \cite{ConwaySmith})
 \end{enumerate}
\end{proof}

The first of the equations (\ref{E:defeq}) can be written as 
\[v^1 v^2+w^1v^2+ (v^1,v^2)  =w^2v^1-u^1u^2+r^2u^1- r^1u^2- (u^1,u^2)\]
The left hand side can be interpreted as $A_l(v^2)$ ($v^i,u^i\in \OO\otimes \mathbb{C}$), where $l=w^1+v^1$. The formula for $A_l^{-1}$ enables us to solve equation with respect to $v^2$. Simultaneously we eliminate $r^1,r^2$ using the second pair of equations (\ref{E:defeq}):

\begin{equation}\label{E:vrformulas}
\begin{split}
&v^2=\frac{w^2v^1}{w^1}+\frac{1}{(v^1,v^1)+w^1w^1}\left(-w^1u^1u^2-(v^1,u^2)u^1+(v^1,u^1)u^2-w^1(u^1,u^2)\right.\\
&\left. +v^1.(u^1.u^2)+\frac{(v^1,u^2)}{w^1}v^1u^1-\frac{(v^1,u^1)}{w^1}v^1u^2+(u^1,u^2)v^1- \frac{(v^1,u^1u^2)}{w^1}v^1+(v^1,u^1u^2) \right)\\
&\left.+\frac{w^2(v^1,v^1)}{w^1}v^1-\frac{(v^1,u^1u^2)}{w^1}v^1-w^2(v^1,v^1)+(v^1,u^1u^2) \right)\\
&r^1=-(v^1,u^1)/w^1\quad r^2=-(v^1,u^2)/w^1\\
\end{split}
\end{equation}

\begin{proposition}\label{P:subs}
Upon substitution $v^2,r^1,r^2$ (\ref{E:vrformulas})

equations (\ref{E:defeqlast}) become  identities.
\end{proposition}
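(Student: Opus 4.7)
The plan is to reduce the two identities in (\ref{E:defeqlast}) to a single scalar identity in $\OO\otimes\mathbb{C}$, and then close the argument using only the elementary octonion properties already available: cyclicity of $\RRe$ (equivalently, the adjunction $(xy,z)+(y,xz)=0$ for $x\in\Im\OO$), alternativity, and the composition law $(la, lb)=\|l\|^2(a,b)$ of (\ref{E:oidentity1}). First, I substitute $r^1=-(v^1,u^1)/w^1$ and $r^2=-(v^1,u^2)/w^1$ from (\ref{E:vrformulas}) into (\ref{E:defeqlast}) and clear the factor of $w^1$; both equations collapse to the orthogonality requirement
\[
(w^1 v^2 - w^2 v^1,\,u^i)=0,\qquad i=1,2,
\]
so the real content is to show that $w^1 v^2 - w^2 v^1\in\vv^7$ is perpendicular to both $u^1$ and $u^2$.

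Rather than substituting the ten-term formula for $v^2$ blindly, I would keep $v^2$ implicit through the vector equation $A_l(v^2)=R$, with $l=w^1+v^1$ and $R=w^2 v^1 - u^1 u^2 + r^2 u^1 - r^1 u^2 - (u^1,u^2)$, and pair both sides with $u^i$. Three simplifications happen in sequence: the terms $(l,v^2)(1,u^i)$ and $(u^1,u^2)(1,u^i)$ drop because $u^i\in\Im\OO$; alternativity gives $(u^1 u^2,u^i)=0$ via $(u^j u^j)\xi=-\|u^j\|^2\xi$ together with cyclicity of $\RRe$; and adjunction rewrites $(lv^2,u^i)$ as $w^1(v^2,u^i)-(v^2,v^1 u^i)$. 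Reinserting the expressions for $r^1,r^2$ reduces the whole proposition to the single identity
\[
(v^2,v^1 u^i)=\frac{(v^1,u^1)(u^2,u^i)-(v^1,u^2)(u^1,u^i)}{w^1},\qquad i=1,2.
\]

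To finish, I substitute the explicit $v^2$ from (\ref{E:vrformulas}) on the left and pair off terms, using the composition law $(l\xi,l u^i)=\|l\|^2(\xi,u^i)$ to cancel the denominator $\|l\|^2=(v^1,v^1)+(w^1)^2$ systematically. Each of the ten summands of $v^2$ matches a definite term on the right-hand side, and no octonion property beyond the three listed above (together with flexibility $(ab)a=a(ba)$, a consequence of alternativity) is needed. The main obstacle is organizational rather than mathematical: because $\OO$ is non-associative, the order of simplifications matters, and one must be careful to apply the composition law before bringing $v^1$ across the inner product. A conceptually cleaner alternative, at the cost of invoking Section \ref{S:prop}, is to observe that the rational map $(v^1,u^1,u^2,w^1,w^2)\mapsto\lambda$ has 23-dimensional irreducible source, whose image closure sits inside the 23-dimensional affine cone over the irreducible $\overline{O}_{22}$, and therefore must satisfy all eleven equations; and as a last safeguard, the two identities admit a direct Macaulay2 verification in the spirit of Appendix \ref{S:homology}.
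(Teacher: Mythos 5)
Your main line of attack is essentially the same as the paper's: reduce (\ref{E:defeqlast}) to $(w^1v^2-w^2v^1,u^i)=0$ and then verify by direct octonion computation. The reorganization you propose --- pairing the \emph{vector} equation $A_l(v^2)=R$ with $u^i$ before inverting $A_l$, rather than substituting the long formula for $v^2$ first --- is a genuine streamlining: it lets the $w^2(v^1,u^i)$ terms cancel immediately and reduces the proposition to showing
\[
(v^2,v^1u^i)=\frac{(v^1,u^2)(u^1,u^i)-(v^1,u^1)(u^2,u^i)}{w^1},\qquad i=1,2,
\]
which is smaller than the expression the paper expands in Appendix \ref{A:subs}. (Note the sign: your version has the numerator reversed. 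Carefully track the adjunction $(xy,z)=-(y,xz)$ for $x\in\Im\OO$; since $(A_lv^2,u^i)=w^1(v^2,u^i)-(v^2,v^1u^i)$, the identity above is what falls out.) Your toolkit is adequate: the identity $(ab,cd)=2(a,c)(b,d)-(ad,cb)$ that the paper invokes is the polarization in $l$ of the composition law (\ref{E:oidentity1}), so you are not missing a lemma. That said, the step ``each of the ten summands matches a definite term'' is asserted rather than carried out, so as written your argument is a plan, not a verification; the paper does actually grind through the expansion.

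Your ``conceptually cleaner alternative'' is the more interesting contribution, but as phrased it is circular: you assume the image of the rational map lies in the cone over $\overline{O}_{22}$, which is precisely the content of the eleven equations you are trying to establish. The correct direction is the reverse inclusion. Let $V'$ be the closed subvariety of the chart $\{w^1\neq0,\ (v^1,v^1)+(w^1)^2\neq0\}$ cut out by the nine already-solved equations; $V'$ is the graph of a morphism over a localized affine space, hence irreducible, reduced, of dimension $23$. Every point of the affine cone over $\overline{O}_{22}$ in this chart satisfies all eleven equations, so in particular the nine solved ones, which gives $C\overline{O}_{22}\cap\{\text{chart}\}\subseteq V'$. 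Since the left side is irreducible of dimension $23$ (Igusa's classification, already available at this point in the paper) and $V'$ is irreducible of dimension $23$, the inclusion is an equality of reduced schemes; the remaining two equations (\ref{E:defeqlast}) vanish on $C\overline{O}_{22}$ and therefore on $V'$, and since $\O(V')$ is a domain they vanish identically after the substitution (\ref{E:vrformulas}). Stated this way the argument is sound, avoids the octonion calculus entirely, and is arguably preferable to the paper's computation. The Macaulay2 check is a reasonable sanity test but, as you say, a safeguard rather than a proof.
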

\begin{proof} See Appendix \ref{A:subs}.
\end{proof}

Let $U$ be an open subset in $X$ defined by inequalities $(v^1,v^1)+w^1w^1\neq0, w^1\neq 0$. It contains a spinor $\lambda$ with all vanishing projective coordinates but one  $w^1=1$.
\begin{corollary} \label{L:locsmooth}
 The  scheme $(U,\O_U)$ is reduced. The algebra  $\O_U$ is a localization of a polynomial algebra
 $(S)^{-1}\mathbb{C}[w^1,w^2]\otimes \Sym [s^*_7+s^*_7+s^*_7]$ , where a multiplicative set $S$ is generated by $\{w^1,(v^1,v^1)+w^1w^1\}$.
\end{corollary}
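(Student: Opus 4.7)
The plan is to exploit Proposition~\ref{P:subs} to realize $\O_U$ as a localization of an explicit polynomial ring via elimination of the dependent coordinates $v^2,r^1,r^2$.

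First, I count equations and coordinates. The 11 defining equations (\ref{E:pure}) decompose, under (\ref{E:spinordec}), into the $\vv^7$-valued first line of (\ref{E:defeq}) (seven scalar equations) and the four scalar equations on the second line of (\ref{E:defeq}) together with (\ref{E:defeqlast}). On the open locus $w^1\neq 0$, $(v^1,v^1)+w^1w^1\neq 0$, the first line of (\ref{E:defeq}) reads $A_{l}(v^2)=F(w^1,w^2,v^1,u^1,u^2)$ with $l=w^1+v^1$; by the lemma preceding Proposition~\ref{P:subs} the operator $A_l$ is invertible on our open set, and this yields the first line of (\ref{E:vrformulas}) and so determines $v^2\in\vv^7\otimes\mathbb{C}$ uniquely as a regular function on the localized base. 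The two equations of the second line of (\ref{E:defeq}) then determine $r^1,r^2$ as the rational functions $-(v^1,u^j)/w^1$.

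Next, I set up the comparison map. Let $R=(S)^{-1}\mathbb{C}[w^1,w^2]\otimes\Sym[(s_7^*)^{\oplus3}]$ with $S=\{w^1,\,(v^1,v^1)+w^1w^1\}$, where the three copies of $s_7^*$ are the coordinates of $v^1,u^1,u^2$. The formulas (\ref{E:vrformulas}) define a ring homomorphism
\[
\psi\colon A\longrightarrow R,\qquad v^2,r^1,r^2\mapsto \text{(\ref{E:vrformulas})},
\]
well defined because Proposition~\ref{P:subs} ensures that the two remaining scalar equations (\ref{E:defeqlast}) are sent to zero. Conversely, the obvious inclusion of polynomial generators $w^1,w^2,v^1,u^1,u^2\hookrightarrow A$, composed with the localization $A\to\O_U$, sends $S$ into units and so factors through a map $\iota\colon R\to\O_U$. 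A direct check shows $\psi\circ\iota=\id_R$, and $\iota\circ\psi$ agrees with the localization map on generators (since $v^2,r^1,r^2$ are already specified by (\ref{E:vrformulas}) inside $\O_U$). Hence $\iota$ is an isomorphism $R\cong\O_U$.

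The main point needing care is scheme-theoretic (not just set-theoretic) elimination: one must verify that the nine "solving" equations generate the full ideal of the graph of $(v^1,u^1,u^2,w^i)\mapsto(v^2,r^1,r^2)$ after localization, so that $\psi$ does not overlook hidden nilpotents. This is automatic here because each of those nine equations is affine-linear in one of the eliminated variables ($v^2$, $r^1$, $r^2$ respectively), with leading coefficient a unit in $(S)^{-1}\mathbb{C}[\ldots]$ (namely $w^1$ or $(v^1,v^1)+w^1w^1$, up to sign); thus they form a regular sequence cutting out the graph as a reduced complete intersection, and Proposition~\ref{P:subs} shows that the remaining two equations lie in the ideal of that graph.

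Since $R$ is a localization of a polynomial algebra over $\mathbb{C}$, it is an integral domain, hence reduced and regular; the isomorphism $\O_U\cong R$ then gives both the reducedness and the explicit description claimed in the corollary.
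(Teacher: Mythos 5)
Your proposal is correct and follows essentially the same route as the paper: solve (\ref{E:defeq}) for $v^2,r^1,r^2$ on the localized chart using the explicit inverse of $A_l$ and division by $w^1$, and invoke Proposition~\ref{P:subs} to see that (\ref{E:defeqlast}) lie in the resulting ideal, so that $\O_U$ is the stated localized polynomial ring. One small imprecision: the seven scalar equations coming from the $\vv^7$-valued line of (\ref{E:defeq}) are not each affine-linear in a single coordinate of $v^2$ with a unit coefficient; rather they form a linear system in all seven coordinates of $v^2$ with coefficient matrix $A_l$, which is invertible over $R$ (its inverse is given explicitly), and after applying $A_l^{-1}$ you indeed obtain the triangular form your regular-sequence argument requires.
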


Scheme $X\backslash O_{15}$ is reduced because $\Spin(11)$ acts transitivity on $(X\backslash O_{15})_{red}=O_{22}$.
 In any reduced scheme smooth points form an open subset (see e.g. \cite{Humphreys} p.40).
 
 Thus $X\backslash O_{15}$ is a smooth scheme,  because of the transitive $\Spin(11)$ action.  Our conclusion is that scheme-theoretically $X\backslash O_{15}=O_{22}$
 
\begin{corollary}\label{E:canform}
By the  reason of transitivity of $\Spin(11)$-action, for any $\lambda\in O_{22}$ there is a subgroup $G_2\times \GL(2)\in St(\lambda)$ such that in the corresponding decomposition (\ref{E:spinorG2}) 
$\lambda$ has all coordinates in (\ref{E:spinorGcoordinates}) but $w^1$  equal to zero.
\end{corollary}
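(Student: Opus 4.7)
The plan is to exhibit the claim as a direct corollary of (i) the explicit spinor $\lambda_0$ identified in Corollary \ref{L:locsmooth} and (ii) the transitivity of the $\Spin(11)$-action on $O_{22}$, which was established just before the statement.

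First I would isolate the base case. Corollary \ref{L:locsmooth} produces a distinguished point $\lambda_0\in U\subset O_{22}$ whose coordinates in the decomposition (\ref{E:spinorGcoordinates}) vanish except for $w^1=1$. Crucially, the decomposition (\ref{E:spinorG2}) used to define these coordinates is not canonical: it depends on the orthogonal splitting (\ref{E:decomposition1}) $\vv^{11}\cong \vv^7+U+U'$ and on the identifications $s_7=\OO\otimes\mathbb{C}$, $s_4=W_l+W_r$. These choices single out a subgroup $G_2\times\GL(2)\subset \Spin(11)$, namely the subgroup preserving the splitting (\ref{E:decomposition1}) together with the octonion structure on $\vv^7$. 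For $\lambda_0$ one checks directly from (\ref{E:spinorGcoordinates}) that this $G_2\times\GL(2)$ fixes the one-dimensional line $\mathbb{C}\cdot e_1\subset W_l$ in the $W_l$-summand; combined with Igusa's identification of the Levi of $St(\lambda_0)$ as $G_2\times\GL(2)$, this gives $G_2\times\GL(2)\subset St(\lambda_0)$.

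Next I would transport this picture by the group action. Since $O_{22}$ is a single $\Spin(11)$-orbit, for any $\lambda\in O_{22}$ there exists $g\in \Spin(11)$ with $\lambda=g\cdot\lambda_0$. Conjugation yields
\[
g(G_2\times\GL(2))g^{-1}\ \subset\ g\,St(\lambda_0)\,g^{-1}\ =\ St(\lambda).
\]
Applying $g$ to the splitting (\ref{E:decomposition1}) and to the identification $s_7=\OO\otimes\mathbb{C}$ produces a new orthogonal splitting $\vv^{11}\cong \vv^7_{\mathrm{new}}+U_{\mathrm{new}}+U'_{\mathrm{new}}$ and a new isomorphism (\ref{E:spinorG2}) for which the subgroup $g(G_2\times\GL(2))g^{-1}$ plays the role originally played by $G_2\times\GL(2)$. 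By $\Spin(11)$-equivariance of the whole construction, $\lambda=g\lambda_0$ has, with respect to this new decomposition, all coordinates in (\ref{E:spinorGcoordinates}) equal to zero except the one labeled $w^1$, which equals $1$.

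The only substantive point that might require care is the first one, i.e.\ verifying that the specific $G_2\times\GL(2)$ compatible with (\ref{E:decomposition1}) really does lie inside $St(\lambda_0)$. This is essentially bookkeeping: the $\GL(2)$ factor acts on $W_l$ via the defining representation (so it rescales $w^1e_1$ by a scalar, which only changes the representative of the projective point), and the $G_2$ factor acts trivially on the $W_l$-summand, hence fixes $w^1e_1$. Modulo the scalar from the central $\GL(2)$ action, $\lambda_0$ is therefore fixed, which is all that is needed in the projective setting. Everything else is formal transport of structure along $g$, so I expect no further obstacle.
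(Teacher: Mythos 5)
Your overall strategy --- pick the distinguished $\lambda_0$ from Corollary~\ref{L:locsmooth} with only $w^1\neq 0$, argue $G_2\times\GL(2)\subset St(\lambda_0)$ for the $G_2\times\GL(2)$ cut out by the standard splitting, then transport by $g\in\Spin(11)$ --- is exactly what the paper has in mind, and the paper offers nothing beyond the word ``transitivity'' to justify this corollary. However, at the one place you yourself single out as ``the only substantive point that might require care,'' your bookkeeping fails. You claim the $\GL(2)$ factor acts on $W_l$ by the defining representation and then conclude that it rescales $w^1e_1$ by a scalar. These two statements contradict each other: a generic element of $\GL(2)$ acting on $W_l$ by the defining representation sends $e_1$ to a linear combination of $e_1$ and $e_2$, and therefore does \emph{not} preserve the line $\mathbb{C}e_1$. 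Taken literally, your description of the action would make the inclusion $G_2\times\GL(2)\subset St(\lambda_0)$ false, so the check does not actually go through.

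The correct picture is different. Write $\vv^4=W_l\otimes W_r$ and take in (\ref{E:decomposition1}) the isotropic planes $U=e_2\otimes W_r$ and $U^{'}=e_1\otimes W_r$ (rather than one from the family $W_l\otimes f$). Then $\GL(U)\subset\SO(4)=(\SL(2)\times\SL(2))/\mathbb{Z}_2$ has its $\SL(2)$-part equal to the \emph{right} $\SL(2)$, which acts on $W_r$ by the defining representation and acts trivially on $W_l$, hence kills $\lambda_0=e_1$; and its central $\mathbb{C}^{*}$ equal to the diagonal torus of the \emph{left} $\SL(2)$, which rescales $e_1$ and $e_2$ by mutually inverse scalars. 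It is this torus, not ``the defining representation on $W_l$,'' that produces the harmless rescaling of $w^1e_1$. Concretely, one checks $U\otimes U^{'}\cong \mathfrak t_l\oplus\sl_2^{r}$, $\Lambda^2U^{'}\cong\mathbb{C}\,e_1^2=\mathfrak n_l$, and $\vv^7\otimes U^{'}$ kills $1\otimes e_1$, recovering the Igusa stabilizer $\mathfrak g_2\oplus\mathfrak b_l\oplus\sl_2^{r}\oplus(\vv^7\otimes e_1\otimes W_r)$ with Levi $G_2\times\GL(2)$. Once the base case is repaired in this way, your transport-by-$g$ step is fine and the corollary follows.
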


\subsection{Affine charts near the singular locus}\label{S:singchart}
In this section we exhibit an affine chart in $X$ near a point $\lambda\in O_{15}$. According to \cite{Igusa} the Levi factor of $St(\lambda)$ coincides with $\gl_{5}$. Following the method  of the previous section we shall decompose $s_{11}$ into $\gl_5$ irreducibles. A spinor $\lambda$ will be accordingly broken into  the sum $\sum_i\lambda_i$ of  components. Equations (\ref{E:pure}) will be solved for some of $\lambda_i$.
\subsubsection{$\gl_5$ decomposition of $s_{11}$}\label{S:decompgl}
Our plan  is to  explicitly describe local rings near  $X_{sing}$. Our analysis follows closely  \cite{BerNek}.

Spinor representation $s_{11}$ has a description in terms of Grassmann algebra \cite{Chevalley}.  Let $A$ and $A^{'}$ be a five-dimensional isotropic subspace in $\vv^{11}$ such that $A\cap A^{'}=0$. We have a decomposition
 \begin{equation}\label{E:adecomp}
 \vv^{11}=\vv^{10}+\vv^1=A+A^{'}+\vv^1
 \end{equation}
where  $A+A^{'}=\vv^{10}$ is a 10-dimensional Euclidean space. A one dimensional  subspace  $\vv^1$ spanned by vector $u$ is orthogonal to $\vv^{10}$.  The bilinear form $(\cdot,\cdot)$ defines a pairing between $A$ and $A^{'}$. The group $\GL(5)$ acts  on $A+A^{'}$ preserving $(\cdot,\cdot)$ and trivially on $\vv^1$.

 This  defines an embedding 
 \begin{equation}\label{E:embedding}
 \GL(5)\subset \SO(10)\subset \SO(11).
 \end{equation} Spinor representation $s_{11}$ restricted on double cover $\widetilde{\GL}(5)$ is isomorphic to 
 $\Lambda(A^{'})\otimes \det^{\frac{1}{2}}$ (see \cite{Cartan}).  We shall  drop $\det^{\frac{1}{2}}$-factor in the following to simplify notations.

Spinor representation $s_{11}$ is symplectic \cite{Deligne}. Let $\Omega$ be the corresponding skew-symmetric $\Spin(11)$-invariant inner product.     
Components of   the $\Gamma$-maps are $\Omega$-adjoint to multiplication 
  $A^{'}\otimes \Lambda^iA^{'}\rightarrow \Lambda^{i+1}A^{'}$,  contraction  $A\otimes \Lambda^{i+1}A^{'}\rightarrow \Lambda^{i}A^{'}$ and the Clifford multiplication defined by $u$: $u|_{\Lambda^{i}A^{'}}=(-1)^i\id$.  

The spaces  $ \Lambda^{i}A^{'}$ and $ \Lambda^{5-i}A^{'}$ are $\Omega$-dual. The only nontrivial components  of $\Gamma$-map are:
\[\begin{split}&\Gamma^{+}:\Lambda^{i}A^{'}\otimes \Lambda^{4-i}A^{'}\rightarrow A\\
&\Gamma^{-}:\Lambda^{i}A^{'}\otimes \Lambda^{6-i}A^{'}\rightarrow A^{'}\\
&\Gamma^0:\Lambda^{i}A^{'}\otimes \Lambda^{5-i}A^{'}\rightarrow \mathbb{C}\end{split}\]

 Let $e^i$ be a basis in $A^{'}$. Elements $\{e_{i_1}\wedge\dots \wedge e_{i_k}|i_1<\cdots<i_k\}$ form a basis in $\Lambda^k(A^{'})$. A spinor $\lambda$ can be written as
 \[\lambda=\sum_{k=0}^5\lambda_i=\sum_{k=0}^5u^{i_1,\dots i_k}e_{i_1}\wedge\dots \wedge e_{i_k}\]
where $u^{i_1,\dots i_k}$ are variables defined for $i_1< \cdots <i_k$. We  set them to zero for other combinations of indices.

 Let $\iota_r:\Lambda^iA^{'}\rightarrow \Lambda^{i-1}A^{'}$ a graded-differentiation, defined as a contraction with a vector $r\in A$. In the spirit of \cite{BerNek} equations (\ref{E:pure}) can be written in terms of components $\lambda_i$:
 \begin{equation}
 \begin{split}
 &\lambda_0\lambda_5+\lambda_1\lambda_4-\lambda_2\lambda_3=0\\
 &-2\lambda_0\lambda_4-2\lambda_1\lambda_3+\lambda_2\lambda_2=0\\
 &-2\iota_r(\lambda_1)\lambda_5+2\iota_r(\lambda_2)\lambda_4+\iota_r(\lambda_3)\lambda_3=0
 \end{split}
 \end{equation}
The third equation should be  valid for every vector $r\in A$.

 The element $\lambda_0$ is a scalar. 
 We solve the first two equation  for $\lambda_4$ and $\lambda_5$.
 Upon a substitution 
 \begin{equation}\label{E:reduced0}
 \lambda_3=\tau_3+\lambda_1\lambda_2/\lambda_0, \tau_3\in \Lambda^3A^{'}
 \end{equation} the third equation becomes 
 \begin{equation}\label{E:reduced}
 \iota_r(\tau_3)\tau_3=0.
 \end{equation}

 We use $\widetilde{\GL}(5)$-invariant inner-product $\Omega$  to identify $\Lambda^3A^{'}$ with $\Lambda^2A$.
 Equation (\ref{E:reduced}) has a simple interpretation in the dual variable :
 \begin{equation}\label{E:plucker}
\tau^2=u_{ij}e^i\wedge e^j\in  \Lambda^2A \quad  \tau^2\tau^2=0
 \end{equation}
$Pol=\mathbb{C}[u_{ij}]/(u_{ij}+u_{ji}), 1\leq i, j\leq 5$ is a polynomial algebra on coefficients of skew-symmetric matrices. Equation $\tau^2\tau^2=0$ can be formulated as relations between $u_{ij}$. These  relations define an ideal in  $Pol$, which is according to (\cite{Fulton}, Section 8.4) is prime.  Relations    describe the Pl\"{u}cker embedding of $\Gr(2,5)$ into $\P^9$ (\cite{Fulton}). 
The affine cone $C\Gr(2,5)$ is singular precisely at the a apex, which is characterized by equation 
\begin{equation}\label{E:sing}
\tau_3=0
\end{equation}
Presented arguments show validity of the following proposition.
 \begin{proposition}\label{P:chart}
The algebra of regular functions on the  affine chart in $X$, defined by equation $\lambda_0\neq0$ is a reduced, irreducible affine scheme.
 The algebra of regular functions is isomorphic to $\mathbb{C}[u,u^{-1},u^{ij}]\otimes B$, $0\leq i<j\leq 5$. 
 \begin{equation}\label{E:bdef}
 B=Pol/(\Pf(M_k)), k=1,\dots,5
 \end{equation}
 $M_k$ is a sub-matrix of $(u_{ij})$ obtained by removing $k$-th row and $k$-th column. $\Pf$ is the Pfaffian. 
$B$ is the algebra of homogenous functions on $\Gr(2,5)\subset \P^{9}$.
 \end{proposition}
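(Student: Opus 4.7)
The plan is to trim the defining ideal of $X$ on the chart $D(\lambda_0) \subset X$ by using the first two components of the gravitational spinor constraint to eliminate $\lambda_4$ and $\lambda_5$, and then to recognize the residual equation as the Pl\"ucker ideal of $\Gr(2,5)$. Throughout, set $u := \lambda_0$.

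Because $\lambda_0$ is invertible on the chart, the second equation $-2\lambda_0\lambda_4 - 2\lambda_1\lambda_3 + \lambda_2\lambda_2 = 0$ in $\Lambda^4 A'$ solves uniquely for $\lambda_4$, and then $\lambda_0\lambda_5 + \lambda_1\lambda_4 - \lambda_2\lambda_3 = 0$ in $\Lambda^5 A'$ solves for $\lambda_5$; both solutions are polynomial in $u, u^{-1}, \lambda_1, \lambda_2, \lambda_3$, so $\lambda_4$ and $\lambda_5$ are eliminated from the coordinate ring. The substitution $\lambda_3 = \tau_3 + \lambda_1\lambda_2/\lambda_0$ then reduces the third constraint to $\iota_r(\tau_3)\tau_3 = 0$ for every $r \in A$ as in (\ref{E:reduced}), and this residual equation involves neither $\lambda_1$ nor $\lambda_2$. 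Hence the five components of $\lambda_1 \in A'$ together with the ten components of $\lambda_2 \in \Lambda^2 A'$ are free coordinates, which I bundle into the 15-member family $\{u^{ij}\}_{0 \leq i < j \leq 5}$ appearing in the statement.

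For the residual equation, the $\widetilde{\GL}(5)$-invariant pairing $\Omega$ identifies $\Lambda^3 A' \cong \Lambda^2 A$ and sends $\tau_3$ to $\tau^2 = u_{ij} e^i \wedge e^j$, under which $\iota_r(\tau_3)\tau_3 = 0$ for all $r \in A$ dualizes to $\tau^2 \wedge \tau^2 = 0$ in $\Lambda^4 A$. Expanding $\tau^2 \wedge \tau^2$ in the natural basis of $\Lambda^4 A$ yields five coefficients, indexed by four-element subsets of $\{1,\dots,5\}$; each coefficient is exactly the Pfaffian $\Pf(M_k)$ of the $4 \times 4$ skew-symmetric submatrix obtained from $(u_{ij})$ by deleting row $k$ and column $k$. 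These are the classical Pl\"ucker relations for $\Gr(2,5) \hookrightarrow \P^9$, and the cited result of Fulton guarantees that the ideal they generate is prime. Hence $B = Pol/(\Pf(M_k))$ is the integral domain of homogeneous functions on $\Gr(2,5)$, and the coordinate ring on $D(\lambda_0)$ factors as $\mathbb{C}[u, u^{-1}] \otimes \mathbb{C}[u^{ij}] \otimes B$, a tensor product of integral $\mathbb{C}$-algebras, hence reduced and irreducible.

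The main obstacle I expect is the duality step: verifying cleanly that \emph{``$\iota_r(\tau_3)\tau_3 = 0$ for every $r \in A$''} is equivalent to $\tau^2 \wedge \tau^2 = 0$ and that the coefficients of the latter are exactly the five principal sub-Pfaffians (not a rescaling or rearrangement). This amounts to a finite linear-algebra check, but it requires careful bookkeeping of signs and of the Hodge-style identification between $\Lambda^3 A'$ and $\Lambda^2 A$ induced by $\Omega$.
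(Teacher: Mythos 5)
Your proposal follows the same route as the paper: eliminate $\lambda_4,\lambda_5$ with the first two components of the constraint, perform the substitution $\lambda_3=\tau_3+\lambda_1\lambda_2/\lambda_0$ to isolate $\iota_r(\tau_3)\tau_3=0$, dualize via $\Omega$ to the Pl\"ucker relations $\tau^2\wedge\tau^2=0$ for $\Gr(2,5)$, and invoke primeness of the Pl\"ucker ideal to conclude reducedness and irreducibility. The two computational points you flag (that the residual equation is purely in $\tau_3$, and that the coefficients of $\tau^2\wedge\tau^2$ are precisely the principal sub-Pfaffians) are likewise asserted without detail in the paper, so your argument is at the same level of rigor and is correct.
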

 We formulate  decomposition of   the adjoint  $\SO(11)$-representation under $\GL(5)\subset \SO(11)$ as a $\mathbb{Z}$ grading. The graded index is the eigenvalue of a central element $c\in \gl_5$.
  \begin{equation}\label{E:decomp5}\begin{split}
  &\Ad(\SO(11))_2=\Lambda^2 A^{'}\\
 &\Ad(\SO(11))_1=A^{'}\\
 &\Ad(\SO(11))_0=\Ad(\GL(5))\\
 &\Ad(\SO(11))_{-1}=A\\
 &\Ad(\SO(11))_{-2}=\Lambda^2 A
 \end{split}\end{equation}
 The parabolic Lie subalgebra $\p_5=\gl_5\ltimes \uu''$ ($\uu''=\Ad(\SO(11))_{-1}+\Ad(\SO(11))_{-2}$ ) exponentiates to an algebraic Lie group 
\begin{equation}\label{E:parabolic}
P_5\subset \Spin(11)
\end{equation} 

Elements of $A^{'}$ and $\Lambda^2 A^{'}$ act in $s_{11}$ by multiplication, elements of $A$ and $\Lambda^2A$ by contraction. In particular the unit vector $1\in \Lambda(A^{'})$ is invariant with respect to $\p_5$. This makes it a lowest vector.

 The $\Spin(11)$-orbit of $\lambda=1$ in $\P(s_{11})$ coincides  \cite{Cartan} with:

\begin{equation}\label{E:isoorbits}
O_{15}\cong \Spin(11)/P_5\cong \OGr(5,11)
\end{equation}

 By transitivity of $\Spin(11)$ action on $O_{15}$ we conclude that  points $\lambda\in O_{15}\subset X$ have isomorphic  neighborhoods. 
 The space $X$ can be covered by charts described in Sections \ref{S:g2solution}, \ref{S:decompgl}. This leads to the following corollary
 \begin{corollary}\label{C:reduced}
The scheme $X$ is reduced. 
\end{corollary}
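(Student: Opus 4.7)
The plan is to conclude reducedness of $X$ from the local reducedness established in the two preceding affine-chart analyses, using the fact that reducedness is a local (in fact, stalk-local) property.

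First I would observe that set-theoretically $X = O_{15} \cup O_{22}$, so every closed point of $X$ lies in one of the two $\Spin(11)$-orbits. For a point $\lambda \in O_{22}$, Corollary \ref{L:locsmooth} exhibits an explicit open neighborhood $U \subset X$ whose coordinate ring is a localization of a polynomial algebra; in particular $\O_U$ is a domain and hence reduced. For a point $\lambda \in O_{15}$, Proposition \ref{P:chart} produces an affine open neighborhood (the locus $\lambda_0 \neq 0$) whose coordinate ring is $\mathbb{C}[u,u^{-1},u^{ij}] \otimes B$, where $B$ is the homogeneous coordinate ring of the Plücker-embedded Grassmannian $\Gr(2,5)$; this is a reduced (in fact integral) ring since the Plücker ideal is prime.

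Next I would use the transitivity of the $\Spin(11)$-action on each orbit to translate these two local models to every point of $X$. Since $\Spin(11)$ acts by automorphisms of the scheme $X$, any $\Spin(11)$-translate of one of the two constructed charts is again a reduced affine open subscheme of $X$, and together these translates cover all of $X$. In particular every point of $X$ admits an open affine neighborhood with reduced coordinate ring.

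Finally I would invoke the standard fact that a scheme is reduced if and only if it admits an open cover by reduced affine open subschemes (equivalently, if and only if every local ring $\O_{X,x}$ is reduced). This yields the corollary. There is no real obstacle here: the substantive mathematical work has already been carried out in Section \ref{S:g2solution} (which required the octonionic computation verified in Appendix \ref{A:subs}) and in the identification of the local ring near $O_{15}$ with the Plücker algebra; the present corollary is just the assembly step.
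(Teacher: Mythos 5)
Your proof is correct and follows essentially the same route as the paper: it combines the two explicit affine-chart computations (Corollary \ref{L:locsmooth} near $O_{22}$ and Proposition \ref{P:chart} near $O_{15}$) with $\Spin(11)$-transitivity on each orbit and the local nature of reducedness. Nothing to add.
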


\begin{proposition}\label{P:sing}
The subscheme of singular points $X_{sing}$ is reduced and irreducible. It coincides with the orbit $O_{15}$. 
\end{proposition}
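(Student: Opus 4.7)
The plan is to work in the affine chart of Proposition~\ref{P:chart} near a point of $O_{15}$ and identify the ideal generated by $v^{ij}$ with the ideal cutting out the apex of the cone $B = C\Gr(2,5)$; combined with $\Spin(11)$-equivariance, this yields reducedness, irreducibility, and the identification with $O_{15}$ simultaneously.

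I would first establish set-theoretically that $\{v^{ij} = 0\} \cap X = O_{15}$. The locus $\{v^{ij} = 0\}$ is closed and $\Spin(11)$-invariant, hence a union of orbits inside $X_{red} = O_{15} \cup O_{22}$. The inclusion $O_{15} \subseteq \{v^{ij}=0\}$ is immediate on the highest-weight pure spinor $\lambda_0 = 1 \in \Lambda^0 A'$: the tensor $\lambda_0 \otimes \lambda_0$ is a highest-weight vector in $\Sym^2 s_{11}$ and so has image only in the $\Lambda^5 \vv^{11}$ summand of the decomposition~\eqref{E:symdecomposition}, in particular zero projection onto $\Lambda^2 \vv^{11}$; by equivariance this extends to the whole orbit $O_{15}$. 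To exclude $O_{22}$, note that the morphism \eqref{E:projectionmain} is by construction a defining morphism of the fibration $\wX \to \OGr(2,11)$, hence nontrivial on $X$; so $\{v^{ij}=0\}$ cannot equal all of $X$.

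Next I would analyze $(v^{ij})$ scheme-theoretically in the chart of Proposition~\ref{P:chart}, where the coordinate ring is $\mathbb{C}[u,u^{-1},u^{ij}] \otimes B$. Decomposing the intertwiner $\Gamma^{ij}: \Sym^2 s_{11} \to \Lambda^2 \vv^{11}$ by the $c$-grading of~\eqref{E:decomp5}, the weight $-2$ component takes values in $\Lambda^2 A \cong (\Lambda^3 A')^\vee$ and, by Schur, is a linear combination $c_1 \lambda_0 \lambda_3 + c_2 \lambda_1 \wedge \lambda_2$ of the only two $\gl_5$-equivariant maps available. Requiring this to vanish on a generic pure spinor $\exp(a' + \omega)\cdot 1$ (where $\lambda_3 = a' \wedge \omega$, $\lambda_1 = a'$, $\lambda_2 = \omega$) forces $c_1 + c_2 = 0$. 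After the substitution $\lambda_3 = \tau_3 + \lambda_1\lambda_2/\lambda_0$ of~\eqref{E:reduced0}, the weight $-2$ piece collapses to $c_1 \cdot u \cdot \tau_3$, whose dual entries are precisely the Plücker coordinates $u_{ij}$ of $B$. Since $u$ is invertible, this single component already gives $(u_{ij}) \subseteq (v^{ij})$. The reverse containment $(v^{ij}) \subseteq (u_{ij})$ follows from Step~1 and the Nullstellensatz, since $(u_{ij})$ is the maximal ideal of a reduced point and hence already radical. So $(v^{ij}) = (u_{ij})$, and the quotient in the chart is $\mathbb{C}[u,u^{-1},u^{ij}] \otimes B/(u_{ij}) \cong \mathbb{C}[u,u^{-1},u^{ij}]$, a regular integral domain. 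Translating this chart by $\Spin(11)$ to cover $O_{15}$ globalizes the conclusion: $X_{sing}$ is reduced and irreducible with underlying variety $O_{15}$.

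The hard part will be the representation-theoretic step: pinning down the $c$-weight $-2$ component of $\Gamma^{ij}\lambda\lambda$ and verifying that the two coefficients satisfy $c_1 + c_2 = 0$, so that after the elimination substitution the expression collapses cleanly to a nonzero multiple of $u \cdot \tau_3$. The set-theoretic step, the use of Nullstellensatz to get the opposite ideal inclusion, and the gluing via $\Spin(11)$-equivariance are routine once this core calculation is in place.
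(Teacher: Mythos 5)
Your overall strategy — work in the affine chart of Proposition~\ref{P:chart}, identify the ideal $(v^{ij})$ with the ideal of the apex, then globalize by $\Spin(11)$-equivariance and Igusa's classification — is the same as the paper's, but your version is more careful: the paper simply asserts that in the chart $X_{sing}$ is cut out by $\tau_3=0$ (equation~(\ref{E:sing})) and does not explain why the scheme-theoretic ideal $(v^{ij})$ agrees with the radical ideal $(u_{ij})$, which is exactly the content of the reducedness claim. Your representation-theoretic decomposition of the $\Lambda^2 A$-component of $\Gamma^{ij}\lambda\lambda$ and the appeal to the Nullstellensatz for the reverse inclusion supply the missing justification. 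Two small points you should tighten. First, you need $c_1\neq 0$ for the inclusion $(u_{ij})\subseteq (v^{ij})$; this does hold, because $c_1+c_2=0$ together with $c_1=c_2=0$ would make the entire $\Lambda^2 A$-component of $\Gamma^{ij}$ vanish, contradicting surjectivity of the nonzero intertwiner $\Sym^2 s_{11}\to\Lambda^2\vv^{11}$ onto an irreducible module — but you should say so explicitly rather than just writing ``collapses cleanly to a nonzero multiple.'' Second, excluding $O_{22}$ by citing the fibration $\wX\to\OGr(2,11)$ is a forward reference (that fibration is only built in Section~\ref{S:prop}); it is cleaner to observe directly that the $v^{ij}$ span the $\Lambda^2\vv^{11}$ summand of $A_2\cong\Lambda^2\vv^{11}\oplus\Lambda^5\vv^{11}$, hence are not all zero in $A$, so $V(v^{ij})\neq X$.
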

\begin{proof}
In a local chart, described in Proposition \ref{P:chart} $X_{sing}$ is described by equation 
 (\ref{E:sing}). We can express  homogeneous coordinates $\lambda_3,\lambda_4,\lambda_5$ as functions of $\lambda_0,\lambda_1$ and $\lambda_2$:
$\lambda_3=\frac{\lambda_1\lambda_2}{\lambda_0},\lambda_4=\frac{\lambda^2_2}{\lambda_0},\lambda_5=\frac{\lambda_1\lambda^2_2}{2\lambda^2_0} $.
 
We immediately see that the local ring is a subalgebra in the field of fractions $\mathbb{C}( u,u^{i},u^{ij}) $ and has no zero divisors.

The proper closed (see \cite{Hartshorne}) subscheme $X_{sing}$ is invariant under $\Spin(11)$ action.

 By Igusa  classification it must coincide with $O_{15}$.
\end{proof}

\section{Properties of the blowup of $X$}\label{S:prop}
Presently  we shall establish some of the elementary properties of the blowup of $X$.

 We start with a reminder of some classical facts. Let $V_{\mu}$ be an irreducible representation of a semisimple group $G$ of a highest (or a lowest) weight $\mu$. Suppose the highest weight vector space $v_{\mu}$ is invariant with respect to parabolic subgroup $P$.   The formula $g\rightarrow gv_{\mu}$ defines a projective embedding  $G/P\rightarrow \P(V_{\mu})$. A classical theorem of Kostant (see e.g. \cite{LT}, Theorem 1.1) asserts that algebra $R=R_{G/P}=\bigoplus_{i\geq 0} H^0(G/P,\O(i))$  of homogeneous  functions is quadratic, i.e. it is generated by its first graded component, and the ideal of relations is generated by elements of degree two. The algebra contains no zero divisors.  Borel-Weil-Bott theory (see e.g. \cite{Calib}) tells us that the spaces of homogeneous elements $ H(G/P,\O(i))$ are  irreducible representations of $G$.
 
 The closed $\Spin(11)$-orbit $O_{15}\cong \Spin(11)/P_5$ in $\P^{31}$ is an orbit of a lowest vector. 
  The theory described in the previous paragraph allows us to unambiguously  determine relations in $R_{O_{15}}$. We immediately conclude that  $H^0(O_{15},\O(2))$ is isomorphic to  $\Lambda^5\vv^{11}$ (this is a representation from decomposition \ref{E:symdecomposition}); $\Gamma^{ij}_{\alpha\beta}\lambda^{\alpha}\lambda^{\beta}\in \Lambda^2\vv^{11}$ must be  equal to zero on $O_{15}$. 
  Let $CY$ be the affine cone over projective variety $Y$. We denote $p_{aff}$ the map (\ref{E:projectionmain}) of affine cone $CX$ to $\Lambda^2 \vv^{11}$.
  \begin{lemma}\label{L:nontriv}
  For every  $\lambda\in CO_{22}$ $p_{aff}(\lambda)\neq 0$
  \end{lemma}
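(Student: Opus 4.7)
The plan is to identify the vanishing locus of $p_{aff}$ on the affine cone $CX$ with the affine cone over the singular subvariety $X_{sing}$, and then to invoke the orbit-theoretic description of $X_{sing}$ that has already been established.

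First I would unpack the definitions. By the formula (\ref{E:projectionmain}), $p_{aff}(\lambda) = \Gamma^{i_1 i_2}_{\alpha\beta} \lambda^\alpha \lambda^\beta\, e_{i_1}\wedge e_{i_2}$, so the condition $p_{aff}(\lambda) = 0$ is equivalent to the vanishing of all the quadratic forms $v^{ij}(\lambda)$ appearing in equation (\ref{E:matrix}). For $\lambda \in CX$ (so that the cubic relations $v^i(\lambda) = 0$ already hold), this is precisely the defining condition for $\lambda$ to lie in the affine cone $CX_{sing}$ over the singular subvariety $X_{sing}$. Thus
\[
\{\lambda \in CX : p_{aff}(\lambda) = 0\} = CX_{sing}.
\]

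Next I would apply Proposition \ref{P:sing}, which identifies $X_{sing}$ with the closed orbit $O_{15}$ as reduced schemes. Consequently $CX_{sing} = CO_{15}$. Since $O_{15}$ and $O_{22}$ are disjoint $\Spin(11)$-orbits in the projective space $\P(s_{11})$ by Igusa's classification, their affine cones in $s_{11}$ intersect only in the apex $\{0\}$. In particular, any non-zero $\lambda \in CO_{22}$ fails to lie in $CO_{15} = CX_{sing}$, so $p_{aff}(\lambda) \neq 0$.

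I do not anticipate a genuine obstacle: the proof is essentially an assembly of definitions together with the orbit-theoretic identification of $X_{sing}$, which is the content of the already-established Proposition \ref{P:sing}. If one wished to avoid even this appeal, one could alternatively use $\Spin(11)$-equivariance of $p_{aff}$ and verify non-vanishing on a single representative of $O_{22}$ using the normal form of Corollary \ref{E:canform} (for instance on the spinor with $w^1 = 1$ and all other coordinates zero, where a short octonionic computation via the formulas (\ref{E:goct1})--(\ref{E:goct2}) shows that at least one component $v^{ij}$ is non-zero). The orbit-closure argument given above, however, is cleaner and requires no additional calculation.
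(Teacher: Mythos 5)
Your proof is correct, but it takes a genuinely different route from the paper's. The paper argues by contradiction using equivariance and reducedness: since $\Spin(11)$ acts transitively on $CO_{22}\setminus\{0\}$, if $p_{aff}$ vanished at one nonzero point it would vanish on all of $O_{22}$; since $O_{22}$ is dense in the reduced scheme $X$ (Corollary \ref{C:reduced}), the quadrics $v^{ij}$ would then be zero in $A$, contradicting the fact that the ideal of $A$ is generated only by the $v^i$ (the $\Lambda^2\vv^{11}$-component of $\Sym^2 s_{11}$ is not in the ideal). Your route instead identifies the vanishing locus $\{p_{aff}=0\}\cap CX$ directly with $CX_{sing}$ by definition, invokes Proposition \ref{P:sing} to identify $X_{sing}$ with $O_{15}$, and concludes by disjointness of the two Igusa orbits. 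Both arguments are sound and appeal to results already established by this point in the paper; yours is arguably more geometric and avoids the reduced-scheme argument entirely, at the cost of invoking the heavier Proposition \ref{P:sing} rather than just Corollary \ref{C:reduced}. One small slip: you describe the relations $v^i(\lambda)=0$ as ``cubic,'' but $v^i = \Gamma^i_{\alpha\beta}\lambda^\alpha\lambda^\beta$ is quadratic in $\lambda$ (just as $v^{ij}$ is); this does not affect the argument.
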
 
  \begin{proof}The group $\Spin(11)$ acts transitively on $CO_{22}\backslash 0$. In the case of failure of the statement $p_{aff}(\lambda)$ would be zero for all $\lambda\in CO_{22}\backslash 0$.
The scheme  $O_{22}$ is reduced (Corollary \ref{C:reduced}), therefore $\Gamma^{ij}_{\alpha\beta}\lambda^{\alpha}\lambda^{\beta}=0$ in algebra $A$.  This contradicts with the definition of  ideal  (\ref{E:Adef}).
  \end{proof}
  
\begin{corollary}\label{C:indeterm}
The  map $p$ (\ref{E:projectionmain}) is well defined on $O_{22}$. Its singular locus is a  subscheme   $O_{15}$.
\end{corollary}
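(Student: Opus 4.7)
The plan is to read the statement as asserting that the rational map $p\colon X \dashrightarrow \P(\Lambda^2 \vv^{11})$ cut out by the quadrics $\Gamma^{ij}_{\alpha\beta}\lambda^\alpha\lambda^\beta$ has indeterminacy (``base'') locus exactly $O_{15}$, and is a morphism on the complement $O_{22}$. I would split the proof into the two opposite inclusions.

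First, I would show $p$ is a morphism on $O_{22}$. The map $p$ is the projectivization of the linear map $p_{\mathrm{aff}}\colon CX \to \Lambda^2 \vv^{11}$ appearing in (\ref{E:projectionmain}), so it is automatically a morphism on the open locus where $p_{\mathrm{aff}}$ is non-vanishing. Lemma~\ref{L:nontriv} is precisely the statement that $p_{\mathrm{aff}}(\lambda)\neq 0$ for every $\lambda\in CO_{22}\setminus 0$, so $p$ descends to a regular map on $O_{22}$.

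Second, I would identify the base locus with $O_{15}$. Here the key input is the Borel-Weil-Bott/Kostant analysis recalled at the start of Section~\ref{S:prop}: applied to the closed orbit $O_{15}\cong \Spin(11)/P_5$ embedded by the lowest weight vector, it gives $H^0(O_{15},\O(2))\cong \Lambda^5\vv^{11}$ as a $\Spin(11)$-representation. Comparing with the decomposition (\ref{E:symdecomposition}) of $\Sym^2 s_{11}$ into $\vv^{11} + \Lambda^2\vv^{11} + \Lambda^5\vv^{11}$, the $\vv^{11}$ and $\Lambda^2\vv^{11}$ summands must restrict to zero on $O_{15}$. In particular every component $\Gamma^{ij}_{\alpha\beta}\lambda^\alpha\lambda^\beta$ vanishes on $CO_{15}$, so $p_{\mathrm{aff}}|_{CO_{15}}\equiv 0$ and $p$ has indeterminacy at every point of $O_{15}$.

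Combining the two steps with the Igusa decomposition $X = O_{22}\sqcup O_{15}$ gives the corollary set-theoretically. For the scheme-theoretic statement I would invoke Proposition~\ref{P:sing}, which already identifies the singular subscheme of $X$ with the reduced orbit $O_{15}$. There is no serious obstacle: both halves are essentially immediate consequences of results established earlier in the section; the only care needed is to verify that ``singular locus'' is interpreted as the base scheme of the linear system $\{\Gamma^{ij}_{\alpha\beta}\lambda^\alpha\lambda^\beta\}$ rather than as the singular locus of $X$ itself, at which point the two notions coincide by Proposition~\ref{P:sing}.
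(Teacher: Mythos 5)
Your proof is correct and follows the same route as the paper: the paper states this corollary without an explicit proof, relying on Lemma~\ref{L:nontriv} for non-vanishing on $O_{22}$ and on the Kostant/Borel--Weil--Bott discussion immediately preceding it (that $H^0(O_{15},\O(2))\cong\Lambda^5\vv^{11}$, so the $\Lambda^2\vv^{11}$-valued quadrics $v^{ij}$ vanish on $O_{15}$), which are exactly the two halves you reconstruct. Your closing remark about interpreting ``singular locus'' as the base scheme of the linear system and appealing to Proposition~\ref{P:sing} is consistent with how the paper later uses the corollary.
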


 \begin{definition}\label{D:dense}We define $\widetilde{X}$ to be the closure of the graph $p$ in the product $X\times \P^{54}\subset \P^{31}\times \P^{54}$. For  blowup construction see the end of this  section.    \end{definition}

In the remaining part of this section we shall exhibit a structure of projective fiber-bundle on the blowup $\wX$ over  $\OGr(2,11)$ . 

We start with a construction of a large number of projective subspaces in $X$.
\begin{proposition}\label{E:vanish}
An element $\lambda\in s_{11}$ of the form $v \otimes e \in s_7\otimes W_l$  or $u \otimes f \in s_7\otimes W_r$ (see decomposition \ref{E:spinordec}) satisfies $\Gamma(\lambda,\lambda)=0$.
\end{proposition}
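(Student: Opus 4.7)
The proof should be a short direct computation using the explicit form of the $\Gamma$-map given in (\ref{E:gamma1}), (\ref{E:goct1}), (\ref{E:goct2}). The decomposition of $\Gamma$ has three pieces corresponding to the three summands
\[
\Sym^2[s_7\otimes W_l]\;\oplus\;(s_7\otimes W_l)\otimes(s_7\otimes W_r)\;\oplus\;\Sym^2[s_7\otimes W_r],
\]
so the strategy is to treat each piece in turn for a $\lambda$ concentrated in one factor.

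Suppose first that $\lambda=v\otimes e\in s_7\otimes W_l$. The cross component (\ref{E:goct2}) lives on $(s_7\otimes W_l)\otimes(s_7\otimes W_r)$, and since $\lambda$ has no $s_7\otimes W_r$ part the projection of $\lambda\otimes\lambda$ onto this summand vanishes; so this piece contributes zero to $\Gamma(\lambda,\lambda)$. The $\Sym^2[s_7\otimes W_r]$ piece vanishes for the same reason. The only potentially nontrivial contribution comes from (\ref{E:goct1}), which gives
\[
\Gamma(v\otimes e,\,v\otimes e)=\gamma(v,v)\,\omega_l(e\wedge e).
\]
Here I would invoke the two separate skew-symmetries already recorded earlier in the excerpt: $\gamma$ is skew-symmetric on $\Lambda^2 s_7$, so $\gamma(v,v)=0$; and $\omega_l$ is the symplectic form on $W_l$, so $\omega_l(e\wedge e)=0$. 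Either factor alone already kills the term. The case $\lambda=u\otimes f\in s_7\otimes W_r$ is identical with $W_r,\omega_r$ in place of $W_l,\omega_l$.

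There is essentially no obstacle: the content is that both $\gamma$ and $\omega_{l,r}$ are skew, together with the fact that $\lambda\otimes\lambda$ lies in the appropriate $\Sym^2$ summand and has no cross component. The only thing to be a little careful about is making sure one reads (\ref{E:goct1}) symmetrically, i.e.\ that the formula displayed for a bilinear generator is applied to $(v\otimes e)\otimes(v\otimes e)$ before symmetrization; but since both scalar factors $\gamma(v,v)$ and $\omega_l(e,e)$ vanish independently, this is a non-issue and the proposition follows in one line each for the two cases.
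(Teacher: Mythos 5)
Your proof is correct, and it is essentially the same computation the paper relies on: the paper's proof is the one-line pointer ``Follows from formula (\ref{E:defeq}),'' and (\ref{E:defeq}) is nothing but the $\Gamma$-map from (\ref{E:gamma1}), (\ref{E:goct1}), (\ref{E:goct2}) written out in the coordinates of (\ref{E:spinorGcoordinates}). Your version is arguably a bit cleaner in that it stays at the level of (\ref{E:goct1})--(\ref{E:goct2}) and observes directly that $\gamma(v,v)=0$ (skew-symmetry of $\gamma$) and $\omega_i(e,e)=0$ (skew-symmetry of the symplectic form), while the cross and opposite-chirality pieces vanish because $\lambda$ has no $W_r$ (resp. $W_l$) component; the paper reaches the same conclusion by substituting into the fully written-out component equations.
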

\begin{proof} Follows from formula (\ref{E:defeq}).

\end{proof}

The elements $u\otimes f$ with a fixed $f$ generate  $\P^{7}\subset X$.
One of the  results of \cite{Igusa} is that $\P(s_7)$ is a union of two $\Spin(7)$ orbits. One closed orbit coincides with the quadric $Q$, defined by the $\Spin(7)$-invariant inner product. The other orbit is \begin{equation}\label{E:openorb7}
O_7=\P^{7}\backslash Q.
\end{equation}  The group $St(v), v\in O_7$ is isomorphic to $G_2$. From this we  deduce that as long as $(v,v)\neq0$ we can always find $g\in \Spin(7)$ and transform $v$ into $gv\in \mathbb{C}\subset \mathbb{C}+\mathbb{C}\otimes \Im\OO$

Let $P_2\subset \Spin(11)$ be a connected subgroup, which Lie algebra 
\begin{equation}\label{E:pstab}
\p_2=\Ad(\mathfrak{so}_{11})_0+\Ad(\mathfrak{so}_{11})_{-1}+\Ad(\mathfrak{so}_{11})_{-2}
\end{equation} in the notations of decomposition (\ref{E:decomp2}). The Levi factor of $\p_2$ is  $\so_7\times \gl_2$; the radical $\rad$ is the same as (\ref{E:radical}).

 \begin{proposition}\label{P:invariance}
Subvariety $\P^{7}\subset X$ is invariant with respect to  $P_2$. The action of the unipotent radical in $P_2$ on $\P^7$ is trivial.
\end{proposition}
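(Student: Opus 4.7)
The plan is to realize $\mathbf{P}^{7}=\mathbf{P}(V)$ with $V:=s_{7}\otimes\langle f\rangle\subset s_{7}\otimes W_{r}\subset s_{11}$, and then to verify separately that $\p_{2}\cdot V\subset V$ and that $\rad\cdot V=0$.

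First I would pin down the point of $\OGr(2,11)$ parametrizing the fiber and match it with $P_{2}$. By $\Spin(7)\times\Spin(4)$-equivariance and the decomposition $\Sym^{2}(s_{7}\otimes W_{r})=(\mathbb{C}+\Lambda^{3}\vv^{7})\otimes\Sym^{2}W_{r}$, only the summand $\mathbb{C}\otimes\Sym^{2}W_{r}$ projects nontrivially to $\Lambda^{2}\vv^{11}$, so $\Gamma^{ij}(u\otimes f,u\otimes f)\in\mathbb{C}\cdot f^{2}\subset\Sym^{2}W_{r}\subset\Lambda^{2}\vv^{4}$, which represents the isotropic two-plane $W_{l}\otimes\langle f\rangle$. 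I would accordingly take $U:=W_{l}\otimes\langle e\rangle$ and $U':=W_{l}\otimes\langle f\rangle$ in (\ref{E:decomposition1}) with $\omega_{r}(e,f)=1$, so that $P_{2}$ is the $\Spin(11)$-stabilizer of $U'$.

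With this setup, the Levi $\Ad_{0}=\so_{7}+\gl_{2}$ visibly stabilizes $V$. The component $\gl_{2}=U\otimes U'=\gl(U)$ splits as $\sl_{2,l}+\mathfrak{t}_{r}$: the $\sl_{2,l}$ summand is the $\SL(2)_{l}$-action on $W_{l}$, which acts as zero on $V$; the torus $\mathfrak{t}_{r}$ is the Cartan of $\SL(2)_{r}$ preserving both $\langle e\rangle$ and $\langle f\rangle$, so it acts on $\langle f\rangle$ by a single weight; and the factor $\so_{7}$ acts on the $s_{7}$ tensor factor. Hence $\Ad_{0}\cdot V\subset V$.

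For the unipotent radical $\rad=(\vv^{7}\otimes U')+\Lambda^{2}U'$ the key identity is $u'\cdot f=0$ for every $u'\in U'$: writing $u'=a\otimes f$ with $a\in W_{l}$, the Clifford map $\vv^{4}\otimes W_{r}\to W_{l}$ is induced by $1\otimes\omega_{r}$, so $u'\cdot f=a\cdot\omega_{r}(f,f)=0$. For $v\in\vv^{7}$ and $u'\in U'\subset\vv^{4}$ orthogonal in $\vv^{11}$, the bivector $v\wedge u'\in\Ad_{-1}$ acts on $s_{11}$ as $\pm vu'$, hence on $u\otimes f$ as $\pm(v\cdot u)\otimes(u'\cdot f)=0$; the analogous computation with $u'_{1}u'_{2}$ shows $\Lambda^{2}U'$ annihilates $V$. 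Therefore $\rad\cdot V=0$ and $\exp(\rad)$ fixes $V$ pointwise. Combined with the Levi analysis, $P_{2}$ preserves $V$ and hence $\mathbf{P}^{7}$, with its unipotent radical acting as the identity. The main technical subtlety I would anticipate is chirality sign bookkeeping in the graded isomorphism $Cl(\vv^{11})\cong Cl(\vv^{7})\widehat{\otimes}Cl(\vv^{4})$, but these signs are immaterial to the vanishing, which rests only on $\omega_{r}(f,f)=0$.
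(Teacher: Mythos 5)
Your argument is correct and reaches the same conclusion, but by a more hands-on route than the paper. The paper's proof is purely a weight bookkeeping argument: it introduces the $\mathbb{Z}$-grading $s^{1}=s_7\otimes f^{+}$, $s^{0}=s_7\otimes W_l$, $s^{-1}=s_7\otimes f^{-}$ on $s_{11}$ by eigenvalues of the central element $c\in\gl_2$, observes that this grading is compatible with the one on $\Ad(\so_{11})$ in (\ref{E:decomp2}), notes that $\rad$ sits in degrees $-1$ and $-2$ while $s^{-1}$ is the bottom graded piece, and concludes at once that $\rad\cdot s^{-1}=0$ (weights $-2$ and $-3$ do not occur). You instead verify the same vanishing by an explicit Clifford-algebra computation: for $u'=a\otimes f\in U'=W_l\otimes\langle f\rangle$ you compute $u'\cdot f=a\cdot\omega_r(f,f)=0$ and propagate this through the tensor decomposition of $Cl(\vv^{11})$. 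The two are really the same statement viewed through different lenses — your $\omega_r(f,f)=0$ is precisely the reason $U'$ lowers the $c$-weight by one, hence the reason $\rad$ sits in negative degree — but the paper's grading argument sidesteps the chirality-sign bookkeeping you flag as the delicate point, since it never needs the actual Clifford product, only weights. Your version does have the merit of explicitly pinning down the isotropic two-plane $W_l\otimes\langle f\rangle$ and checking the Levi action on each factor of $\gl_2=\sl_{2,l}+\mathfrak{t}_r$, neither of which the paper spells out (it records the choice of $\gl_2\subset\so_4$ only in the last sentence of its proof).
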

\begin{proof}
There is a $\mathbb{Z}$-grading on $s_{11}$, that can be obtained in full analogy with   (\ref{E:decomp2}) as a  grading by eigenspaces of the central element $c\in \gl_2$.

\begin{equation}\label{E:grading}
\begin{split}
&s^{1}=s_7\otimes f^+\\
& s^0=s_7\otimes W_l\\
&s^{-1}=s_7\otimes f^-
\end{split}
\end{equation}
where $f^{\pm}$ is a $c$ eigenbasis in $W_r$

Element $s_7\otimes f^-$ has $c$-weight $-1$. Both gradings (\ref{E:decomp2}) and (\ref{E:grading}) are  compatible. Homogenous elements in  $\rad$ have  weights $-1$ and $-2$. Therefore $\rad$ acts on $s_7\otimes f^-$ by zero. We conclude that subspace  $\P(s_7\otimes f^{-})\subset X$ is invariant under $P_2$.
Let $W_r$ be the same as in decomposition (\ref{E:spinordec}).  A choice of a basis $f^{\pm}\in W_r$ uniquely determines a subalgebra $\gl_2\subset\so_4$ in decomposition (\ref{E:decomp2}), for which $f^{\pm}$ are the weight vectors.
\end{proof}

\begin{proposition}\label{P:map1}
The image of the map $p:O_{22}\rightarrow \P^{54}$  (\ref{E:projectionmain}) contains in $\OGr(2,11)\subset \P^{54}$.
\end{proposition}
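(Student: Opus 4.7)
The plan is to combine the $\Spin(11)$-equivariance of $p$ with the triviality of the unipotent radical of $P_2$ on a suitable projective subspace meeting $O_{22}$. The map $p$ is $\Spin(11)$-equivariant because the $\Gamma$-intertwiner $\Sym^2 s_{11}\to \Lambda^2\vv^{11}$ is; the orbit $O_{22}$ is a single $\Spin(11)$-orbit, and $\OGr(2,11)\subset \P^{54}$ is $\Spin(11)$-stable, so it suffices to verify $p(\lambda_0)\in \OGr(2,11)$ for one $\lambda_0\in O_{22}$.

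For the choice of base point I would take any $\lambda_0\in \P(s_7\otimes f^-)\cap O_{22}$. Nonemptiness follows since $\P(s_7\otimes f^-)\subset X$ by Proposition \ref{E:vanish} and this seven-dimensional projective subspace cannot lie entirely inside the fifteen-dimensional closed orbit $O_{15}$ of Cartan pure spinors (a generic element of $s_7\otimes f^-$ is easily checked not to be pure). The proof of Proposition \ref{P:invariance} in fact shows that $\rad$ annihilates $s_7\otimes f^-=s^{-1}$, so the unipotent radical $N=\exp(\rad)\subset P_2$ fixes $\lambda_0\in s_{11}$ pointwise, and by equivariance $p(\lambda_0)\in \Lambda^2\vv^{11}$ is $N$-invariant under the adjoint action.

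Next I would determine $(\Lambda^2\vv^{11})^N$ using the grading (\ref{E:decomp2}). Because $\rad$ shifts weights by $-1$ or $-2$, each graded component of an invariant vector must be $\rad$-annihilated separately. The bottom piece $\Lambda^2 U'=\Ad(\so_{11})_{-2}$ is automatically killed. For the components of weight $\geq -1$, a short bracket computation in $\so_{11}$ (using the standard formula $[a\wedge b, c\wedge d]=(b,c)a\wedge d-(b,d)a\wedge c-(a,c)b\wedge d+(a,d)b\wedge c$ together with the non-degeneracy of $(\cdot,\cdot)$ on $\vv^7$ and between $U$ and $U'$) rules out nonzero invariants. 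Hence $(\Lambda^2\vv^{11})^N=\Lambda^2 U'$, a one-dimensional line.

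Combining with $p(\lambda_0)\neq 0$ from Lemma \ref{L:nontriv}, $p(\lambda_0)$ is a nonzero scalar multiple of $u'_1\wedge u'_2$ for any basis $u'_1,u'_2$ of $U'$. Since $U'$ is isotropic by construction, the Plücker class $[u'_1\wedge u'_2]$ lies in $\OGr(2,11)$, as desired. The hard part will be the weight-by-weight verification that $(\Lambda^2\vv^{11})^N=\Lambda^2 U'$; it is routine but needs to be carried out explicitly in each of the five graded components. An alternative would bypass the invariance argument by computing $\Gamma^{ij}(\lambda_0,\lambda_0)$ directly at the canonical form of Corollary \ref{E:canform}, but this requires spelling out octonionic formulas for $\Gamma^{ij}$ analogous to (\ref{E:goct1})--(\ref{E:goct2}).
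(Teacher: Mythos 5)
Your argument is correct and takes a genuinely different route from the paper's. The paper's proof first uses Corollary \ref{E:canform} to put $\lambda$ in the canonical form $u\otimes f$ with $(u,u)\ne 0$, then evaluates $p_{aff}(\lambda)$ explicitly through the octonionic intertwiners listed after (\ref{decomp3}), obtaining $(u,u)\,ff\in\Sym^2 W_r$; finally it identifies $ff$ with an element of $\Lambda^2 U'$ by observing that $ff$ is $\so_7$-invariant and has the same $c$-weight as $\Lambda^2 U'$ in decomposition (\ref{E:decomp2}). Your proof sidesteps the explicit evaluation of $p_{aff}$ entirely: since the unipotent radical $N=\exp(\rad)$ fixes $\lambda_0\in s^{-1}$ as a vector (because $\rad$ shifts the grading (\ref{E:grading}) by $-1,-2$ and $s^{-1}$ is the bottom piece), equivariance forces $p_{aff}(\lambda_0)\in(\Lambda^2\vv^{11})^N$, and a graded bracket computation pins this space down to the line $\Lambda^2 U'$. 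This trades the intertwiner formulas for a bracket calculation; the two are comparable in effort, but yours is conceptually cleaner in that the precise value of $p_{aff}(\lambda_0)$ never needs to be known, only its invariance properties. Your observation that the $N$-invariance condition splits component-by-component in the grading is correct: for fixed $r\in\rad_{-1}$ (resp.\ $r\in\rad_{-2}$) the pieces $[r,x_k]$ land in distinct degrees, so each must vanish separately. Two small points to make the sketch airtight: (i) for nonemptiness of $\P(s_7\otimes f^-)\cap O_{22}$, dimension count alone ($7<15$) does not rule out $\P^7\subset O_{15}$, so the cleanest reference is Corollary \ref{E:canform} itself (or Lemma \ref{L:nontriv} together with the single easy evaluation of $p_{aff}$ at one split point), rather than the parenthetical ``easily checked not to be pure''; (ii) the computation $(\Lambda^2\vv^{11})^N=\Lambda^2 U'$, which you flag as the hard part, should be written out, but it is indeed routine — one checks from the top degree down that $x_2,x_1,x_0,x_{-1}$ are all forced to vanish by $[\rad_{-1},x_k]=0$, while $x_{-2}\in\Lambda^2 U'$ is automatically annihilated since the bracket would land in degrees $\le -3$, which do not occur.
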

\begin{proof}

If  we compare decomposition of $\Sym^2[s_{11}]$ (\ref{E:gamma1}) and decomposition of 
\begin{equation}\label{decomp3}
\begin{split}&\Lambda^2 \vv^{11}=\Lambda^2[\vv^7+\vv^4]=\Lambda^2[\vv^7+W_l\otimes W_r]\\
&=\Lambda^2\vv^7+\Sym^2W_l+\Sym^2W_r+V\otimes W_l\otimes W_r, \end{split}
\end{equation}

we see that by representation theoretic reasons there are  only the following nontrivial intertwiners between the components (see formula (\ref{E:spin7dec}) for definition of $\gamma(2)$)
\begin{equation}
\begin{split} 
&\Sym^2[s_7\otimes W_l]\rightarrow \Lambda^2V+\Sym^2W_l\\
& v^1\otimes e^1\otimes v^2\otimes e^2\rightarrow \gamma(2)(v^1,v^2)\omega_r(e^1,e^2)+ (v^1,v^2)e^1e^2 \\
&\\
&s_7\otimes W_l\otimes s_7\otimes W_r\rightarrow V\otimes W_l\otimes W_r\\
& v^1\otimes e^1 \otimes u^2\otimes f^2\rightarrow (v^1,u^2)e^1\otimes u^2\\
&\\
&\Sym^2(s_7\otimes W_r)\rightarrow \Lambda^2V+\Sym^2W_r\\
& u^1\otimes f^1\otimes u^2\otimes f^2\rightarrow \gamma(2)(u^1,u^2)\omega_r(f^1,f^2)+ (u^1,u^2)f^1f^2
\end{split}
\end{equation}
$e^1e^2$ and $f^1f^2$ stand for symmetric product of vectors.

By  Corollary \ref{E:canform} we can assume that   $\lambda\in O_{22}$ has a form $u \otimes f$, $(u,u)\neq 0$.

Then $p_{aff}(\lambda)\neq 0$  (Proposition \ref{L:nontriv})  and $p_{aff}(\lambda)$ is equal to 
\begin{equation}\label{E:quadric}
q(u)ff=(u,u)ff\in \Sym^2 W_r\subset \Lambda^2\vv^{11}
\end{equation} (the $\Lambda^2V$-component is equal to zero because $\omega_r(f,f)=0$). 
We need to show that $ff\in \Lambda^2\vv^{11}$ is a decomposable tensor and it defines an isotropic two-plane. We shall use representation theoretic arguments to show that $ff$ can be identified with an element of $\Lambda^2 U^{'}$  in decomposition (\ref{E:decomp2}). By definition $U^{'}$ is isotropic and proposition would follow. 

It suffice to show that there is subalgebra $\so_7\times \gl_2\subset \so_{11}$ such that $ff$ is invariant under $\so_7$. In addition $ff$  has the same weight under the action of the central part of $\gl_2$ as the elements of $\Lambda^2 U^{'}$.

Note that decomposition (\ref{E:spinordec}) dictates the choice of $\so_7$ and a choice of $\so_4=\sl_2\times\sl_2$ that contains our $\gl_2=\sl_2+<c>$.  In the notations of  decomposition (\ref{decomp3}) the Lie algebra $\sl_2\times\sl_2$ coincides with $\Sym^2W_l+\Sym^2W_r$. We choose $\sl_2\subset \gl_2$ to be equal to $\Sym^2W_l$.  We complete $f$ to an eigenbasis $f,f'$ of $W_r$ for some element $c\in \sl_2=\Sym^2W_r$ The element $c$ is a generator of a Cartan subalgebra in $\sl_2\cong \Sym^2W_r$; elements $f$, $ff$ are the lowest vectors of the corresponding representations. From this we conclude that $ff$ has the same weight as elements of $\Lambda^2 U^{'}$ in decomposition (\ref{E:spinordec}) constructed with respect to our subgroup $\so_7\times \gl_2\subset \so_{11}$.
\end{proof}

 We note in passing  that $ff$ is a nilpotent element in $\so_{11}$.
\begin{corollary}
Variety $\wX$ contains in $X\times \OGr(2,11)\subset X\times \P^{54}$. Projection on $\OGr(2,11)$ defines a $\Spin(11)$-equivariant map  (\ref{E:projection}).
\end{corollary}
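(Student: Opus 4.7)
The plan is to combine Definition \ref{D:dense}, Corollary \ref{C:indeterm}, and Proposition \ref{P:map1} with a standard closure argument, and then verify equivariance by functoriality of the construction.

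First I would recall that by Corollary \ref{C:indeterm} the rational map $p:X\dashrightarrow \P^{54}$ is regular on the open set $O_{22}\subset X$, so its graph $\Gamma_p\subset X\times \P^{54}$ is literally the image of the morphism $O_{22}\to X\times \P^{54}$ sending $\lambda$ to $(\lambda,p(\lambda))$. Proposition \ref{P:map1} asserts that $p(O_{22})\subset \OGr(2,11)$; hence $\Gamma_p$ is contained in $O_{22}\times \OGr(2,11)\subset X\times \OGr(2,11)$.

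Next I would invoke the fact that $\OGr(2,11)$, being a projective homogeneous variety defined inside $\P^{54}=\P(\Lambda^2\vv^{11})$ by Pl\"ucker quadrics together with the isotropy equations coming from $(\cdot,\cdot)$, is Zariski-closed in $\P^{54}$. Consequently $X\times \OGr(2,11)$ is closed in $X\times \P^{54}$. Since $\wX$ is by Definition \ref{D:dense} the closure of $\Gamma_p$ inside $X\times \P^{54}$, and the closure of a subset is contained in any closed set that contains it, we obtain
\[
\wX\subset X\times \OGr(2,11),
\]
which gives the first assertion. The projection $p:\wX\to \OGr(2,11)$ referred to in (\ref{E:projection}) is then simply the restriction of the second projection of $X\times \OGr(2,11)$.

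For $\Spin(11)$-equivariance, I would note that $\Spin(11)$ acts on both factors of $X\times \P^{54}$ through its natural actions on the spinor representation $s_{11}$ and on $\Lambda^2\vv^{11}$, and that the rational map $p$ is tautologically equivariant: its definition (\ref{E:projectionmain}) is given by the $\Spin(11)$-intertwiner $\Gamma^{ij}$. Therefore $\Gamma_p\subset X\times \P^{54}$ is $\Spin(11)$-stable, and taking Zariski closure preserves group stability, so $\wX$ is $\Spin(11)$-stable. The restriction of the second projection to $\wX$ is then automatically equivariant.

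The only conceptually nontrivial input is Proposition \ref{P:map1} (which has already been proved); the rest is formal. The mildly delicate point to keep in mind is that $\wX$ is defined as a scheme-theoretic closure, not merely set-theoretically, so one should remark that the closure of a $G$-stable reduced subscheme inside a $G$-scheme is again $G$-stable, which is standard and follows from the fact that the action morphism $\Spin(11)\times (X\times \P^{54})\to X\times \P^{54}$ is continuous (indeed a morphism of schemes) and sends the irreducible component closures into themselves by connectedness of $\Spin(11)$.
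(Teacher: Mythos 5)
Your proposal is correct and takes the same route the paper implicitly uses: the paper states this as an unproved corollary immediately following Proposition \ref{P:map1} and Definition \ref{D:dense}, and your argument -- that $p(O_{22})\subset\OGr(2,11)$ forces the graph into the closed subset $X\times\OGr(2,11)$, so its closure $\wX$ stays there, with equivariance following from the intertwiner nature of $\Gamma^{ij}$ and stability of closures under a connected group -- is precisely the routine verification the author is omitting.
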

By construction the map $p$ (\ref{E:projection}) is {\it projective} (\cite{Hartshorne} p.103).  
The space $\wX$ coincides with the blowup of $X$ associated with the sheaf of ideals generated by sections (\ref{E:matrix}) ( \cite{Hartshorne} Example 7.17.3.).  By  Theorem 4.9 \cite{Hartshorne} the map $p$ is proper.

\subsection{The structure of the map $p:\wX\rightarrow \OGr(2,11)$}\label{S:mapstructure}
Our plan for now is to show that $\wX$ is a total space of a projectivization of an eight-dimensional algebraic vector bundle over $\OGr(2,11)$.

The map (\ref{E:projection}) is a $\Spin(11)$-morphism; therefore the stabilizer $St(x)\subset \Spin(11)$  of  $x\in \OGr(2,11)$ acts on the fiber $p^{-1}(x)$. 
\begin{lemma}
Let $I$ be ideal in $A$ generated by $v^{ij}$.
The blowup algebra $\Bl A=\bigoplus_{n\geq 0} I^n$ is generated by $66$ elements.
In particular $\wX$ is a noetherian scheme of  finite type.
\end{lemma}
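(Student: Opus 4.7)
My plan is to apply the standard presentation of Rees algebras and then invoke Hilbert's basis theorem. Recall that if an ideal $I\subset R$ is generated by $f_1,\dots,f_r$, then the Rees algebra can be written as $\bigoplus_{n\geq 0} I^n t^n = R[f_1 t,\dots, f_r t] \subset R[t]$; it is therefore generated as an $R$-algebra, in $t$-degree one, by any chosen ideal generators of $I$.

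In our situation $I=(v^{ij}) \subset A$ with $v^{ij}=\Gamma^{i_1 i_2}_{\alpha\beta}\lambda^{\alpha}\lambda^{\beta}$. The natural index set $\{(i_1,i_2)\,:\,1\leq i_1\leq i_2\leq 11\}$ has $\binom{12}{2}=66$ elements; since $\Gamma^{i_1 i_2}_{\alpha\beta}$ is antisymmetric in $(i_1,i_2)$, the $11$ diagonal terms $v^{ii}$ vanish identically, leaving the $\binom{11}{2}=55$ essential generators, but the symmetric $66$-count is the convenient bookkeeping compatible with the decomposition (\ref{E:symdecomposition}). Passing to $v^{ij}t\in A[t]$ then exhibits $\Bl A$ as an $A$-algebra generated by at most these $66$ elements.

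Since $A$ is itself a finitely generated commutative $\mathbb{C}$-algebra (on the $32$ spinor coordinates $\lambda^{\alpha}$ modulo the $11$ quadratic relations $v^i=0$), it follows that $\Bl A$ is a finitely generated commutative $\mathbb{C}$-algebra. The consequences are now immediate: by Hilbert's basis theorem $\Bl A$ is noetherian, and taking $\Proj$ preserves the noetherian and finite-type properties, so $\wX=\Proj \Bl A$ is a noetherian scheme of finite type over $\mathbb{C}$.

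I do not anticipate any substantive obstacle: the content is essentially bookkeeping. The only point that calls for mild care is to reconcile the $66$-element count (which corresponds to the symmetric index set $1\leq i_1\leq i_2\leq 11$, including the trivially vanishing diagonal) with the $55$-dimensional $\Spin(11)$-module $\Lambda^2\vv^{11}$ of honest, independent Rees-algebra generators. Everything else reduces to the textbook Rees algebra presentation plus Hilbert's basis theorem.
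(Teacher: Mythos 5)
Your proof is correct and takes essentially the same route as the paper's, which is a single sentence: ``the algebra $\Bl A$ is generated by $\lambda^{\alpha}$ and $v^{ij}$.'' You spell out the standard Rees-algebra presentation $\Bl A = A[v^{ij}t]\subset A[t]$, note that $A$ is itself a finitely generated $\mathbb{C}$-algebra, and invoke Hilbert's basis theorem plus the behavior of $\Proj$; that is exactly the content behind the paper's one-liner. Your bookkeeping remark is also the right way to read the ``$66$'': the honest count of independent quadrics $v^{ij}$ is $\dim\Lambda^2\vv^{11}=55$, and the $66=\binom{12}{2}$ in the lemma corresponds to the symmetric index range $1\le i_1\le i_2\le 11$ with the $11$ vanishing diagonal entries included. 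One small caveat worth noting: the paper's proof lists \emph{both} the $\lambda^{\alpha}$ and the $v^{ij}$ as generators, which suggests generation as a $\mathbb{C}$-algebra and would give $32+55$ rather than $66$; your reading treats $66$ as the count of $A$-algebra generators in $t$-degree one. Either reading yields the only conclusion that is actually used afterwards, namely that $\Bl A$ is a finitely generated $\mathbb{C}$-algebra, so the discrepancy is cosmetic.
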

\begin{proof}
The algebra $\Bl A$  is generated by $\lambda^{\alpha}$ and $v^{ij}$.
\end{proof}

From this we immediately deduce that $p$ is a morphism of finite type (\cite{Hartshorne}, p.84)
\begin{proposition}(\cite{Mumford}, page 57) Let $X\overset{f}\rightarrow Y$ be a morphism of finite type of noetherian
schemes, and let $\F$ be a coherent sheaf on $X$. Assume that $Y$ is reduced
and irreducible. Then there is a non-empty open subset $U\subset Y$ such that
the restriction of $\F$ to $f^{-1}(U)$ is flat over $U$.
\end{proposition}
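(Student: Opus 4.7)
The plan is to deduce this from the algebraic \emph{generic freeness lemma}. First I would reduce the statement to affine algebra. Since the conclusion is local on $Y$, and the reduced-and-irreducible hypothesis on $Y$ means that any affine open $\mathrm{Spec}\, A \subset Y$ has integral coordinate ring, I may shrink $Y$ to such an affine $\mathrm{Spec}\, A$. Because $f$ is of finite type (hence quasi-compact), $f^{-1}(\mathrm{Spec}\, A)$ is a finite union of affines $\mathrm{Spec}\, B_i$ with each $B_i$ a finitely generated $A$-algebra, and $\F$ restricts to each $\mathrm{Spec}\, B_i$ as $\widetilde{M_i}$ for some finitely generated $B_i$-module $M_i$. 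Since free implies flat, it suffices to establish the following: for a noetherian domain $A$, a finitely generated $A$-algebra $B$, and a finitely generated $B$-module $M$, there exists $0 \neq a \in A$ such that $M_a$ is a free $A_a$-module. The open set $U = D(a_1 \cdots a_n)$ obtained by combining witnesses for all the $M_i$ then does the job.

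Next I would prove this algebraic statement via a prime filtration of $M$ as a $B$-module: a chain $0 = N_0 \subset N_1 \subset \cdots \subset N_n = M$ whose successive quotients satisfy $N_i/N_{i-1} \cong B/\mathfrak{p}_i$ for primes $\mathfrak{p}_i \subset B$. Because an extension of free modules is free, it suffices to treat each $B/\mathfrak{p}_i$ separately; replacing $B$ by $B/\mathfrak{p}$, I may assume $B$ itself is a domain and $M = B$. If the structural map $A \to B$ has a nonzero element $a$ in its kernel, then $B_a = 0$ is trivially free. Otherwise $A \hookrightarrow B$, and I pass to $K = \mathrm{Frac}(A)$: the ring $B_K = B \otimes_A K$ is a domain and a finitely generated $K$-algebra. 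By Noether normalization I pick algebraically independent $t_1, \ldots, t_d \in B$ (after clearing denominators by inverting an element of $A$) such that $K[t_1, \ldots, t_d] \subset B_K$ is module-finite; clearing denominators in the monic relations satisfied by a fixed finite generating set of $B_K$ over $K[t_1, \ldots, t_d]$, a further localization yields that $R := A_a[t_1, \ldots, t_d] \hookrightarrow B_a$ is module-finite.

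The final and most technical step is to upgrade module-finiteness over $R$ to actual freeness over $A_a$ after one more localization. The clean approach is Noetherian induction on the Krull dimension of $\mathrm{Spec}\, B$ (equivalently on the transcendence degree $d$). Applying a prime filtration of $B_a$ now as an $R$-module, any subquotient $R/\mathfrak{q}$ with $\mathfrak{q} \cap A \neq 0$ is killed after inverting a suitable element of $A$, while subquotients with $\mathfrak{q} \cap A = 0$ are domains of smaller Krull dimension covered by the inductive hypothesis. Since $R$ is a free $A_a$-module with basis the monomials in the $t_i$, a free $R$-module is free over $A_a$ as well; assembling the filtration step by step and inverting only finitely many elements gives the desired $a$. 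The main obstacle is making the Noetherian induction well-founded: I would follow Grothendieck's original organization, inducting on the lexicographic pair $(\dim B_K, \text{length of a prime filtration of } M)$, which makes the descent under each reduction step transparent.
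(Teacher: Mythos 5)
The paper does not actually prove this statement; it is quoted verbatim from Mumford's \emph{Lectures on curves on an algebraic surface} (Lecture 8, the generic flatness lemma), and that citation serves as the ``proof.'' So there is no in-paper argument to compare against, and what you have written is a reconstruction of Grothendieck's generic freeness lemma (EGA IV, 6.9.2), which is indeed the content of Mumford's cited passage and is the standard route. Your overall organization---reduce to $A$ a noetherian domain, d\'evissage via a prime filtration to $M=B/\mathfrak{p}$, discard the case $\mathfrak{p}\cap A\neq 0$, Noether-normalize the generic fiber and clear denominators to get $R=A_a[t_1,\dots,t_d]\hookrightarrow B_a$ module-finite, then induct on the transcendence degree $d$---is exactly right.

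There is one place where the wording, taken literally, would not close the induction. You assert that in a prime filtration of $B_a$ over $R$, ``subquotients with $\mathfrak{q}\cap A=0$ are domains of smaller Krull dimension covered by the inductive hypothesis.'' That is false for $\mathfrak{q}=(0)$, which necessarily occurs (since $B_a$ is a torsion-free $R$-module, $(0)$ is its only associated prime), and $R/(0)=R$ has generic fiber of dimension exactly $d$, not less. You evidently know how to repair this---your next sentence observes that $R$ is $A_a$-free on the monomials in the $t_i$---but the two remarks are not stitched together, and the proposed lexicographic pair $(\dim B_K,\,\text{filtration length})$ does not strictly decrease on that branch. The cleaner version of the final step, which avoids the issue entirely, is: let $n=[\,\mathrm{Frac}(B_a):\mathrm{Frac}(R)\,]$, choose $b_1,\dots,b_n\in B_a$ forming a $\mathrm{Frac}(R)$-basis, and consider the injection $R^n\hookrightarrow B_a$; the cokernel $C$ is a finite $R$-module annihilated by some nonzero $t\in R$, hence a finite module over $R/(t)$, whose generic fiber over $K$ has dimension $<d$. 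Apply the inductive hypothesis to $A_a\to R/(t)$ and $C$ to get $C_g$ free over $A_{ag}$ for some nonzero $g\in A$; then $B_{ag}$ sits in a short exact sequence with both outer terms free over $A_{ag}$, hence (by projectivity of the quotient) is itself free. With that adjustment your proof is complete and matches the classical argument.
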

Projective homogenous variety $\OGr(2,11)$ is of finite type. We choose $\F$ to be $\O_{\wX}$. The $\Spin(11)$-action and  the previous proposition lets to deduce the following.
\begin{corollary}\label{C:flat}
The map  $p:\wX\rightarrow \OGr(2,11)$ is flat.
\end{corollary}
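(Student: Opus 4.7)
The plan is to combine the generic flatness theorem that the author has just quoted with the $\Spin(11)$-equivariance of $p$, exploiting the fact that $\OGr(2,11)$ is a single $\Spin(11)$-orbit.

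First, I would verify that the hypotheses of the preceding Mumford proposition are satisfied. The map $p$ is of finite type by the lemma just established (combined with $\OGr(2,11)$ being noetherian of finite type), the target $\OGr(2,11)=\Spin(11)/P_2$ is a smooth projective variety, hence reduced and irreducible, and $\O_{\wX}$ is a coherent sheaf on $\wX$. Applying the proposition with $\F=\O_{\wX}$, I obtain a non-empty open subset $U\subset \OGr(2,11)$ such that $p^{-1}(U)\to U$ is flat.

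Next, I would argue that the flat locus
\[
\mathrm{Flat}(p)=\{y\in \OGr(2,11)\mid \O_{\wX,x} \text{ is flat over } \O_{\OGr(2,11),y}\text{ for every }x\in p^{-1}(y)\}
\]
is $\Spin(11)$-invariant. This uses only that $p$ is a $\Spin(11)$-equivariant morphism (established earlier in this section): for any $g\in \Spin(11)$, the action on $\wX$ covers the action on $\OGr(2,11)$, and flatness is a local property preserved under the resulting isomorphisms of local rings. Hence $g\cdot \mathrm{Flat}(p)=\mathrm{Flat}(p)$.

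Finally, since $\Spin(11)$ acts transitively on $\OGr(2,11)$, the $\Spin(11)$-saturation of any non-empty subset is all of $\OGr(2,11)$. Combined with the fact that $U\subset \mathrm{Flat}(p)$ is non-empty, this forces $\mathrm{Flat}(p)=\OGr(2,11)$, proving that $p$ is flat everywhere. There is essentially no hard step here; the only thing to be careful about is that equivariance is formulated at the level of morphisms of schemes so that the induced maps on local rings are genuine isomorphisms, which is immediate from the construction of $\wX$ as a closure inside the $\Spin(11)$-space $X\times \P^{54}$.
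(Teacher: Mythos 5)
Your proposal is correct and is exactly the argument the paper has in mind: apply Mumford's generic flatness with $\F=\O_{\wX}$ to get a non-empty flat locus, then use $\Spin(11)$-equivariance of $p$ together with transitivity of $\Spin(11)$ on $\OGr(2,11)$ to spread flatness everywhere. You have simply written out in detail what the paper compresses into one sentence.
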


\begin{proposition}\label{P:fiber}
Fibers of the map $p$ are smooth and isomorphic to $\P^7$.
\end{proposition}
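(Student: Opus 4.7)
The plan is to use $\Spin(11)$-equivariance of $p$ (established after Proposition \ref{P:map1}) and transitivity of $\Spin(11)$ on $\OGr(2,11)$ to reduce to analyzing a single fiber. Fix $x_0 \in \OGr(2,11)$ stabilized by the parabolic $P_2$ of (\ref{E:pstab}), compatibly with the grading (\ref{E:grading}), so that $s^{-1} = s_7 \otimes f^-$ is a $P_2$-invariant 8-dimensional subspace of $s_{11}$ on which the unipotent radical acts trivially.

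First, I embed the expected fiber. By Proposition \ref{E:vanish}, $\P^7 := \P(s^{-1}) \subset X$, and for $\lambda = u \otimes f^- \in s^{-1}$ with $(u,u) \neq 0$ the computation in Proposition \ref{P:map1} gives $p_{\mathrm{aff}}(\lambda) = (u,u) f^- f^-$, a nonzero scalar multiple of the vector representing $x_0$. Thus $\lambda \mapsto (\lambda, x_0)$ maps the open dense subset $\P^7 \cap O_{22}$ into the graph of $p$, and by closedness of $\wX \subset X \times \OGr(2,11)$ together with properness of $\P^7$ this morphism extends to a closed immersion $\iota: \P^7 \hookrightarrow p^{-1}(x_0) \subset \wX$. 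Flatness of $p$ (Corollary \ref{C:flat}) together with $\dim \wX = 22$, $\dim \OGr(2,11) = 15$ implies every irreducible component of $p^{-1}(x_0)$ has dimension $7$, so $\iota(\P^7)$ is already a full component.

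Next, I rule out extra components. Over the open subset $\wX \setminus Y \cong O_{22}$, where $Y = \Bl^{-1}(O_{15})$ is the exceptional divisor, $\Spin(11)$-equivariance presents $p|_{O_{22}}$ as the quotient map $\Spin(11)/H \to \Spin(11)/P_2$ with $H = St(\lambda_0) \subset P_2$ (the inclusion follows since $p(h\lambda_0) = h\, p(\lambda_0) = h x_0$ and also $p(h\lambda_0) = p(\lambda_0) = x_0$). Hence $p^{-1}(x_0) \cap O_{22} = P_2 \cdot \lambda_0 = P_2/H$ is a single connected orbit of the connected parabolic $P_2$, irreducible of dimension $7$; it coincides with $\P^7 \cap O_{22}$ by dimension, and its closure in $\wX$ is $\iota(\P^7)$. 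On the complementary piece, $Y$ is a Cartier exceptional divisor of $\wX$, hence of dimension $21$, and $\Spin(11)$-equivariance of $p|_Y$ with transitive action on $\OGr(2,11)$ forces all of its fibers to have equal dimension $21 - 15 = 6$. Therefore $Y \cap p^{-1}(x_0)$ has dimension $6$ and cannot house a $7$-dimensional component of $p^{-1}(x_0)$. Combining, $p^{-1}(x_0) = \iota(\P^7) \cong \P^7$ as a closed subset of $\wX$, which is smooth.

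The main obstacle is promoting this set-theoretic equality to a scheme-theoretic isomorphism, ruling out embedded components or nonreduced structure. A clean resolution is to verify that $\iota$ fits into a $\Spin(11)$-equivariant morphism $\P(E) \to \wX$, where $E = \Spin(11) \times_{P_2} s^{-1}$ is the rank-$8$ homogeneous bundle on $\OGr(2,11)$; this morphism is proper, birational, and bijective on closed points by the preceding analysis, hence an isomorphism by Zariski's Main Theorem applied to the normal target (normality of $\wX$ can be read off the explicit local charts of Section \ref{S:local}). The resulting structure of $\wX$ as a $\P^7$-bundle over $\OGr(2,11)$ then delivers the scheme-theoretic identification and smoothness of every fiber simultaneously.
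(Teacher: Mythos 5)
Your proof takes a genuinely different route from the paper's. Where you rely on flatness to pin the fiber dimension, on the $P_2$-orbit structure of $p^{-1}(x_0)\cap O_{22}$, and on a dimension count along $Y$ to rule out extraneous components, the paper instead proves $\overline{r^{-1}(x)}=p^{-1}(x)$ by an analytic argument (compactness of $\Spin(11,\mathbb{R})$, extraction of convergent subsequences, and Chow's theorem to pass from analytic to Zariski closure). For the scheme-theoretic identification, you invoke Zariski's Main Theorem for the morphism $\P(E)\to\wX$; the paper instead observes that the locus of indeterminacy of $p$ restricted to $\P^7$ is the quadric $q=0$, whose ideal sheaf on $\P^7$ is invertible, so the blowup is trivial over $\P^7$ and $\Bl^{-1}(\P^7)\cong\P^7$ sits directly between $r^{-1}(x)$ and $p^{-1}(x)$. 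Your route has the virtue of staying entirely inside algebraic geometry and of isolating the flatness $\Rightarrow$ pure fiber dimension observation explicitly; the paper's route is shorter at the cost of an analytic detour.

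There is, however, a genuine gap in your last step. Applying ZMT to the proper birational bijection $\P(E)\to\wX$ requires $\wX$ to be normal, and at this point in the paper normality (indeed, smoothness) of $\wX$ has not been established --- it is a \emph{consequence} of the proposition you are proving. Your parenthetical remark that normality "can be read off the explicit local charts of Section \ref{S:local}" does not hold as stated: those charts describe $X$, not $\wX$. One would have to additionally blow up those charts and verify normality there (e.g.\ observe that blowing up the apex of the affine cone over $\Gr(2,5)$ in the chart of Proposition \ref{P:chart} produces the total space of a line bundle over $\Gr(2,5)$, hence something smooth); this is doable but is not carried out, and without it the ZMT step is circular. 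The paper's "blowup along an invertible ideal is trivial" observation is precisely the device that avoids having to know anything about the singularities of $\wX$ in advance, and this is the idea your argument is missing.
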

\begin{proof}
We denote $\Bl$-preimage of $O_{22}$ in $\wX$ by $\Delta$. 
Let $r$ be projection $\Delta\rightarrow \OGr(2,11)$. Then the closure $\overline{r^{-1}(x)}$ of  $r^{-1}(x)$ in $\wX$ contains in $p^{-1}(x)$. Let us prove that 
\begin{equation}\label{E:closure}
\overline{r^{-1}(x)}=p^{-1}(x)
\end{equation}

 We shall shaw first that any point $y\in \wX$ belongs to the closure $\overline{r^{-1}(x)}$ for some $x$ (by closure we mean here a closure in analytic topology). Let $a_i\in \Delta$ be a sequence that converges to $y\in \wX$. A compact group $\Spin(11,\mathbb{R})\subset \Spin(11)$ acts transitively on $\OGr(2,11)$ \cite{Montgomery}. We choose $g_i\in \Spin(11,\mathbb{R})$ such that $g_i(p(a_i))$ is equal to some fixed $x'$. Compactness allows us to extract a convergent subsequence $g'_i$ such that $\lim_{i\rightarrow \infty} g'_i=g$. We set $g^{-1}x'=x$. Then $g^{-1}g'_ia_i\in r^{-1}{g^{-1}x'}=r^{-1}(x)$ and $\lim_{i\rightarrow \infty}g^{-1}g'_ia_i=y\in\overline{r^{-1}(x)}$. 

The fiber  $r^{-1}(x)\subset \P^{31}\times \P^{54}$ being $St(x)$-orbit  is locally closed in Zariski topology. By Chow theorem \cite{Chow} its set-theoretic (analytic) closure coincides with the closure in  Zariski topology. This verifies (\ref{E:closure}).

 We shall shaw next that there is an isomorphism  of the fiber $p^{-1}(x)$ onto $\P^7$.
 Projective $\Bl$ maps $p^{-1}(x)$ onto some closed scheme $G_x\subset X$. Schemes $p^{-1}(x)$ and $G_x$ are acted upon by $St(x)\cong P_2$.  By Proposition \ref{P:invariance}  $\Bl\ r^{-1}(x)\subset X$ is a 
  dense  in  $\P^{7}\subset X$. We conclude that $\P^7\subset G_x$.
  The intersection of the locus of indeterminacy of $p:X\rightarrow \OGr(2,11)$ with $\P^7$, defined in the proof of Proposition \ref{P:map1}, coincides with  the  zero set  of  a quadratic function $q(v)=(v,v)$ (\ref{E:quadric}),
  which  defines a sheaf of invertible ideals $I_Q$ on $\P^7$.
 The pre-image of $\P^7$ in $\wX$ under $\Bl$ is $\P^7$. This is because a blowup of a scheme along an invertible sheaf of ideals is a trivial operation \cite{Hartshorne}. 
We have a chain of inclusions $r^{-1}(x)\subset \P^7\subset p^{-1}(x)$, which implies a scheme isomorphism $p^{-1}(x)=\overline{r^{-1}(x)}\cong \P^7$.
\end{proof}
\begin{proposition}\label{P:smooth}(\cite{Hartshorne}, Ex III.10.2)
Let $f:X\rightarrow Y$ be a proper, flat morphism of varieties over $\mathbb{C}$. Suppose for some
point $y\in  Y$ that the fiber $X_y$ is smooth over $\mathbb{C}(y)$. Then there is an
open neighborhood $U$ of $y$ in $Y$ such that $f:f^{-1}(U) \rightarrow U$ is smooth.
\end{proposition}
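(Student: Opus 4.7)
The plan is to combine two standard openness properties with the hypothesis of properness to push an open condition on the source down to an open condition on the target. Since we are given that $f$ is flat, smoothness of $f$ at a point $x\in X$ is equivalent to smoothness of the fiber $X_{f(x)}$ at $x$ (flat plus geometrically regular fibers), so the smoothness locus of $f$ is exactly the locus where the fibers are smooth.

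First I would verify that the smoothness locus $V=\{x\in X\mid f\text{ is smooth at }x\}$ is open in $X$. For a flat morphism of finite type between noetherian schemes this is a standard fact: one can test smoothness via the rank of the relative cotangent/differentials $\Omega_{X/Y}$, and the condition that $\Omega_{X/Y}$ is locally free of the expected rank is manifestly open. Given this, the hypothesis that $X_y$ is smooth over $\mathbb{C}(y)$ means precisely that every point of the fiber $X_y=f^{-1}(y)$ lies in $V$.

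Next I would use properness. Let $Z=X\setminus V$, which is closed in $X$. Properness of $f$ ensures that $f(Z)$ is closed in $Y$. Since $f^{-1}(y)\subset V$, the fiber over $y$ does not meet $Z$, hence $y\notin f(Z)$. Therefore $U=Y\setminus f(Z)$ is an open neighborhood of $y$, and by construction $f^{-1}(U)\subset V$, so $f$ is smooth at every point of $f^{-1}(U)$, i.e.\ $f\colon f^{-1}(U)\to U$ is a smooth morphism.

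The only nontrivial input is the openness of the smooth locus of a flat finite-type morphism, which is the main technical step; granted that, properness does the rest in one line by closedness of $f(Z)$. No obstacle is expected beyond quoting this standard openness result, since flatness of $p$ has already been established in Corollary \ref{C:flat} and properness in the discussion preceding Subsection \ref{S:mapstructure}, while smoothness of the fibers $\P^7$ is Proposition \ref{P:fiber}.
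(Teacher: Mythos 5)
The paper gives no proof of this proposition; it simply cites Hartshorne, Exercise III.10.2. Your argument is the standard intended solution to that exercise and is correct: since $f$ is flat of finite type, its smooth locus $V$ is open (this is the one substantive input, which you correctly isolate), the hypothesis places $f^{-1}(y)\subset V$, and properness makes $f(X\setminus V)$ closed and disjoint from $y$, so $U=Y\setminus f(X\setminus V)$ does the job.
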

  \begin{corollary}
 The map $p$ is smooth and defined a structure of projective bundle over $\OGr(2,11)$. In particular $\wX$ is smooth.
  \end{corollary}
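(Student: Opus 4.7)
The plan is to deduce smoothness from Proposition \ref{P:smooth} and then promote the resulting smooth proper $\P^7$-fibration to a projective bundle by exploiting the $\Spin(11)$-equivariance of $p$.

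For the smoothness, the three hypotheses of Proposition \ref{P:smooth} are already at hand: $p$ is proper (noted after Definition \ref{D:dense}), flat (Corollary \ref{C:flat}), and its fibers are smooth and isomorphic to $\P^7$ (Proposition \ref{P:fiber}). The proposition furnishes an open neighborhood $U \subset \OGr(2,11)$ of any chosen base point over which $p$ is smooth. Since $\Spin(11)$ acts transitively on $\OGr(2,11)$ and $p$ is $\Spin(11)$-equivariant, the translates $gU$ cover the whole base, so $p$ is smooth globally. Because $\OGr(2,11)$ is itself smooth, the total space $\wX$ is smooth.

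For the projective bundle structure I would fix $x_0 \in \OGr(2,11)$ with stabilizer $P_2$. In the proof of Proposition \ref{P:invariance} the fiber $p^{-1}(x_0) \cong \P^7$ was identified with $\P(E)$ for the eight-dimensional $P_2$-module $E = s^{-1} = s_7 \otimes f^-$ of $c$-weight $-1$; the unipotent radical of $P_2$ acts by zero on $E$ because no $c$-weight spaces of lower weight exist in $s_{11}$. The associated-bundle map
\[
\Phi: \Spin(11) \times_{P_2} \P(E) \longrightarrow \wX, \qquad (g,[v]) \mapsto g\cdot[v],
\]
is a $\Spin(11)$-equivariant morphism between smooth proper irreducible varieties of the same dimension; it restricts to the identity on the fiber over $x_0$, and by equivariance it is an isomorphism on every fiber. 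Hence $\Phi$ is an isomorphism of schemes, exhibiting $\wX \cong \P(\mathcal{E})$ for the rank-8 algebraic vector bundle $\mathcal{E} = \Spin(11) \times_{P_2} E$ over $\OGr(2,11)$.

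The main obstacle is upgrading the fiberwise isomorphism $\Phi$ to an isomorphism of schemes. I would close this either by Zariski's main theorem --- $\Phi$ is a proper bijection between smooth irreducible varieties of the same dimension --- or, more intrinsically, by cohomology and base change applied to a $\Spin(11)$-equivariant relatively ample line bundle: since $p$ is smooth and proper with $H^i(\P^7,\O(1))=0$ for $i>0$, the direct image $p_*\O_{\wX}(1)$ is locally free of rank $h^0(\P^7,\O(1))=8$, and the tautological map $\wX \to \P(p_*\O_{\wX}(1))$ is an isomorphism because it is one on each fiber.
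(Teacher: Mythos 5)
Your proof is correct, and its core coincides with the paper's: smoothness comes from flatness (Corollary \ref{C:flat}), the identification of fibers with $\P^7$ (Proposition \ref{P:fiber}), the openness of the smooth locus (Proposition \ref{P:smooth}), and the transitive $\Spin(11)$-action on the base --- you make the last step explicit where the paper just says ``follows directly.'' For the projective-bundle structure, your second method (cohomology and base change applied to $\O_{\wX}(1)=\Bl^{*}\O(1)$, giving the rank-8 locally free sheaf $p_{*}\O_{\wX}(1)$ whose projectivization is $\wX$) is precisely what the paper does in constructing $\E=p_{*}\Bl^{*}\O(1)$ in (\ref{E:vectprb}). Your first method, via the associated-bundle morphism $\Phi:\Spin(11)\times_{P_2}\P(E)\to\wX$ with $E=s^{-1}$, is a genuine alternative not taken by the paper: it trades the cohomological vanishing for Zariski's main theorem and makes the homogeneity of the bundle manifest from the outset (the paper obtains that homogeneity only a posteriori, calling $\E$ a ``homogeneous vector bundle''). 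The associated-bundle picture is cleaner if one wants to read off the $P_2$-module $E$ and hence the isomorphism type of $\E$, which is in fact exactly how Remark \ref{r:fiberexact} later characterizes $\E^1$; the direct-image picture is cleaner if one cares about the relative $\O(1)$ and its higher pushforwards as in Proposition \ref{r:direct}. Both close the gap correctly.
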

 \begin{proof}
The first statement follows directly from Corollary \ref{C:flat} and  Propositions \ref{P:fiber}, \ref{P:smooth}.

 Note that $\wX$ carries the  pullback of the tautological line bundle $\Bl^*\O(1)$. It restriction on $\P^7\cong p^{-1}(x)$ is a degree one line bundle. The direct image 
 \begin{equation}\label{E:vectprb}
 \E=p_*\Bl^*\O(1)
 \end{equation} is an eight-dimensional homogeneous vector bundle on $\OGr(2,11)$, which projectivization is $\wX$.
 \end{proof}
\begin{corollary}\label{C:total}
The exceptional divisor   $Y=\wX\backslash O_{22}$ of the blowdown $\Bl$

is  a fiber bundle of quadrics  over $\OGr(2,11)$.
\end{corollary}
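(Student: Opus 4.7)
The plan is to check the claim fiberwise first and then promote the pointwise picture to a Zariski-locally trivial algebraic fiber bundle structure by invoking $\Spin(11)$-equivariance.

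First, I would identify the fiber of $p|_Y$ over a chosen point $x\in \OGr(2,11)$. Since by definition $Y=\Bl^{-1}(O_{15})$, we have $Y\cap p^{-1}(x)=\Bl^{-1}(O_{15})\cap p^{-1}(x)$. From Proposition \ref{P:fiber} we already know that $\Bl$ restricts to an isomorphism $p^{-1}(x)\xrightarrow{\sim} \P^{7}\subset X$, where this $\P^{7}$ is the one from Proposition \ref{E:vanish}; triviality of the blowup along the fiber is guaranteed because the restriction of the ideal of $O_{15}$ to this $\P^{7}$ is principal, generated by the single quadratic equation (\ref{E:quadric}). Under this identification $Y\cap p^{-1}(x)$ becomes the scheme-theoretic intersection $\P^{7}\cap O_{15}$, cut out by the non-degenerate $\Spin(7)$-invariant quadratic form $q(v)=(v,v)$ on $s_{7}$. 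Hence each fiber of $p|_Y$ is a smooth quadric $Q\subset \P^{7}$.

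Second, I would upgrade this pointwise statement using equivariance. Since $O_{15}\subset X$ is $\Spin(11)$-invariant, so is $Y$, and the restricted map $p|_Y: Y\to \OGr(2,11)$ is $\Spin(11)$-equivariant. Because $\OGr(2,11)=\Spin(11)/P_2$ is a homogeneous space, $Y$ is necessarily realized as the associated bundle $\Spin(11)\times_{P_2} Q$, where the stabilizer $P_2$ acts on the fiber $Q$ through its Levi quotient $\Spin(7)\times \GL(2)$ preserving the form $q$. Associated bundles over flag varieties are Zariski-locally trivial (for instance, trivial over the open Bruhat cells), so $p|_Y$ is a genuine algebraic fiber bundle of smooth quadrics, as claimed.

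The main obstacle is ensuring that the scheme-theoretic fiber of $p|_Y$ really is the reduced smooth quadric $Q$, with no unexpected infinitesimal thickening produced by the blowup. This is exactly what the triviality of the blowup restricted to each $\P^{7}$-fiber provides: the exceptional divisor restricted to $p^{-1}(x)$ coincides with the Cartier divisor cut out by $q(v)=0$. Everything else is essentially formal once equivariance is in place.
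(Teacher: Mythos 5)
Your proposal is correct and takes essentially the same approach as the paper. The only cosmetic difference is that the paper identifies $p^{-1}(x)\cap Y$ as the complement of the open $\Spin(7)$-orbit $O_7\subset\P^7$ (invoking the $\Spin(7)$-orbit decomposition of $\P^7$ directly), whereas you phrase it via the Cartier divisor picture: the exceptional divisor restricted to the $\P^7$-fiber is the zero locus of the quadratic form $q$; both are the same observation, already prepared in the proof of Proposition~\ref{P:fiber}. You also make explicit the upgrade to a Zariski-locally trivial associated bundle $\Spin(11)\times_{P_2}Q$, a point the paper treats as immediate from equivariance, which is a reasonable bit of extra rigor rather than a genuinely different route.
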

\begin{proof}
We want prove that $p^{-1}(x)\cap Y$ is a quadric. As before  $\Delta$ stands for  $\Bl^{-1}(\O_{22})$. 
The open subset $p^{-1}(x)\cap \Delta$ is invariant under $St(x)\cong P_2$ (\ref{E:pstab}). The group $P_2$ contains $\Spin(7)$. $\Spin(7)$ has an open orbit (\ref{E:openorb7}) $O_7$ in $\P^7$ and its complement - the quadric. We conclude that $p^{-1}(x)\cap \Delta=O_7$ and $p^{-1}(x)\cap Y=p^{-1}(x)\backslash O_7$ is the quadric $Q$.
\end{proof}

Define  $\L^{\otimes n}$ to be  
\begin{equation}\label{E:ldef}
\L^{\otimes n}=\Bl^*\O(n).
\end{equation}
\begin{proposition}\label{r:direct}
A fiber of \[\E^n=p_*\L^{\otimes n}\] over a point $x\in \OGr(2,11)$ is isomorphic to the space  of $n$-homogeneous functions $\Sym^ns^*_{7}$ on the projective space $\P(s_7)$. The higher direct images $R^ip_*\L^{\otimes n}, n\geq -7$ vanish.
\end{proposition}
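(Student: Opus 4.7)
The plan is to exploit the fact that $p:\wX\to\OGr(2,11)$ is the projective bundle $\P(\E)$ of the rank eight vector bundle $\E=p_*\Bl^*\O(1)$, as established in the corollary preceding this proposition. In the proof of Corollary \ref{C:total} it was already observed that the restriction of $\L=\Bl^*\O(1)$ to any fiber $\P^7\cong p^{-1}(x)$ is $\O_{\P^7}(1)$; by construction this exhibits $\L$ as the relative hyperplane (tautological) line bundle on $\P(\E)$, so $\L^{\otimes n}$ restricts to $\O_{\P^7}(n)$ on every fiber.

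Next I would invoke cohomology and base change. The morphism $p$ is proper and flat (Corollary \ref{C:flat}), and the fiberwise cohomology $H^i(p^{-1}(x),\L^{\otimes n}|_{p^{-1}(x)}) = H^i(\P^7,\O(n))$ has dimension independent of $x$. By Grauert's theorem, $R^ip_*\L^{\otimes n}$ is therefore locally free and its formation commutes with base change, reducing the computation to cohomology of line bundles on $\P^7$.

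The classical Bott formulas on $\P^7$ give $H^0(\P^7,\O(n))=\Sym^nV^*$, where $V$ is the underlying eight-dimensional vector space, for $n\geq 0$; the intermediate groups $H^i(\P^7,\O(n))$ with $0<i<7$ vanish for all $n$; and by Serre duality $H^7(\P^7,\O(n))\cong H^0(\P^7,\O(-n-8))^*$ (using $\omega_{\P^7}=\O(-8)$), which vanishes precisely when $-n-8<0$, i.e.\ for $n\geq-7$. Identifying the eight-dimensional fiber of $\E$ at $x$ with $s_7$ via the description of the $\P^7=\P(s_7)$ sitting over $x$ in Proposition \ref{P:invariance} and its proof, one obtains $p_*\L^{\otimes n}|_x=\Sym^n s_7^*$ and the vanishing $R^ip_*\L^{\otimes n}=0$ for $i>0$ and $n\geq-7$.

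The main obstacle here is essentially organizational rather than substantive: one must make sure the projective bundle structure is matched compatibly with the line bundle $\L$, so that the direct image formula $p_*\O_{\P(\E)}(n)=\Sym^n\E^*$ (and the vanishing of higher direct images in the prescribed range) applies without a twist correction, and that the identification of $\E_x$ with $s_7$ is intrinsic enough for the homogeneous statement. Once this is verified, the proposition follows from the standard computation on a projective bundle and requires no further input specific to $\OGr(2,11)$.
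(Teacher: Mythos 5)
Your argument is correct and is essentially the same as the paper's, which simply cites Hartshorne's treatment of cohomology of relative projective spaces; you have filled in the standard details (the restriction of $\L$ to a fiber is $\O_{\P^7}(1)$, cohomology and base change via flatness and constancy of fiber dimensions, and the classical $\P^7$ computation with Serre duality giving the $n\ge-7$ threshold). The one point worth stating explicitly — which you flag as an "organizational" concern — is that the identification of the fiber $\E^1_x$ with $s_7^*$ comes from the concrete description $p^{-1}(x)\cong\P(s_7)$ in Proposition \ref{P:invariance}, not from an abstract choice of projective-bundle convention; once that is pinned down the rest is standard.
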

\begin{proof}
See \cite{Hartshorne}, Section 5  on cohomology of relative projective spaces.
\end{proof}
\begin{proposition}
The group $\Spin(11)$ acts transitively on $Y$.
\end{proposition}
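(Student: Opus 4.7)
The plan is to reduce transitivity on $Y$ to transitivity of the stabilizer $P_2 = St(x)$ on a single fiber of the quadric bundle $p\colon Y\to \OGr(2,11)$, and then to exploit the known orbit structure of $\Spin(7)$ acting on $\P(s_7)$.

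First I would argue as follows. Since the map $p\colon\wX\to \OGr(2,11)$ is $\Spin(11)$-equivariant and $\Spin(11)$ acts transitively on $\OGr(2,11)$, for any two points $y_1,y_2\in Y$ with $x_i=p(y_i)$ we can find $g\in\Spin(11)$ carrying $x_1$ to $x_2$; then $gy_1$ lies in the same fiber as $y_2$. Thus it suffices to establish that the stabilizer $P_2=St(x)$ acts transitively on the fiber $p^{-1}(x)\cap Y$ for one chosen $x\in \OGr(2,11)$.

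Next I would identify this fiber explicitly. By the proof of Corollary \ref{C:total}, the fiber $p^{-1}(x)$ is the projective space $\P(s_7\otimes f^-)\cong\P^7$ from the grading (\ref{E:grading}), and $p^{-1}(x)\cap Y$ is the complement $\P^7\setminus O_7$, which is precisely the $\Spin(7)$-invariant quadric $Q\subset\P(s_7)$ cut out by the inner product on $s_7$. By Proposition \ref{P:invariance} the unipotent radical $\rad$ of $P_2$ acts trivially on this $\P^7$, so the $P_2$-action factors through the Levi factor $\Spin(7)\times \GL(2)$; moreover the $\GL(2)$-factor acts by rescaling the weight vector $f^-$ and hence acts trivially on the projectivization. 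Consequently the action on $\P^7$ is the action of $\Spin(7)$ on $\P(s_7)$ induced by the spin representation.

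Finally, I would invoke Igusa's classification cited in the text (see the paragraph introducing (\ref{E:openorb7})): $\P(s_7)$ decomposes into exactly two $\Spin(7)$-orbits, the open orbit $O_7$ and its complement, the quadric $Q$. In particular $\Spin(7)$ acts transitively on $Q$, so $P_2$ acts transitively on $p^{-1}(x)\cap Y$, completing the proof.

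I do not foresee a genuine obstacle here; the only point that deserves a moment of care is checking that the quadric appearing in Corollary \ref{C:total} really is the $\Spin(7)$-invariant one on $s_7$ (as opposed to some deformation), but this follows directly from the identification of the fiber with $\P(s_7)$ obtained in the proof of Proposition \ref{P:fiber} and the fact that the complement of the open $\Spin(7)$-orbit on $\P(s_7)$ is a $\Spin(7)$-homogeneous hypersurface of the correct degree.
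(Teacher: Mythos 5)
Your proposal is correct and follows essentially the same route as the paper: reduce to a single fiber using $\Spin(11)$-transitivity on $\OGr(2,11)$, observe that the $St(x)$-action on the fiber quadric factors through $\Spin(7)\times\widetilde{\GL}(2)$, and invoke transitivity of $\Spin(7)$ on the quadric in $\P(s_7)$. The paper states these steps more tersely (it does not spell out that the unipotent radical and $\GL(2)$-factor act trivially on the projectivized fiber), so your elaboration is a faithful expansion of the same argument.
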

\begin{proof}
The action of the  stabilizer $St(x)$ of $x\in \OGr(2,11)$  on the quadric $Q_x=p^{-1}(x)\cap Y$ factors through $\Spin(7)\times \widetilde{\GL}(2)$. The group $\Spin(7)$ acts transitively on the quadric $Q_x$. With this remark proposition follows from transitivity of $\Spin(11)$ on $\OGr(2,11)$.

\end{proof}
\begin{remark}\label{r:fiberexact}
The isotropy group $St(x)$ acts through $\Spin(7)\times \widetilde{\GL}(2)$ on the fiber $\E^n_x$. The space $\bigoplus_{n\geq0} \E^n_x\cong \Sym[s^{-1 \ *}(x)]$ (see formula (\ref{E:grading}) for notations).

is a free graded commutative algebra. The data that completely characterizes  the graded bundle $\bigoplus_{n\geq0}\E^n$ is the  representation of $\Spin(7)\times \widetilde{\GL}(2)$ in  the fiber $\E^1_x$, isomorphic to $ s^{-1 \ *}$. 

\end{remark}
\begin{proposition}

Exceptional divisor $Y$ is a fiber bundle over $O_{15}\cong \OGr(5,11)$ with a fiber isomorphic to  $\Gr(2,5)$. The space $Y$ is isomorphic to the space of isotropic flags $\OFl(2,5,11)$.
\end{proposition}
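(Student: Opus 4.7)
The plan is to construct a $\Spin(11)$-equivariant morphism
\[\phi := (p|_Y, \Bl|_Y) \colon Y \to \OFl(2,5,11) \subset \OGr(2,11) \times \OGr(5,11),\]
verify that its image lies in the flag variety, and then conclude by homogeneity and a dimension count. Once $Y \cong \OFl(2,5,11)$ is established, the fiber-bundle structure over $\OGr(5,11)$ with fiber $\Gr(2,5)$ follows immediately from the projection $\OFl(2,5,11) \to \OGr(5,11)$. Here $\Bl|_Y$ lands in $X_{sing} = O_{15} \cong \OGr(5,11)$ by Proposition \ref{P:sing} combined with (\ref{E:isoorbits}), while $p|_Y$ is the restriction of (\ref{E:projection}).

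The substantive step is to check that for every $y \in Y$ the isotropic 2-plane $F_1 := p(y)$ is contained in the isotropic 5-plane $F_2 := \Bl(y)$. I would work in the affine chart of Proposition \ref{P:chart} centered at the pure spinor $\lambda_0 = 1 \in \Lambda^0 A'$, whose associated 5-plane under the spinor--plane correspondence (\ref{E:isoorbits}) is the annihilator $A$. In this chart the exceptional fiber is parameterized by $\tau_3 \in \Lambda^3 A'$ subject to the Plücker relation (\ref{E:reduced}). Using the $\gl_5$-grading (\ref{E:decomp5}) of $\Lambda^2 \vv^{11}$, together with the fact that $\Lambda^k A'$ carries $c$-weight $k - 5/2$ after the $\det^{-1/2}$ twist implicit in the construction of $s_{11}$, one checks that the leading contribution to $\Gamma^{ij}(\lambda_0 + \epsilon\tau_3,\lambda_0 + \epsilon\tau_3)$ along the exceptional direction has $c$-weight $-2$ and therefore lies in $\Lambda^2 A \subset \Lambda^2 \vv^{11}$. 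Under the $\widetilde{\GL}(5)$-equivariant identification $\Lambda^3 A' \cong \Lambda^2 A$ already invoked in Section \ref{S:decompgl}, the Plücker relation on $\tau_3$ translates precisely into the decomposability of the resulting element of $\Lambda^2 A$; hence $F_1$ is an honest 2-plane contained in $A = F_2$, and $\phi$ indeed factors through $\OFl(2,5,11)$.

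Finally, both $Y$ and $\OFl(2,5,11)$ are smooth projective $\Spin(11)$-homogeneous varieties of dimension $15 + 6 = 21$ (the first by Corollary \ref{C:total} and the transitivity of $\Spin(11)$ on $Y$ established just above, viewing it as a quadric bundle over the $15$-dimensional $\OGr(2,11)$; the second as a $\Gr(2,5)$-bundle over the $15$-dimensional $\OGr(5,11)$), and $\phi$ is $\Spin(11)$-equivariant by construction. An equivariant morphism between equidimensional homogeneous spaces of a connected group is étale onto its image; since both spaces are generalized flag varieties of the simply connected group $\Spin(11)$ and hence simply connected, the étale cover $\phi$ must be an isomorphism. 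The main obstacle is the weight/Plücker computation of the middle paragraph: it is the only place where one has to translate the formal blow-up data of Section \ref{S:decompgl} into a geometric incidence of flags, and everything else is then standard manipulation of rational homogeneous spaces.
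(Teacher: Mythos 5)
Your argument is correct but takes a genuinely different route from the paper's. The paper works locally: it uses the affine chart of Proposition \ref{P:chart} to exhibit the exceptional fiber of the blowup over a point of $O_{15}$ as $\Proj(\bigoplus_n I^n/I^{n+1})$ with $I$ generated by $\tau_3$, observes via (\ref{E:plucker}) that this fiber is the Pl\"ucker-embedded $\Gr(2,5)\subset\P(\Lambda^2 A)$, and then matches this bundle fiber-by-fiber with the fibration $\OFl(2,5,11)\to\OGr(5,11)$. You instead build a global $\Spin(11)$-equivariant morphism $\phi=(p|_Y,\Bl|_Y)$ into $\OGr(2,11)\times\OGr(5,11)$, reduce the problem to checking the incidence $p(y)\subset\Bl(y)$, verify it by a $c$-weight computation in the $\gl_5$-decomposition (\ref{E:decomp5}), and then invoke equidimensionality plus simple connectedness of generalized flag varieties to force $\phi$ to be an isomorphism. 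The two approaches spend their effort in different places: the paper's is more elementary and self-contained, getting the identification directly from the chart without any $\pi_1$ input, while yours avoids the explicit local model at the cost of the weight/Schur argument and the homogeneity--covering step, but has the advantage of making both structural maps $Y\to\OGr(2,11)$ and $Y\to\OGr(5,11)$ manifest as the two legs of the flag-variety diagram. Two small points you should tighten: (i) you need that the leading term $\Gamma^{ij}(1,\tau_3)$, as a $\widetilde{\GL}(5)$-intertwiner $\Lambda^3 A'\to\Lambda^2 A$, is not merely of the right weight but a nonzero multiple of the canonical identification used in (\ref{E:plucker}); one-dimensionality of the intertwiner space gives proportionality by Schur, and the nonvanishing of the scalar follows from surjectivity of $p|_Y$ onto $\OGr(2,11)$ (Corollary \ref{C:total}); (ii) when you call $\phi$ \'etale onto its image, it is worth noting the image is all of $\OFl(2,5,11)$ because both spaces are single $\Spin(11)$-orbits and $\phi$ is visibly nonconstant. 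Neither point is an error, just a place to fill in.
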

\begin{proof}

  By transitivity of $\Spin(11)$ action  on $O_{15}$ we get $\Bl(Y)=O_{15}$.

By Proposition \ref{P:sing}  and Corollary \ref{C:indeterm}  the scheme $X_{sing}$ coincides with the locus of indeterminacy of the map $p:X\rightarrow \OGr(2,11)$. The exceptional fiber of the blowup of $X$ along $X_{sing}$  in  a chart $\lambda_0\neq 0$  as in Proposition \ref{P:chart} is  $Y=\Proj(\bigoplus_{n\geq 0} I^n/I^{n+1} )$ (see \cite{Hartshorne} about the details of the  blowup construction). The ideal is generated by section $\tau_3$. 
Because of the isomorphism (\ref{E:isoorbits}) $O_{15}$ is equipped with 5-dimensional tautological bundle\[ \{(A,l)|l\in \mbox{isotropic } A\subset \vv^{11}\}\]
From formula (\ref{E:plucker}) we conclude that $Y$ is a subbundle of $\Spin(11)$-homogeneous bundle over $O_{15}$, with a fiber $\P(\Lambda^3A^{'})\cong \P(\Lambda^2A)$.
 Equation (\ref{E:plucker}) defines an embedding of  $\Gr(2,5)$ into  $\P(\Lambda^2A)$.

The fiber of projection $\OFl(2,5,11)\rightarrow \OGr(5,11)$  over a point $\lambda$ is canonically isomorphic to Grassmannian in  $\P(\Lambda^2A_{\lambda})$. From this we conclude that $Y$ is isomorphic to $\OFl(2,5,11)$.
\end{proof}

Let $ \O^{\oplus 11}_{\OGr(2,11)}$ be a trivial bundle over $\OGr(2,11)$ with a fiber isomorphic to $\vv^{11}$. Let $ i:\W\rightarrow \O^{\oplus11}_{\OGr(2,11)}$ be inclusion of the two-dimensional tautological bundle  The trivial bundle $ \O^{\oplus 11}_{\OGr(2,11)}$ is equipped with an inner product induced from $\vv^{11}$. A composition of $i$ with the dual map $i^* \O^{11}_{\OGr(2,11)}\rightarrow  \W^*$ is trivial, because $\W_x$ is  isotropic. We define $\V$ to be $\Ker\ i^*/\W$. The inner product  on $ \O^{11}_{\OGr(2,11)}$ can be pushed to $\V$. This  makes $\V$ a seven-dimensional $\Spin(11)$-homogeneous bundle, equipped with an algebraic orthogonal structure.
Let $V=\V_{x}$ be a fiber over $x\in \OGr(2,11)$.

\begin{proposition}\label{P:pureseven}
An isotropic Grassmannian $\OGr(3,7)\subset \P(\Lambda^3V)$ is isomorphic to a quadric $Q^6\subset \P(s_7)$.
\end{proposition}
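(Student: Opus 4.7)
The plan is to realise the claimed isomorphism through the classical ``pure spinor'' correspondence for $\Spin(7)$: to each maximal isotropic $3$-plane $F \subset V$ I attach a canonical line $\ell_F \subset s_7$ of spinors annihilated by $F$ under Clifford multiplication, then show that $F \mapsto \ell_F$ is a $\Spin(7)$-equivariant embedding $\OGr(3,7) \hookrightarrow \P(s_7)$ whose image is precisely the closed orbit $Q^6$.

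Concretely, I would fix a splitting $V = E \oplus E' \oplus L$ with $E, E'$ dual maximal isotropic $3$-planes and $L$ a one-dimensional non-degenerate complement, and use the standard identification $s_7 \cong \Lambda^{\bullet} E'$ as a $Cl(V)$-module in which elements of $E$ act by contraction through the pairing with $E'$, elements of $E'$ by wedge multiplication, and a chosen generator of $L$ by the grading sign $(-1)^{\deg}$. The subspace of $s_7$ killed by every element of $E$ is then exactly the top component $\Lambda^3 E'$, which is one-dimensional; by transitivity of $\Spin(7)$ on $\OGr(3,7)$ the same recipe produces a unique line $\ell_F$ for every $F$, giving a $\Spin(7)$-equivariant morphism $\varphi \colon \OGr(3,7) \to \P(s_7)$. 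For injectivity I would note that for any nonzero $\lambda \in s_7$ the annihilator $K(\lambda) = \{v \in V : v\cdot \lambda = 0\}$ is automatically isotropic: the Clifford identity $v_1 v_2 + v_2 v_1 = -2(v_1, v_2)$ applied to $\lambda$ forces $(v_1, v_2) = 0$ on $K(\lambda)$, so $\dim K(\lambda) \leq 3$; combined with $F \subset K(\ell_F)$ and $\dim F = 3$ this yields $F = K(\ell_F)$, recovering $F$ from $\ell_F$.

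To identify the image, I use that $\OGr(3,7)$ is a connected $\Spin(7)$-homogeneous variety of dimension $6$, hence $\varphi(\OGr(3,7))$ is a single $\Spin(7)$-orbit of dimension $6$ in $\P(s_7)$. The Igusa orbit decomposition already recorded in the paper (see (\ref{E:openorb7}): an open $7$-dimensional orbit $O_7$ whose complement is precisely the quadric $Q$ cut out by the $\Spin(7)$-invariant inner product) leaves only one candidate, so the image is $Q = Q^6$. Since $\varphi \colon \OGr(3,7) \to Q^6$ is then a $\Spin(7)$-equivariant bijective morphism between smooth projective homogeneous varieties of the same dimension, it is an isomorphism. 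The step most deserving of care will be verifying that $\ell_F$ depends algebraically on $F$, so that $\varphi$ is a morphism and not merely a set-theoretic bijection; this is handled by realising $\ell_F$ as the top exterior power of the image of the tautological subbundle on $\OGr(3,7)$ acting by Clifford multiplication on the trivial spinor bundle $s_7 \otimes \O_{\OGr(3,7)}$, which is manifestly algebraic.
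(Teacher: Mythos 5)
Your proof is correct in substance, but it travels in the opposite direction from the paper's and relies on a different key input. The paper starts from the quadric and uses the intertwiner $\gamma(3)\colon \Sym^2 s_7 \to \Lambda^3 V$ from (\ref{E:spin7dec}): the assignment $\lambda \mapsto \gamma(3)(\lambda,\lambda)$ sends $Q$ equivariantly into $\P(\Lambda^3 V)$, and since $\Lambda^3 V$ is an irreducible $\Spin(7)$-module, the image must be \emph{the} closed orbit there, which is the Pl\"ucker-embedded $\OGr(3,7)$. You instead start from $\OGr(3,7)$ and build the Cartan pure-spinor map $F \mapsto \ell_F$ via annihilator lines in the Clifford module, then identify the image as $Q$ by appealing to Igusa's two-orbit decomposition of $\P(s_7)$. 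Both arguments hinge on closed-orbit uniqueness, just invoked on opposite sides of the correspondence; yours is a bit more self-contained because it only needs the Clifford module structure and not the explicit $\gamma(3)$ intertwiner, while the paper's is shorter because the orbit of a highest-weight vector in $\P(\Lambda^3 V)$ is automatically $\OGr(3,7)$ without further identification work.

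One small slip to fix: with your stated conventions ($E$ acting by contraction, $E'$ by wedge on $s_7 \cong \Lambda^\bullet E'$), the subspace annihilated by all of $E$ is $\Lambda^0 E' = \mathbb{C}\cdot 1$, not the top component $\Lambda^3 E'$ (it is $E'$, acting by wedge, that kills $\Lambda^3 E'$). This does not affect the argument---you only need that the annihilator line is one-dimensional for some, hence by transitivity every, maximal isotropic $F$---but the identification should be corrected.
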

\begin{proof}
Let $s_7$ and  $V$ be as in equations (\ref{E:grading}) and  (\ref{E:decomposition1}).
Recall that there is 

 a $\Spin$-equivariant linear isomorphism for $\Sym^2 s_7$ (\ref{E:spin7dec}).
 
The trivial representation corresponds to  equation, that defines  quadric $Q\subset \P^7$ (\ref{E:quadric}). This quadric coincides with the orbit of the highest vector in projectivization of $\Lambda^3V$ under the map $\lambda\rightarrow \gamma(3)(\lambda,\lambda)$. By  uniqueness of the closed orbit, mentioned earlier, this orbit coincides with the Pl\"{u}cker embedding of $\OGr(3,7)$.
\end{proof}

\begin{corollary}
The structure of bundle of quadrics on $Y\cong \OFl(2,5,11)$ coincides the structure defined by the forgetful map $\OFl(2,5,11)\rightarrow \OGr(2,11)$.
\end{corollary}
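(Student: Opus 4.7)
The plan is to identify $p|_Y$ with the forgetful projection $\varpi:\OFl(2,5,11)\rightarrow \OGr(2,11)$, $(F_1\subset F_2)\mapsto F_1$, under the isomorphism $Y\cong \OFl(2,5,11)$ obtained in the previous proposition; once this identification is established, the two bundle-of-quadrics structures coincide tautologically. Both maps in question are $\Spin(11)$-equivariant: $p$ by virtue of its construction (\ref{E:projectionmain}) from the intertwiner $\Gamma^{ij}_{\alpha\beta}$, and $\varpi$ as the canonical forgetful morphism between $\Spin(11)$-flag varieties. Since $\Spin(11)$ acts transitively on $Y\cong\OFl(2,5,11)$, it suffices to verify the equality $p(y_0)=\varpi(y_0)$ at a single point $y_0\in Y$.

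First I would pick $y_0\in Y$ corresponding to an isotropic flag $F_1\subset F_2\subset \vv^{11}$ with $\dim F_1=2$, $\dim F_2=5$, so that $\varpi(y_0)=F_1$ by definition. The stabilizer $H=\Spin(11)_{y_0}$ is the parabolic subgroup preserving this flag. By $\Spin(11)$-equivariance of both maps, the points $p(y_0)$ and $\varpi(y_0)$ both lie in the $H$-fixed locus of $\OGr(2,11)$; that is, both are $H$-invariant 2-dimensional isotropic subspaces of $\vv^{11}$. The problem therefore reduces to showing $F_1$ is the only such subspace.

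For this last step I would note that $H$ preserves the complete flag $0\subset F_1\subset F_2\subset F_2^{\perp}\subset F_1^{\perp}\subset \vv^{11}$, and its Levi factor, isomorphic to $\GL(F_1)\times \GL(F_2/F_1)\times \mathrm{O}(F_2^{\perp}/F_2)$, acts irreducibly on each associated graded piece. Intersecting any $H$-invariant subspace of $\vv^{11}$ with this filtration and passing to the graded reduces the problem to identifying Levi-invariant graded subspaces, which by irreducibility must be sums of full graded pieces; hence every $H$-invariant subspace of $\vv^{11}$ is a flag member and $F_1$ is the unique 2-dimensional option. Equivariance then propagates $p(y_0)=F_1=\varpi(y_0)$ over the whole $\Spin(11)$-orbit $Y$.

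The only nontrivial point is the last representation-theoretic step, and it is routine once the Levi action on the associated graded of $\vv^{11}$ is tracked. As a sanity check one could also invoke Proposition \ref{P:pureseven}: the fiber of $\varpi$ over $x\in \OGr(2,11)$ is by definition the Grassmannian $\OGr(3,\V_x)$, which that proposition identifies with the quadric $Q\subset \P(s_7)$ appearing as the fiber of $p|_Y$ by Corollary \ref{C:total}; $\Spin(11)$-equivariance then forces the two fibrations to agree globally, corroborating the argument above.
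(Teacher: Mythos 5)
Your proof is correct and takes a genuinely different, more detailed route than the paper. The paper's entire proof is the single sentence ``This is a relative version of the previous statement,'' i.e.\ it relies on Proposition \ref{P:pureseven} to identify the fibers $\OGr(3,\V_x)\cong Q\subset\P(s_7)$ and leaves the identification of the two projection maps implicit. You instead argue \emph{directly} that $p|_Y$ coincides with the forgetful map $\varpi$, by $\Spin(11)$-equivariance of both maps plus transitivity of the action on $Y$, reducing to a single point $y_0$ and then classifying $H$-invariant $2$-planes in $\vv^{11}$ for the parabolic $H=\Spin(11)_{y_0}$. Your closing ``sanity check'' paragraph is in fact precisely the paper's argument, so you have both approaches in hand.

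One minor point on the representation-theoretic step: irreducibility of the Levi action on the associated graded pieces of $\vv^{11}$ alone does not immediately force an $H$-invariant subspace $W$ to be a flag member; you also need the unipotent radical $U\subset H$. The clean way to finish is to note that any nonzero $U$-invariant subspace $W$ contains a nonzero $U$-fixed vector (Lie--Kolchin), that $\vv^{11,U}=F_1$ for the parabolic stabilizing the flag, and hence $W\cap F_1\neq 0$; Levi-irreducibility of $F_1\cong\mathbb{C}^2$ then gives $F_1\subset W$, and $\dim W=2$ forces $W=F_1$. (Irreducibility of the graded pieces alone leaves open, a priori, a $W$ with $\mathrm{gr}(W)$ concentrated in the top piece $\vv^{11}/F_1^\perp$; the unipotent radical is what rules this out.) With that small repair the argument is airtight, and arguably more self-contained than the paper's.

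Your approach buys a cleaner logical separation: it shows the \emph{maps} agree without having to first understand the fibers, so the fiber identification of Proposition \ref{P:pureseven} becomes a consequence rather than an input. The paper's approach is shorter but presupposes that an equivariant map $Y\to\OGr(2,11)$ with the right fibers must be the forgetful one, which is exactly the content of your stabilizer computation.
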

\begin{proof}
This is a relative version of the previous statement.
\end{proof}

\begin{proposition}\label{E:blowvanish}

Cohomology groups $H^i(\wX,\L^{\otimes n})$ coincide with $H^i(X,\O(n))$.
\end{proposition}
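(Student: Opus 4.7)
Reduce the claim to the single identity $R\Bl_*\O_{\wX}\cong \O_X$ via the projection formula and the Leray spectral sequence for $\Bl$. Since $\L^{\otimes n}=\Bl^*\O(n)$ with $\L^{\otimes n}$ locally free, the projection formula gives
\[ R\Bl_*\L^{\otimes n}\;\cong\;\O(n)\otimes^{\mathbf L} R\Bl_*\O_{\wX}. \]
Once $R\Bl_*\O_{\wX}=\O_X$ is established, the Leray spectral sequence collapses and yields the desired isomorphism $H^i(\wX,\L^{\otimes n})\cong H^i(X,\O(n))$ for every $n$.

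\textbf{Step 1: $\Bl_*\O_{\wX}=\O_X$.} The chart descriptions of Section \ref{S:local} show that $X$ is normal. Near a point of $O_{22}$ the local ring is the polynomial-type ring of Corollary \ref{L:locsmooth}, and near a point of $O_{15}$ it is a localization of the algebra $B$ of Proposition \ref{P:chart}, i.e.\ the coordinate ring of the affine cone over the Pl\"ucker image of $\Gr(2,5)$, a classical normal Cohen--Macaulay domain. As $\wX$ is integral (since it is smooth and irreducible) and $\Bl$ is proper birational, $\Bl_*\O_{\wX}$ is a coherent $\O_X$-algebra, integral over $\O_X$ and birationally isomorphic to it; normality forces $\Bl_*\O_{\wX}=\O_X$ (Zariski's main theorem).

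\textbf{Step 2: $R^i\Bl_*\O_{\wX}=0$ for $i>0$.} Over $O_{22}$ the map $\Bl$ is an isomorphism, so only the stalks at $x\in O_{15}$ require attention. There the scheme-theoretic fiber is the Grassmannian $Y_x\cong \Gr(2,5)$. By the theorem on formal functions, $(R^i\Bl_*\O_{\wX})^{\wedge}_x=0$ follows from $\varprojlim_n H^i(Y_n,\O_{Y_n})=0$, where $Y_n$ is the $n$-th infinitesimal neighborhood of $Y_x$ in $\wX$. Filtering $\O_{Y_n}$ by powers of the ideal $\I$ of $Y_x$ and using the short exact sequences $0\to \I^n/\I^{n+1}\to \O_{Y_{n+1}}\to \O_{Y_n}\to 0$ reduces the task to showing $H^i(\Gr(2,5),\Sym^n\N^*)=0$ for all $n\ge 0$ and $i>0$, where $\N^*$ is the conormal bundle of $Y_x$ in $\wX$. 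Since $Y_x$ is homogeneous under the parabolic $P_5\subset \Spin(11)$ stabilising $x$, the bundle $\N^*$ is $P_5$-homogeneous, and its structure is determined by the $\widetilde{\GL}(5)$-decomposition of $s_{11}$ displayed in Section \ref{S:decompgl}. Borel--Weil--Bott on $\Gr(2,5)$ then yields the required vanishings, the $n=0$ case being the classical fact that $H^i(\Gr(2,5),\O)=0$ for $i>0$.

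\textbf{Main obstacle and an alternative.} The computational heart is the Borel--Weil--Bott analysis of $\Sym^n\N^*$ on $\Gr(2,5)$, which in turn hinges on identifying $\N^*$ as an explicit homogeneous bundle using the Pl\"ucker geometry of Section \ref{S:singchart}. A cleaner alternative that bypasses the formal-functions step is to compute $H^i(\wX,\L^{\otimes n})$ through the smooth projection $p$: by Proposition \ref{r:direct}, $R^jp_*\L^{\otimes n}=0$ for $j>0$ and $n\ge 0$, so $H^i(\wX,\L^{\otimes n})\cong H^i(\OGr(2,11),\E^n)$, and one can attempt to match this directly with $H^i(X,\O(n))$ via Borel--Weil--Bott on the base $\OGr(2,11)$ together with the known generation of $A$ in low degrees.
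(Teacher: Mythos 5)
Your proposal follows the same strategy the paper uses: reduce the claim to $R\Bl_*\O_{\wX}\cong\O_X$ via the projection formula and the Leray spectral sequence, and then control the higher direct images by looking at the exceptional fibers, which are Grassmannians $\Gr(2,5)$. The paper's stated proof is considerably more terse than yours: it records only the vanishing $H^i(\Gr(2,5),\O)=0$ for $i>0$ and then says that the proposition ``follows immediately from the spectral sequence for direct images of $\Bl$.'' You correctly fill in the missing scaffolding (normality of $X$ plus Zariski's main theorem for $\Bl_*\O_{\wX}=\O_X$), and, more importantly, you correctly observe that the theorem on formal functions requires cohomology vanishing not just on the reduced fiber but on all its infinitesimal thickenings, i.e.\ vanishing of $H^i(\Gr(2,5),\Sym^n\N^*)$ for every $n\ge 0$. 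The paper's proof gives only the $n=0$ case and silently asserts the rest; you flag the Borel--Weil--Bott analysis of $\Sym^n\N^*$ as ``the computational heart'' without carrying it out, which is an honest statement of exactly where both your proof and the paper's proof stop short. Your suggested alternative through the smooth projective bundle $p:\wX\to\OGr(2,11)$, using Proposition \ref{r:direct} to kill the higher $R^jp_*$, is a genuinely different route not taken by the paper; it would trade the formal-functions computation on the exceptional fibers for a comparison of $H^i(\OGr(2,11),\E^n)$ with $H^i(X,\O(n))$, and could in principle be cleaner since $p$ is flat while $\Bl$ is not, but it too would need a nontrivial matching step (the paper instead establishes the relation between $\H=p_*\Bl^*\cGr$ and $\cGr$ in Section \ref{S:applications} by using this very proposition as input).
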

\begin{proof}
The exceptional fiber of $\Bl$ over $x\in \OGr(5,11)$ is isomorphic to $\Gr(2,5)$. By Borel-Weil-Bott theory (see e.g. \cite{Calib}) \[H^i(\Gr(2,5),\O)=\begin{cases}\mathbb{C}&i=0\\\{0\}& i\neq 0 \end{cases}\]
Proposition follows immediately from the spectral sequence for  direct images of $\Bl$.
\end{proof}

\subsection{On the dualizing sheaf of $X$}
Existence of the dualizing sheaf $\omega$ of $X$ is far from obvious, because $X$ is not smooth. Still we shall now that $\omega$ is well defined an isomorphic $ \O_X(-16)$.

We begin this section  reminder of  some basic definitions used in the coherent duality theory.

Let $M$ be a module over commutative ring $R$.
\begin{definition} An element $x \in R$ is called a nonzero divisor on $M$ if $xz = 0$ for $z \in M$ then $z=0$. A sequence $x_1,x_2,...,x_n$ of elements in $R$ is called a regular M-sequence if $x_i$ is a nonzero divisor on $M/(x_1,x_2,...,x_{i?1})M$ for all $i = 1,2,...,n$ and $M/(x_1,x_2,...,x_{n})M\neq 0$. 

\end{definition}

\begin{definition} A local ring $(A,\m)$ is a Cohen-Macaulay ring if there exists a regular sequence $x_1,x_2,...,x_{n}\in \m$ such that the quotient ring is Artinian $A/(x_1,x_2,...,x_{n})$. The maximal $n$ in this case is call the depth.
\end{definition}
An ideal of a commutative ring is said to be {\it irreducible} if it cannot be written as a finite intersection of ideals properly containing it.
\begin{definition} 
A local ring $(A,\m)$ is Gorenstein if for any maximal regular sequence $x_1,x_2,...,x_{n}\in \m$  the ideal $(x_1,x_2,...,x_{n})\subset \m$ is irreducible. 
\end{definition}
The reader may wish to consult \cite{Matsumura} on the details about Cohen-Macaulay (CM) and Gorenstein (G) properties of  rings. In this book the reader can find a generalization of the definitions to nonlocal rings, find that a tensor product  of two CM (G) algebras over the ground field is still CM (G) algebra, see the proof that classes of CM and G algebras are stable under localization. An implication  G$\Rightarrow$CM and the fact that polynomial algebras are Gorenstein are also proved in this book.

It has been shown in \cite{Hochster} that Grassmann varieties have homogeneous coordinate rings which are Cohen-Macaulay. It is also shown that  they are Gorenstein. 

Putting this information together we come to the following proposition.
\begin{proposition}
Local rings of $X$ are normal and Gorenstein.
\end{proposition}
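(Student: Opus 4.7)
The plan is to reduce the statement to a calculation on two explicit affine covers of $X$, one of which is already known to be smooth and the other of which is essentially a product with an affine cone over a Grassmannian. Since normality and the Gorenstein property are local and stable under localization, it suffices to check that each of these two kinds of coordinate rings is Gorenstein, and then conclude normality by Serre's criterion from the codimension of the singular locus.

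First I would recall that by Corollary \ref{L:locsmooth} the open set $X \setminus O_{15}$ is covered by affine charts whose coordinate ring is a localization of a polynomial algebra. Any such ring is regular, hence trivially Gorenstein and normal, so there is nothing to check at a point $\lambda \in O_{22}$. The substantive case is a neighborhood of a point $\lambda \in O_{15}$. By Proposition \ref{P:chart}, the coordinate ring of such a chart is
\[
\mathbb{C}[u,u^{-1}, u^{i}] \otimes B,
\]
where $B$ is the homogeneous coordinate ring of the Pl\"ucker-embedded Grassmannian $\Gr(2,5) \subset \P^9$. By transitivity of the $\Spin(11)$-action on $O_{15}$ (and the $\Spin(11)$-invariance of both properties in question), it is enough to treat this single model chart and then propagate the conclusion.

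Next I would assemble the cited facts: by Hochster's result \cite{Hochster}, $B$ is Gorenstein; the polynomial algebra $\mathbb{C}[u, u^{i}]$ is Gorenstein; the Gorenstein property is stable under tensor product over $\mathbb{C}$ and under localization (in particular under inverting $u$), as explained in \cite{Matsumura}. Combining these three facts, the model ring $\mathbb{C}[u,u^{-1}, u^{i}] \otimes B$ is Gorenstein, and hence so is every further localization at a prime, i.e.\ every local ring of $X$ at a point of $O_{15}$.

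For normality I would invoke Serre's criterion $R_1 + S_2$. The implication Gorenstein $\Rightarrow$ Cohen--Macaulay $\Rightarrow S_2$ is standard \cite{Matsumura}, so $S_2$ holds at every point. For $R_1$, I would use Proposition \ref{P:sing}: the singular locus of $X$ coincides with $O_{15}$, which has dimension $15$, while $\dim X = 22$. Hence the singular locus has codimension $7 \geq 2$, so $X$ is regular in codimension one and $R_1$ is satisfied. Therefore every local ring of $X$ is normal.

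The only real obstacle I anticipate is bookkeeping: checking that Hochster's theorem is applied to the correct graded ring (the Pl\"ucker ring $B$ as cut out by the Pfaffians of Proposition \ref{P:chart}, not some other projective model of $\Gr(2,5)$), and checking that the tensor product appearing in Proposition \ref{P:chart} really is an external tensor product over the ground field $\mathbb{C}$, so that the preservation theorems for Gorensteinness apply verbatim. Once these two points are verified, the argument outlined above concludes the proof.
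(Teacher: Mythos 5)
Your proof is correct, and for the Gorenstein part it coincides with the paper's argument (Hochster's theorem that the Pl\"ucker coordinate ring of $\Gr(2,5)$ is Gorenstein, combined with the Matsumura-cited facts that Gorensteinness passes to tensor products over the ground field and to localizations, applied to the two model charts of Corollary \ref{L:locsmooth} and Proposition \ref{P:chart}). Where you genuinely diverge is in the normality claim. The paper treats normality by separate, direct citation: normality of polynomial rings and stability under localization are taken from Eisenbud, and normality of the Pl\"ucker ring $B$ is taken from Igusa's paper on the arithmetic normality of Grassmannians (\cite{normIgusa}). You instead deduce normality from what you have already proved, via Serre's criterion $R_1+S_2$: Gorenstein $\Rightarrow$ Cohen--Macaulay gives $S_2$ everywhere, and Proposition \ref{P:sing} (singular locus $= O_{15}$ of codimension $22-15=7\ge 2$) gives $R_1$. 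Both routes are sound; yours is slightly more economical in references since it does not need the separate normality result for $B$, but it leans on Serre's criterion, whereas the paper keeps the two properties logically independent by quoting a dedicated theorem for each. A small bookkeeping slip: you write the chart ring as $\mathbb{C}[u,u^{-1},u^{i}]\otimes B$ where the paper has extra free coordinates (the $\lambda_1$ and $\lambda_2$ components both survive as free variables); this does not affect the argument, since in any case the tensor factor is a localization of a polynomial ring.
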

\begin{proof}
Normality of the polynomial algebra is established for example in \cite{ComEisenbud} Proposition 4.10. Relation of normalization and localization is discussed in  \cite{ComEisenbud} Proposition 4.12. 
Proposition follows from the structure of local rings of $X$,  established in Corollary \ref{L:locsmooth} and Proposition \ref{P:chart}. Normality of algebra $B$ (\ref{E:bdef}) has been verified in \cite{normIgusa}.
\end{proof}

We define the Hilbert function of a finitely generated local ring $(A,\m)$ over $\mathbb{C}$ by the formula $A(t)=\sum \dim \m^i/\m^{i+1}t^i$. If $A$ is Cohen-Macaulay we can extract the maximal length  of the regular sequence using the following procedure. The formal power series $A(t)$ is in fact a rational function $q(t)/(1-t)^d$ (see \cite{ComEisenbud}). If numerator and denominator are relatively prime, then the depth is equal to $d$. In the case of algebra $B$ the depth is $7$. It coincides with the dimension of the affine cone over $\Gr(2,5)$.

Let $\Omega^i_Z$ be a sheaf of algebraic differential forms on $Z$ (see \cite{Hartshorne} for details)
When  $Z$ is smooth the dualizing complex $K^{\bullet}$ coincides with $\omega_Z=\Omega^{\dim Z}$. A general theory of coherent duality (see \cite{DualHartshorne}) furnish any scheme with a dualizing complex $K^{\bullet}$. If $Z$ has a  Gorenstein singularity then the dualizing sheaf is quasi-isomorphic to an invertible sheaf (\cite{DualHartshorne} Theorem 9.1). This implies the following.
\begin{proposition}
The dualizing complex $K^{\bullet}$ on $X$ is quasi-isomorphic to some invertible sheaf $\omega$.
\end{proposition}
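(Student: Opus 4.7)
The plan is to deduce this directly from the Gorenstein property just established, via Theorem 9.1 of \cite{DualHartshorne}. First I would note that $X$ is a projective scheme over $\mathbb{C}$, since it is defined as the projective spectrum of $A$ and sits inside $\P^{31}$. Hence by the general formalism of coherent duality, $X$ admits a dualizing complex $K^{\bullet}$, unique up to quasi-isomorphism and shift.

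Second, I would invoke the preceding proposition: every local ring $\O_{X,x}$ is Gorenstein. Since $X=\overline{O}_{22}$ is irreducible of pure dimension $22$ (established earlier), and Gorenstein implies Cohen--Macaulay, the cohomology sheaves of $K^{\bullet}$ vanish in all but one degree. Thus $K^{\bullet}$ is quasi-isomorphic to a single coherent sheaf $\omega$ placed in that degree. Theorem 9.1 of \cite{DualHartshorne} then states that on a locally Gorenstein scheme the dualizing sheaf is locally free of rank one, which gives the desired invertibility of $\omega$ and completes the proof.

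The main obstacle is not really mathematical but a matter of bookkeeping: one has to verify that the Gorenstein property asserted in the preceding proposition actually holds pointwise on all of $X$, not just on a dense locus. This is checked by observing that the affine charts constructed in Section \ref{S:local} — those of Corollary \ref{L:locsmooth} covering the smooth stratum $O_{22}$, together with those of Proposition \ref{P:chart} covering the singular stratum $O_{15}$ — exhaust $X$ by the $\Spin(11)$-equivariance of the whole construction, and the model rings appearing there, namely $(S)^{-1}\mathbb{C}[w^1,w^2]\otimes\Sym[s_7^*+s_7^*+s_7^*]$ and $\mathbb{C}[u,u^{-1},u^{ij}]\otimes B$, are Gorenstein by stability of the Gorenstein property under polynomial extensions, tensor products over $\mathbb{C}$, and localization, combined with the result of \cite{Hochster} that the homogeneous coordinate ring $B$ of $\Gr(2,5)$ is Gorenstein.
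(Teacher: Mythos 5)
Your proposal is correct and follows essentially the same route as the paper: appeal to the general theory of coherent duality for existence of $K^{\bullet}$, then use the Gorenstein property of the local rings (established in the preceding proposition via the affine charts and \cite{Hochster}) together with Theorem 9.1 of \cite{DualHartshorne} to conclude that $K^{\bullet}$ is quasi-isomorphic to an invertible sheaf. Your added remarks on Cohen--Macaulayness forcing concentration in a single degree and on the $\Spin(11)$-equivariant exhaustion of $X$ by the model charts are just the details the paper leaves implicit.
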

\begin{proposition}
Let $j:O_{22}\rightarrow X$ we the open embedding. We have an isomorphism
\[\omega\cong j_*j^*\omega.\]
\end{proposition}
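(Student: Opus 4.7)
The plan is to deduce the claim from the algebraic Hartogs principle applied to the invertible sheaf $\omega$.

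First, I assemble the ingredients already established in the paper. The scheme $X$ is reduced and irreducible of dimension $22$ (Corollary \ref{C:reduced}), and its local rings are normal by the preceding proposition. By Proposition \ref{P:sing}, the singular locus is $X_{sing}=O_{15}$, which has dimension $15$; consequently the complement $X\setminus O_{22}=O_{15}$ has codimension $22-15=7\geq 2$. Finally, the dualizing sheaf $\omega$ is invertible by the preceding proposition.

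With these pieces in place, the claim follows from a general fact: if $X$ is a normal Noetherian scheme, $Z\subset X$ is a closed subscheme of codimension $\geq 2$, $j:X\setminus Z\hookrightarrow X$ is the open complement, and $\mathcal{L}$ is any invertible sheaf on $X$, then the adjunction morphism $\mathcal{L}\to j_*j^*\mathcal{L}$ is an isomorphism. The assertion is local on $X$, and since $\omega$ is locally free of rank one, it suffices to cover $X$ by opens $V$ over which $\omega$ trivializes; on each such $V$ the claim reduces to showing that $\mathcal{O}_V\to j_*\mathcal{O}_{V\cap O_{22}}$ is an isomorphism. This last statement is the standard algebraic Hartogs principle, namely that regular functions on a normal scheme extend uniquely across a closed subscheme of codimension $\geq 2$ (equivalently, at every point of $O_{15}$ the local ring of $X$ has depth $\geq 2$ by normality, since $\dim \geq 2$ and normality implies Serre's condition $S_2$).

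There is no serious obstacle in this argument; the only point requiring a brief comment is the legitimacy of the local reduction from $\omega$ to $\mathcal{O}_X$, which is automatic because both $\omega$ and $j_*j^*\omega$ commute with restriction to opens and with tensoring against any invertible sheaf that trivializes $\omega$ locally.
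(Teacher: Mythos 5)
Your proof is correct, and it takes a genuinely more economical route than the paper's. The paper proceeds via the local cohomology exact sequence
\[0 \to \Gamma_{O_{15}}\omega \to \omega \to j_*j^*\omega \to R^1\Gamma_{O_{15}}\omega \to 0\]
and kills the outer terms by exhibiting the local ring near $O_{15}$ explicitly as (a localization of) the Cohen--Macaulay algebra $B$ of the cone over $\Gr(2,5)$, then invoking the vanishing $\lim_n \Ext^i_B(B/I^n,B)=0$ for $0\le i\le 6$ from \cite{DualHartshorne} and \cite{MacdonaldandRYSharp}. That argument actually establishes vanishing of $R^i\Gamma_{O_{15}}\omega$ for all $i$ up to one less than the codimension, which is more than is needed. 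Your argument isolates exactly what is required: the adjunction map $\omega \to j_*j^*\omega$ is an isomorphism as soon as $H^0_{O_{15}}(\omega)=H^1_{O_{15}}(\omega)=0$, and since $\omega$ is invertible this reduces locally to $\mathcal{O}_X \to j_*\mathcal{O}_{O_{22}}$ being an isomorphism, which is algebraic Hartogs: normality gives Serre's $S_2$, and $O_{15}$ has codimension $7\ge 2$, so the local rings have depth $\ge 2$ along $O_{15}$ and sections extend. You thus avoid the explicit local-ring analysis and the two citations by using only the normality of $X$ (already on record) instead of the full Cohen--Macaulay/Gorenstein structure. Both proofs are in the end local-cohomology arguments, but yours uses the weakest hypotheses that suffice.
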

\begin{proof}
The group of sections of an abelian sheaf $\F$ on $X$, which
have support in $Z$, is denoted by $\Gamma_Z \F$.
We have a long exact sequence of sheaves (see e.g \cite{DualHartshorne} p.220)
\[ 0 \rightarrow \Gamma_{O_{15}}\omega \rightarrow \omega \rightarrow j_*j^*\omega  \rightarrow R^1\Gamma_{O_{15}}\omega \rightarrow 0\]

Let $\I$ be a sheaf of ideals that defines subscheme $Z$ and let $\F$ be a coherent sheaf. Then  according to \cite{DualHartshorne}
$R^i\Gamma_{Z}\F$ is $\lim_n \cExt_{\O}^i(\O/I^n,\F)$. 

Suppose $Z=O_{15}$. Computations of $\lim_n \cExt_{\O}^i(\O/I^n,\omega)$ can be done locally in the  chart defined in Proposition \ref{P:chart}. The space $\lim_n \cExt_{\O}^i(\O/I^n,\omega)(U)$ equal to some localization of $\mathbb{C}[u,u^{-1},u^{ij}]\otimes \lim_n \Ext_{B}^i(B/I^n,B)$.  By result of   (\cite{DualHartshorne}3.10; \cite{MacdonaldandRYSharp} 2.1) $ \lim_n \Ext_{B}^i(B/I^n,B)=0$ for $0\leq i\leq 6$. 
\end{proof}
\begin{proposition}\label{P:omega}
There is an isomorphism $\omega=\O_X(-16)$.  
\end{proposition}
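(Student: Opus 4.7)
The plan is to compute $\omega|_{O_{22}}$ via the desingularization $\wX$ and then extend. Since $X$ is normal and the complement $O_{15}=X\setminus O_{22}$ has codimension seven, any invertible sheaf $L$ on $X$ satisfies $L\cong j_*j^*L$ by Hartogs; combined with $\omega\cong j_*j^*\omega$ from the preceding proposition, it suffices to produce an isomorphism of restrictions $\omega|_{O_{22}}\cong \O_X(-16)|_{O_{22}}$ and then apply $j_*$. The blowdown $\Bl:\wX\to X$ is an isomorphism on $\wX\setminus Y\cong O_{22}$, so the task reduces to showing $K_{\wX}|_{\wX\setminus Y}\cong \L^{-16}|_{\wX\setminus Y}$, where $\L=\Bl^*\O_X(1)$.

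I compute $K_{\wX}$ from the projective bundle structure $p:\wX=\P(\E)\to \OGr(2,11)$ with $\rank\E=8$. The relative Euler sequence yields $K_{\wX/\OGr(2,11)}\cong p^*\det\E\otimes\L^{-8}$, and $K_{\OGr(2,11)}\cong \O_{\OGr(2,11)}(-8)$ (the Fano index equals eight; equivalently, the sum of the non-Levi positive roots in decomposition (\ref{E:decomp2}) equals $8(e_1+e_2)=8\omega_2$). Hence
\[
K_{\wX}\cong p^*\bigl(\O_{\OGr(2,11)}(-8)\otimes \det\E\bigr)\otimes \L^{-8}.
\]

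The central step is the identification of $\det\E$. By Remark \ref{r:fiberexact}, $\E_x^*\cong s^{-1}_x\subset s_{11}$, the eight-dimensional $c$-weight $-1$ subspace. Its spinor weights are the eight vectors $(-\tfrac12,-\tfrac12,\pm\tfrac12,\pm\tfrac12,\pm\tfrac12)$, summing to $-4\omega_2$. Dualising, $\det\E$ carries weight $+4\omega_2$ at the base point $U'$ (the lowest weight two-plane stabilized by $P_2$). Since the Plücker line bundle $\O_{\OGr(2,11)}(1)$ has weight $\omega_2$ at $U'$, we conclude $\det\E\cong \O_{\OGr(2,11)}(4)$, and therefore $K_{\wX}\cong p^*\O_{\OGr(2,11)}(-4)\otimes \L^{-8}$.

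The map $p$ is given by the quadratic polynomials $\lambda\mapsto [\Gamma^{ij}_{\alpha\beta}\lambda^\alpha\lambda^\beta]$, so $p^*\O_{\OGr(2,11)}(1)|_{O_{22}}\cong \O_X(2)|_{O_{22}}$. Substituting gives
\[
K_{\wX}|_{\wX\setminus Y}\cong \O_X(-8)|_{O_{22}}\otimes \O_X(-8)|_{O_{22}}=\O_X(-16)|_{O_{22}},
\]
and the reduction of the first paragraph promotes this to $\omega\cong \O_X(-16)$. The main obstacle is the identification of $\det\E$; the character computation via Remark \ref{r:fiberexact} dispatches it cleanly, with the remaining steps being standard projective-bundle formulas and the degree count for the quadratic map $p$.
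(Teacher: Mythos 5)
Your proof is correct, and it arrives at the answer by a genuinely different path than the paper's. The paper works \emph{directly} on $O_{22}$ as a $\Spin(11)$-homogeneous space: since $j^*\omega=\omega_{O_{22}}$ is an equivariant line bundle, it is determined by the character of the isotropy group (Levi factor $G_2\times\GL(2)$) on the fiber at $\lambda$, and the paper simply compares the $\GL(2)$-character of $\Lambda^{22}(T_\lambda O_{22})^*$, read off from the graded pieces $V+\Ad(\so_{11})_1+\Ad(\so_{11})_2$ of decomposition~(\ref{E:decomp2}), with the $\GL(2)$-character of $\O(1)_\lambda$, read off from~(\ref{E:spinorG2}). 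That is a one-step weight count on $O_{22}$. You instead route the computation through the desingularization: compute $K_{\wX}$ via the projective-bundle relative Euler sequence over $\OGr(2,11)$ (which requires identifying $\det\E$ via Remark~\ref{r:fiberexact}, the index $\ind\OGr(2,11)=8$ from Proposition~\ref{P:topogr}, and the relation $p^*\O_{\OGr(2,11)}(1)|_{O_{22}}\cong\O_X(2)|_{O_{22}}$ coming from $p$ being defined by quadrics), then restrict to $\wX\setminus Y\cong O_{22}$. Your identification $\det\E\cong\O_{\OGr(2,11)}(4)$ by summing the eight spinor weights of $s^{-1}$ is correct, and the rest is the standard bundle formula. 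Each approach has a virtue: the paper's is shorter and needs only the local isotropy data on $O_{22}$, while yours makes the relation $K_{\wX}=p^*\O_{\OGr(2,11)}(-4)\otimes\L^{-8}$ explicit, which is independently useful and in fact recovers the discrepancy formula of Proposition~\ref{P:canonical} immediately (compare: $\Bl^*\omega_X\otimes\O(4Y)=\L^{-16}\otimes\O(4Y)$, and on a fiber of $p$ one has $\L^{2}=\O(Y)$ up to $p^*$-twist because the quadrics cut $Y$ out of a $p$-fiber with multiplicity one in $|\L^{\otimes 2}|$). Both proofs implicitly use $j^*\omega\cong\omega_{O_{22}}$ and the codimension-two extension across $O_{15}$; you state these, the paper cites them.
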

\begin{proof}
The orbit $O_{22}\subset X$ is a homogenous space of $\Spin(11)$.  The sheaf $j^*\omega$ must be (\cite{DualHartshorne}) a dualizing sheaf on $O_{22}$.  It is isomorphic to $\omega_{O_{22}}$ because $O_{22}$ is smooth. It is a $\Spin(11)$-homogeneous sheaf, and is completely characterized by representation of isotropy group $St(\lambda)$  in the fiber $\omega_{O_{22} \lambda}$. Comparing representations of $\GL(2)$ in $\Lambda^{22}[V+\Ad(\mathfrak{so}_{11})_2+\Ad(\mathfrak{so}_{11})_1]$ (see formula \ref{E:decomp2}) with representation in $\O(1)_{\lambda}$ (this can be readily extracted from decomposition (\ref{E:spinorG2})) we arrive at  the proof.
\end{proof}

The principal statement of the coherent duality theory adapted to our setup takes the form of existence  a canonical non-degenerate pairing \[H^0(X,\O(n))\otimes H^{22}(X,\O(-16-n))\rightarrow \mathbb{C}.\]

\subsection{On the topology  of $X$}
In this section we  compute basic topological invariants of $\wX$-the Poincar\'{e} polynomial of the Chow groups.

Chow groups $CH_k(G/P)$ of complete homogenous spaces of semi-simple groups has been studied in \cite{Chevalley} \cite{BGG}. They are generated by Schubert cells and have no torsion. The map to singular homology $CH_k(G/P)\rightarrow H_{2k}(G/P)$ is an isomorphism. Hodge cohomology $H^{ij}=H^{i}(G/P,\Omega^j )$ is in nonzero iff $i=j$. From de Rham isomorphism we have $\rank CH_k(G/P)=\dim(H^{p,p})$, $p+n=\dim_{\mathbb{C}}G/P$.

Our plan is to take advantage of fiber-bundle structure on $\wX$ and use Leray-Hirsh arguments for homology computation. For this we need to know homology of the base of the fibration. We study this in the following proposition.
\begin{proposition}\label{P:topogr} \ 
\\
\begin{enumerate}
\item $\dim_{\mathbb{C}}\OGr(2,11)=15$.
\item The index $\ind$ is defined as $\O(-\ind) =\omega_{\OGr(2,11)}$, where $\O(-1)$ is the dual of the  line bundle, corresponding to the Pl\"ucker embedding.

Then $\ind \OGr(2,11)=8$.

\item Let $H_{\OGr(2,11)}(t)$ be the Poincar\'{e} series of the Hodge cohomology \\ $\sum_{ij=0}^{15}\dim H^{p,p}(\OGr(2,11))t^p$. Then 
\[\begin{split}&H_{\OGr(2,11)}(t)=1+t+2\,{t}^{2}+2\,{t}^{3}+3\,{t}^{4}+3\,{t}^{5}+4\,{t}^{6}+4\,{t}^{7}\\&+
4\,{t}^{8}+4\,{t}^{9}+3\,{t}^{10}+3\,{t}^{11}+2\,{t}^{12}+2\,{t}^{13}+
{t}^{14}+{t}^{15}.
\end{split}\]
\end{enumerate}
\end{proposition}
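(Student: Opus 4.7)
The plan is to exploit the presentation $\OGr(2,11)=\Spin(11)/P_2$, with $P_2$ as in \eqref{E:pstab}, and to reduce each of the three assertions to root system combinatorics of type $B_5$. The discussion preceding the proposition already identifies $CH_{\bullet}$ with Schubert classes and gives $\rank CH_k=\dim H^{p,p}$ for $p+k=15$, so the Hodge Poincar\'e polynomial in part 3 equals the generating polynomial of the length function on the coset space $W(B_5)/W_L$.

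For part 1, the decomposition \eqref{E:decomp2} exhibits the opposite nilpotent radical $\Ad(\mathfrak{so}_{11})_1+\Ad(\mathfrak{so}_{11})_2=\vv^7\otimes U+\Lambda^2 U$ of complex dimension $14+1=15$; this is canonically the tangent space of $\OGr(2,11)$ at the base point, so $\dim_{\mathbb{C}}\OGr(2,11)=15$. Equivalently, $\dim\Spin(11)-\dim P_2=55-(25+15)=15$.

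For part 2, I will compute $-K_{\OGr(2,11)}$ as a sum of roots. In the standard basis $e_1,\ldots,e_5$ for the weight lattice of $B_5$, the positive roots lying in the nilpotent radical opposite to $P_2$ are
\[\{e_i\pm e_j:i\in\{1,2\},\ 3\le j\le 5\}\cup\{e_1+e_2,\,e_1,\,e_2\},\]
which indeed number $12+3=15$. Summing them, the contributions of $e_3,e_4,e_5$ cancel in pairs between $e_i-e_j$ and $e_i+e_j$, leaving $8(e_1+e_2)$. The weight $e_1+e_2$ is the highest weight of $\Lambda^2\vv^{11}$, so it corresponds to the Pl\"ucker line bundle $\O(1)$. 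Hence $\omega_{\OGr(2,11)}\cong\O(-8)$ and $\ind\OGr(2,11)=8$.

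For part 3, I will use the standard Chevalley formula
\[H_{\OGr(2,11)}(t)=\frac{P_{W(B_5)}(t)}{P_{W_L}(t)},\qquad P_W(t)=\sum_{w\in W}t^{\ell(w)},\]
where $L=\GL(2)\times\Spin(7)$ is the Levi factor of $P_2$. The classical product formulas give $P_{W(B_5)}(t)=[2][4][6][8][10]$ and $P_{W_L}(t)=[2]\cdot[2][4][6]$, where $[k]:=(1-t^k)/(1-t)$. The ratio collapses to $[8][10]/[2]=(1+t^2+t^4+t^6)(1+t+\cdots+t^9)$, and direct expansion yields precisely the polynomial listed in the proposition; as a sanity check, evaluation at $t=1$ gives $40=|W(B_5)|/|W_L|$, the correct number of Schubert cells.

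The calculations are combinatorial and pose no serious obstacle. The only points demanding care are the identification of $e_1+e_2$ with the Pl\"ucker generator $\O(1)$ in part 2 (so that the index comes out to exactly $8$ rather than some other multiple) and the correct recognition of the Levi factor of $P_2$ as $\GL(2)\times\Spin(7)$, i.e. type $A_1\times B_3$, in part 3.
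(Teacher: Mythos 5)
Your proof is correct, and for part 3 it takes a genuinely different route than the paper. For parts 1 and 2 the content is essentially the same: the paper also gets $55-40=15$, and for the index it writes the one-liner $\det(\Lambda^2 U+V\otimes U)=(\Lambda^2U)^{\otimes 8}$, which is the same computation you carry out root by root (indeed $\det(V\otimes U)=(\det U)^{7}$ since $V\cong V^*$, plus $\det\Lambda^2 U=\Lambda^2 U$, giving $(\Lambda^2U)^8$). The real divergence is in part 3. The paper cites Tamvakis's presentation of $H^{*}(\OGr(2,11))$ as generated by $c_1\in H^{1,1}$ and $c_2\in H^{2,2}$ with relations only in degrees $16$ and $20$, reads off the first eight coefficients from $\frac{1}{(1-t)(1-t^2)}$, and then completes the polynomial by Poincar\'e duality. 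You instead use the Chevalley--Solomon factorization $P_{W}(t)=P_{W_L}(t)\,P_{W/W_L}(t)$ for a standard parabolic, giving $H(t)=[8][10]/[2]$ in closed form. Your route is more self-contained (it needs only the exponents of $B_5$ and $B_3$ and correct identification of the Levi, not the structure of the quantum cohomology ring) and yields all sixteen coefficients at once rather than the bottom half plus duality; the paper's route requires less Weyl group combinatorics but leans on a nontrivial external presentation theorem. One small point worth flagging if you write this up: for the Chevalley formula to apply with the stated Poincar\'e polynomials you must know that $W_L$ is a \emph{standard} parabolic subgroup of $W(B_5)$, i.e. generated by a subset of simple reflections (here $\{s_1,s_3,s_4,s_5\}$, omitting $s_2$), so that its internal length function agrees with the restricted one; this is implicit in your argument and is indeed true for $P_2$.
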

\begin{proof} \ 
\\
\begin{enumerate}
\item $\dim_{\mathbb{C}}\OGr(2,11)=\dim_{\mathbb{C}}\Spin(11)-\dim_{\mathbb{C}}P_2=55-40=15$.
\item $\det(\Lambda^2U+V\otimes U)\overset{\ddef}{=}\Lambda^2U^{\otimes \ind}$

\item The algebra of cohomology $H^{pp}(\OGr(2,11))$ is generated by two classes $c_1\in H^{1,1}$ and $c_2\in H^{2,2}$ degree two and four. Defining relations in the algebra have degrees $16$ and $20$ (see.e.g. \cite{Tamvakis}). Thus the degree seven Maclaurin  polynomials of $H(t)$ and of the function $\frac{1}{(1-t)(1-t^2)}$ coincide. The  formula for $H(t)$ then follows from Poincar\'{e} duality in  $H^{pp}(\OGr(2,11))$.
\end{enumerate}
\end{proof}

\begin{proposition}
 The map $CH_*(\wX)\rightarrow H_*(\wX,\mathbb{Z})$ is an isomorphism. Moreover 
\[\begin{split}
&H_{\wX}(t)={t}^{22}+2\,{t}^{21}+4\,{t}^{20}+6\,{t}^{19}+9\,{t}^{18}+12\,{t}^{17}+
16\,{t}^{16}+20\,{t}^{15}+23\,{t}^{14}+26\,{t}^{13}+27\,{t}^{12}\\
&+28\,{t}^{11}\\
&+27\,{t}^{10}+26\,{t}^{9}+23\,{t}^{8}+20\,{t}^{7}+16\,{t}^{6}+
12\,{t}^{5}+9\,{t}^{4}+6\,{t}^{3}+4\,{t}^{2}+2\,t+1
\end{split}
\]
\end{proposition}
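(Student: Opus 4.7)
The plan is to exploit the structure of $\wX$ as the projectivization $\P(\E)$ of the rank $8$ algebraic vector bundle $\E$ (\ref{E:vectprb}) over $\OGr(2,11)$, together with the cellular structure of the base. The key input from the previous subsection is therefore the corollary that $p:\wX\to \OGr(2,11)$ is a Zariski-locally trivial $\P^7$-bundle.

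First I would establish the equality $CH_*(\wX)\cong H_*(\wX,\mathbb{Z})$. The isotropic Grassmannian $\OGr(2,11)=\Spin(11)/P_2$ is a projective homogeneous variety, hence it admits a Bruhat/Schubert cell decomposition into affine spaces. Since $\wX\to \OGr(2,11)$ is a Zariski-locally trivial $\P^7$-bundle and $\P^7$ itself has the standard affine cell decomposition, pulling back the Schubert stratification of the base and refining fiberwise exhibits $\wX$ as a disjoint union of affine cells of even real dimension. As recalled in the discussion preceding Proposition \ref{P:topogr}, for such cellular varieties the cycle class map $CH_k\to H_{2k}(-,\mathbb{Z})$ is an isomorphism and odd homology vanishes, so the first assertion follows.

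Next I would compute the Poincar\'{e} series by the projective bundle formula (equivalently Leray--Hirsch applied to $p$). Since $H^*(\P^7,\mathbb{Z})$ is generated by the restriction of $c_1(\L)=c_1(\Bl^*\O(1))$ to a fiber, the classes $1,c_1(\L),\dots,c_1(\L)^7$ furnish a free $H^*(\OGr(2,11),\mathbb{Z})$-basis of $H^*(\wX,\mathbb{Z})$. Consequently
\[
H_{\wX}(t)=H_{\OGr(2,11)}(t)\cdot H_{\P^7}(t)=H_{\OGr(2,11)}(t)\cdot(1+t+t^2+\cdots+t^7).
\]
Using the formula for $H_{\OGr(2,11)}(t)$ from Proposition \ref{P:topogr}, the coefficient of $t^k$ in the product is the partial sum $\sum_{j=\max(0,k-7)}^{\min(15,k)} a_j$, where $a_j$ is the $j$-th coefficient of $H_{\OGr(2,11)}(t)$. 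A direct term-by-term check (e.g.\ $1+1+2+2+3=9$ for $k=4$, $1+2+2+3+3+4+4+4=23$ for $k=8$) reproduces the polynomial in the statement; the resulting sequence is visibly palindromic, consistent with Poincar\'{e} duality on the smooth projective variety $\wX$ of complex dimension $22$.

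There is no real obstacle here beyond bookkeeping; the only point requiring care is the cellularity argument, since one must verify that the trivializations of $p$ can be chosen compatibly with the Schubert stratification of $\OGr(2,11)$ to produce a genuine cell filtration (rather than merely a computation of rational Betti numbers). This can be done by choosing the Bruhat stratification adapted to a Borel in $\Spin(11)$ acting on both $\OGr(2,11)$ and $\wX$, and using that $p$ is $\Spin(11)$-equivariant, so that the preimages of Schubert cells are themselves locally trivial $\P^7$-bundles over affine spaces and hence affine.
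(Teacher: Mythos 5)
Your proof is correct. It differs from the paper mainly in how the isomorphism $CH_*(\wX)\to H_*(\wX,\mathbb{Z})$ is established. The paper argues directly via the projective bundle formula on \emph{both} sides: $CH^*(\wX)$ is a free $CH^*(\OGr(2,11))$-module on $1,c_1(\L),\dots,c_1(\L)^7$ by Fulton's Theorem 3.3, $H^*(\wX,\mathbb{Z})$ is free over $H^*(\OGr(2,11),\mathbb{Z})$ on the same basis by Leray--Hirsch, and the comparison map on the base is already known to be an isomorphism from the Schubert decomposition of $\OGr(2,11)$ discussed just before the proposition; the result then drops out by comparing module structures. You instead construct an affine cell decomposition of $\wX$ itself, refining the Schubert stratification of the base fiberwise, and appeal to the general fact that the cycle map is an isomorphism for cellular varieties. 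This works, but it is a bit heavier: one has to justify the Zariski-local triviality of $p$ over each Schubert cell and the compatibility of the resulting filtration with a genuine cell structure, as you yourself flag. Your resolution via $B$-equivariance is reasonable, though the paper sidesteps the issue entirely. The computation of the Poincar\'{e} polynomial is the same in both arguments, namely $H_{\wX}(t)=H_{\OGr(2,11)}(t)\,(1+t+\cdots+t^7)$, and your spot checks ($k=4$ and $k=8$) are correct.
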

\begin{proof}

The algebra $CH^*(\wX)$ 

 is a free $CH^*(\OGr(2,11))$-module, because $\wX$ is a projective bundle over $\OGr(2,11)$ (see \cite{inersectionFulton} Theorem 3.3). The same theorem holds for singular cohomology (see \cite{Hatcher} Leray-Hirsh theorem). From this we deduce that $CH_*(\wX)$ is isomorphic to $H_*(\wX,\mathbb{Z})$.  Let $\L$ (\ref{E:ldef}) be the relative tautological bundle over the relative projective space $p:\wX\rightarrow \OGr(2,11)$. Then the free basis of $H(\wX)$ is constituted by $1, c_1,\dots, c_1^7(\L)$. Thus $Q(t)=H_{\OGr(2,11)}(t)(1+\cdots+t^7)$, with $H_{\OGr(2,11)}(t)$ defined in Proposition \ref{P:topogr} .
\end{proof}

Locally factorial varieties (for example regular or smooth varieties) satisfy $\Pic(Y)=CH^1(Y)$ (see \cite{ComEisenbud} p. 260).

This  implies that $\Pic(\wX)=CH^1(\wX)=H^2(\wX,\mathbb{Z})=\mathbb{Z}^2$. It is generated by pullback  $p^*\O_{\OGr(2,11)}(1)$ and $\Bl^* \O(1)_{X}$ - the tautological line bundle of $\P(\E)$
(\ref{E:vectprb}).

\subsection{More on singularities of $X$}
We shall establish momentarily   that algebraic variety $X$ has canonical singularities (the concept introduced by  Reid in the work on the minimal model program)
\begin{proposition} \label{P:canonical}
Let $\omega_X, \omega_{\wX}$ be the dualizing sheaves on $X$ and respectively on $\wX$. Then 
\begin{equation}\label{e:pullback}
\omega_{\wX} =\Bl^*\omega_X+4Y
\end{equation}
where $Y\subset \wX$ is the exceptional divisor.
\end{proposition}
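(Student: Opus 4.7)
The plan is to exploit two facts already established: that $\Bl:\wX\to X$ is an isomorphism away from the exceptional divisor $Y$, and that $p:\wX\to\OGr(2,11)$ is a smooth $\P^7$-bundle whose fibers intersect $Y$ in quadrics. Together these reduce the proposition to an adjunction calculation on one fiber.

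First I would argue that the line bundle
\[
\M:=\omega_{\wX}\otimes \Bl^{*}(\omega_X)^{-1}
\]
is of the form $\O_{\wX}(aY)$ for some integer $a\in\mathbb{Z}$. Since $X$ is Gorenstein (so $\omega_X$ is an actual line bundle, Proposition \ref{P:omega}), the pullback $\Bl^{*}\omega_X$ is a well-defined line bundle on the smooth variety $\wX$. Because $\Bl$ restricts to an isomorphism $\wX\setminus Y\to O_{22}$, the bundle $\M$ is trivial on $\wX\setminus Y$. On a smooth variety the kernel of $\Pic(\wX)\to\Pic(\wX\setminus Y)$ is generated by the class of the irreducible exceptional divisor $Y$, so $\M\cong\O_{\wX}(aY)$ for some integer $a$. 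It remains to pin down $a=4$.

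For this I would restrict everything to a generic fiber of $p$. Fix $x\in\OGr(2,11)$ and let $F=p^{-1}(x)\cong\P^7$ (Proposition \ref{P:fiber}). Smoothness of $p$ makes the normal bundle $N_{F/\wX}\cong p^{*}T_{\OGr(2,11),x}$ trivial on $F$, so adjunction yields $\omega_{\wX}|_F=\omega_F=\O_{\P^7}(-8)$. On the other hand, $\omega_X\cong\O_X(-16)$ by Proposition \ref{P:omega}, and $\L=\Bl^{*}\O_X(1)$ restricts to $\O_{\P^7}(1)$ on $F$ directly from the description $\E=p_{*}\L$. Hence $\Bl^{*}\omega_X|_F=\L^{\otimes -16}|_F=\O_{\P^7}(-16)$. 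Finally, by Corollary \ref{C:total} the intersection $Y\cap F$ is a quadric $Q\subset\P^7$, a divisor of degree two, so $\O_{\wX}(Y)|_F=\O_{\P^7}(2)$ and $\O_{\wX}(aY)|_F=\O_{\P^7}(2a)$. The equation $\omega_{\wX}|_F=(\Bl^{*}\omega_X\otimes \O_{\wX}(aY))|_F$ becomes $-8=-16+2a$, forcing $a=4$.

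The only step that requires more than a routine check is the identification $\ker\bigl(\Pic(\wX)\to\Pic(O_{22})\bigr)=\mathbb{Z}\cdot[Y]$; this is the main obstacle in the sense that the rest is pure bookkeeping. On the smooth variety $\wX$ it follows from the excision sequence relating Weil divisors on $\wX$ with those on the open subset $O_{22}$, using that $Y$ is an irreducible prime divisor. Once this is granted, the proposition is verified by the single fiberwise adjunction calculation above.
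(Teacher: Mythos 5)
Your proof is correct, and it takes a cleaner route than the paper's. The paper writes $\omega_{\wX}=p^{*}\omega_{\OGr(2,11)}\otimes\omega_{\wX/\OGr(2,11)}$, recalls $\omega_X=\O_X(-16)$ and $\omega_{\OGr(2,11)}=\O(-8)$, and then verifies the claimed identity in $\Pic(\wX)\cong\mathbb{Z}^2$ by pairing both sides against two explicit curves $\Sigma_1$ and $\Sigma_2$: a line in $O_{22}$ missing the exceptional divisor and a line in a fiber of $p$, with independence argued via $p_{*}\Sigma_2=0$. You instead isolate the discrepancy $\M=\omega_{\wX}\otimes\Bl^{*}\omega_X^{-1}$ and show it is a power $\O_{\wX}(aY)$ using the localization sequence $\mathbb{Z}\to\Pic(\wX)\to\Pic(\wX\setminus Y)\to 0$ (valid since $\wX$ is smooth and $Y$ is prime), then pin down $a=4$ by a single fiberwise adjunction, exactly where the paper's computation $\langle\omega_{\wX},\Sigma_2\rangle=-8$, $\langle\Bl^{*}\omega_X,\Sigma_2\rangle=-16$, $\langle Y,\Sigma_2\rangle=2$ lives. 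Both arguments rest on the same geometric inputs — the $\P^7$-bundle structure, $\L|_{F}=\O_{\P^7}(1)$, and $Y\cap F$ being a degree-two quadric — but your version dispenses with the need to exhibit a basis of one-cycles and implicitly verify nondegeneracy of the intersection pairing, at the mild cost of invoking the Weil-divisor excision sequence.
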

\begin{proof}
According to Proposition \ref{P:omega} $\omega_X=\O_X(-16)$, $\omega_{\wX}=p^*\omega_{\OGr(2,11)}\otimes \omega_{\wX/\OGr(2,11)}$, where $\omega_{\wX/\OGr(2,11)}$ is the relative canonical class. By Proposition \ref{P:topogr} $\omega_{\OGr(2,11)}=\O_{\OGr(2,11)}(-8)$.

 In order to prove formula (\ref{e:pullback}), due to the absence of torsion in $CH_*$ it is suffice to check it numerically with respect to some linearly independent set of one-cycles. 
For these we choose in $X$ two curves $\Sigma_i, i=1,2$ isomorphic to $\P^1$s.  $\Sigma_1$ is  a projectivization of a linear space $v\otimes W_l\subset s_{11}$, spanned by vectors $v\otimes e_1, v\otimes e_2$ as in Proposition \ref{E:vanish}. We choose $v\in s_7$ such that $(v,v)\neq 0$ (see formula \ref{E:quadric}). The last condition guarantees that $\Sigma_1\subset O_{22}$. $\Sigma_2$ is  a projectivization of a linear space spanned by $v_1\otimes e, v_2\otimes e\in  s_{7}\otimes W_l$. It is the $\Bl$-image  of a curve $\widetilde{\Sigma}_2$ that contains in the fiber $p$. 
 The cycles defined by $\Sigma_i$ are linearly independent, because $p_*(\Sigma_2)=0$ in $CH_1(\OGr(2,11))$.
 
 We have the following pairings  $\<\Bl^*\omega,\Sigma_i\>=-16$, \[\<\omega_{\wX},\Sigma_1>= \<\omega_{\OGr(2,11)},p(\Sigma_1)\>=\<\O_{\OGr(2,11)}(-8),p(\Sigma_1)\>=-8,\] \[\<\omega_{\wX},\Sigma_2\>=\<\omega_{\wX/\OGr(2,11)},\Sigma_2\>=\<\omega_{\P^7},\Sigma_2\>=-8,\] $\<Y,\Sigma_1\>=0, \<Y,\Sigma_2\>=2.$ The last equality holds because intersection of $Y$ with a $p$ fiber is a quadric (Corollary \ref {C:total}). With  preparatory work behind, the formula $\<\omega_{\wX},\Sigma_i\> =\<\Bl^*\omega_X,\Sigma_i\>+4\<Y,\Sigma_i\>$ $i=1,2$ is  obvious.
 \end{proof}
\begin{remark}
It follows from Proposition \ref{P:canonical} that variety $X$ is an example of a Fano manifold with  canonical singularities (see e.g. \cite{parshin} for details).  

\end{remark}

\section{Applications to eleven-dimensional supergravity}\label{S:applications}

In this section we shall describe an alternative formulation of eleven-dimensional supergravity announced in the abstract. We shall arrive to this formulation through a series of a quasi-isomorphisms. We shall describe our construction  for polynomial fields. Though this is not physically very realistic assumption, it lets to  simplify the  statements. In the end we discuss how to work with analytic fields.

The algebra $Gr^{\infty}$ (\ref{E:gr}) contains a subalgebra $Gr^{pol}=A\otimes \Lambda[s^*_{11}]\otimes \Sym[\vv^{11}]$. One of the advantages of working with $Gr^{pol}$ is that it has a $\mathbb{Z}$ grading, compatible with the action of $D$. This contrasts with $Gr^{\infty}, Gr^{an}$ modification, for which  only $\mathbb{Z}_2$-grading is possible. By definition $\deg(\xi^{\alpha})=1, \deg(\lambda^{\alpha})=\deg(x^i)=2$. The algebra $Gr=Gr^{pol}$  admits various reformulations.

\subsection{Sheafification of $Gr$}

Let us introduce a sheaf of graded  algebras  $\A=\bigoplus_{n\geq 0}\O(n)$ defined over $X$. We use it to define a sheaf of differential graded algebras \[\cGr=(\A\otimes \Lambda[\xi^{1},\dots,\xi^{32}]\otimes \Sym[\vv^{11}] ,D).\]
Vanishing results for cohomology of $\O(n)$ (Proposition \ref{P:vanishing}) lets us to prove the following statement
\begin{proposition}\label{P:comp11}
Hypercohomology of $\cGr$ coincide with $H(Gr)$.
\end{proposition}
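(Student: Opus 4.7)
The plan is to compute $\mathbb{H}^*(X,\cGr)$ via the standard hypercohomology spectral sequence and show that, thanks to the vanishing statements already assembled in Proposition~\ref{P:vanishing}, it degenerates at the first page onto the complex of global sections, which is exactly $Gr$. Concretely, I would use the spectral sequence
\[
E_1^{p,q}=H^{q}(X,\cGr^{p})\;\Longrightarrow\;\mathbb{H}^{p+q}(X,\cGr),
\]
where $\cGr^{p}$ is the $p$-th slice of $\cGr$ under the $\mathbb{Z}$-grading $\deg\xi^{\alpha}=1$, $\deg\lambda^{\alpha}=\deg x^{i}=2$ (this grading is preserved by $D$, which is what makes the bigraded bookkeeping legitimate). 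The $d_{1}$ differential on the row $q=0$ is, by construction, the map induced by $D$ on global sections.

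The first step is to observe that each $\cGr^{p}$ is a finite direct sum of sheaves of the form $\O(n)\otimes W$ with $W$ a finite-dimensional constant vector space coming from the $\Lambda[\xi]\otimes\Sym[\vv^{11}]$ factor. Proposition~\ref{P:vanishing} gives $H^{q}(X,\O(n))=0$ for $q>0$ and all $n\ge 0$, so $E_1^{p,q}=0$ for $q>0$. Only the bottom row survives, which immediately collapses the spectral sequence and yields
\[
\mathbb{H}^{n}(X,\cGr)\;\cong\;H^{n}\bigl(\Gamma(X,\cGr),D\bigr).
\]
The second step is then to identify the complex of global sections. Since the $\Lambda[\xi]\otimes\Sym[\vv^{11}]$ factor is constant,
\[
\Gamma(X,\cGr)=\Bigl(\bigoplus_{n\ge 0}H^{0}(X,\O(n))\Bigr)\otimes\Lambda[\xi^{1},\dots,\xi^{32}]\otimes\Sym[\vv^{11}].
\]
Combining this with the identification $\bigoplus_{n\ge 0}H^{0}(X,\O(n))=A$ (also part of Proposition~\ref{P:vanishing}) reproduces $Gr$ on the nose, together with the correct $D$, and the proposition follows.

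The point to be careful about, which I expect to be the only real obstacle, is precisely the equality $\bigoplus_{n\ge 0}H^{0}(X,\O(n))=A$ in all degrees, not just the ones above Castelnuovo--Mumford regularity. If Proposition~\ref{P:vanishing} as stated in the paper already includes this saturation statement, the argument is complete; otherwise I would fill the gap by invoking the fact that $X$ is reduced (Corollary~\ref{C:reduced}) and that the affine cone over $X$ is Cohen--Macaulay with depth at the irrelevant ideal equal to $\dim X +1$, so that the local cohomology groups $H^{0}_{\mathfrak{m}}(A)$ and $H^{1}_{\mathfrak{m}}(A)$ vanish, which is exactly equivalent to $A_n=H^{0}(X,\O(n))$ for every $n\ge 0$. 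The Gorenstein/Cohen--Macaulay analysis of the local rings of $X$ performed earlier in the section provides all the input needed for this last verification.
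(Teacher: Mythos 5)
Your argument is correct and matches the route the paper implicitly takes: the paper gives no proof beyond the pointer to Proposition~\ref{P:vanishing}, and the spectral sequence with rows collapsing by part~2 of that proposition is precisely what is intended. You are also right to flag the saturation issue: Proposition~\ref{P:vanishing} is purely a vanishing statement and does \emph{not} by itself assert $A_n \cong H^0(X,\O(n))$ for every $n\geq 0$. The cleanest way to close that gap is exactly the local-cohomology argument you sketch, and the needed Cohen--Macaulayness of the graded ring $A$ is available from the minimal free resolution (\ref{E:resolution0}): it has length $9 = \mathrm{codim}_{\,\P^{31}}X$, so by Auslander--Buchsbaum the affine cone $\Spec A$ is Cohen--Macaulay of depth $23 \geq 2$ at the irrelevant ideal, whence $H^0_{\mathfrak m}(A)=H^1_{\mathfrak m}(A)=0$ and $A=\bigoplus_{n\geq 0}H^0(X,\O(n))$. (The Gorenstein analysis in the dualizing-sheaf section concerns the local rings of $X$, not directly the cone point, so the resolution is the more direct source of CM-ness here.) With that input your two-step argument is complete and, if anything, more explicit than what the paper records.
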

From this and from Proposition \ref{E:blowvanish} we deduce a corollary.
\begin{corollary}\label{C:equiv}
Hypercohomology of $\Bl^*\cGr$ coincide with $H(Gr)$.
\end{corollary}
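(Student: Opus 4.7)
The plan is to reduce the claim to Proposition \ref{P:comp11} by showing that the hypercohomology of $\Bl^*\cGr$ on $\wX$ agrees with that of $\cGr$ on $X$. Two approaches are available; both rest on Proposition \ref{E:blowvanish}.

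The direct route goes via derived pushforward. First I would observe that $\Bl^*\cGr$ splits, as a graded sheaf, into summands of the form $\L^{\otimes n}\cdot \xi^{I}\cdot x^{J}$, where $I$ runs over multi-indices of $\Lambda[\xi^{1},\dots,\xi^{32}]$ and $J$ over monomials in $\Sym[\vv^{11}]$. The tensor factors $\Lambda[\xi]$ and $\Sym[\vv^{11}]$ are just graded vector spaces over the base field, so they pass through $\Bl^{*}$ and $R\Bl_{*}$. Next I would strengthen the input to Proposition \ref{E:blowvanish} to a statement about derived pushforward: the proof already invokes the Leray spectral sequence of $\Bl$ and the vanishing $H^{i}(\Gr(2,5),\O)=0$ for $i>0$ together with $H^{0}=\mathbb{C}$, which are exactly the inputs that give $R^{i}\Bl_{*}\L^{\otimes n}=0$ for $i>0$ and $\Bl_{*}\L^{\otimes n}=\O(n)$ on $X$ (for $n\geq 0$, which is our range). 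Combined with the projection formula, this yields a quasi-isomorphism $R\Bl_{*}\Bl^{*}\cGr\simeq \cGr$ as complexes of sheaves on $X$, since the differential $D$ is built from multiplication by $\lambda^{\alpha}\in\Gamma(X,\O(1))=\Gamma(\wX,\L)$ and by $\partial/\partial\theta^{\alpha}$ and $\partial/\partial x^{i}$, all of which commute with $\Bl^{*}$ and descend under $\Bl_{*}$. Invoking the composition of derived functors $R\Gamma(X,-)\circ R\Bl_{*}\simeq R\Gamma(\wX,-)$ then gives
\[
\mathbb{H}^{*}(\wX,\Bl^{*}\cGr)\;\cong\;\mathbb{H}^{*}(X,R\Bl_{*}\Bl^{*}\cGr)\;\cong\;\mathbb{H}^{*}(X,\cGr),
\]
and Proposition \ref{P:comp11} closes the argument.

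If one prefers to avoid derived categories, an equivalent plan is to compare the two hypercohomology spectral sequences. On $X$, the first spectral sequence of hypercohomology has $E_{1}^{p,q}=H^{q}(X,\cGr^{p})$, converging to $\mathbb{H}^{p+q}(X,\cGr)$; analogously on $\wX$ with $\Bl^{*}\cGr$. Each term decomposes as a direct sum of twists $H^{q}(X,\O(n))$, respectively $H^{q}(\wX,\L^{\otimes n})$, multiplied by finite-dimensional pieces of $\Lambda[\xi]\otimes\Sym[\vv^{11}]$. Proposition \ref{E:blowvanish} furnishes a term-by-term isomorphism of $E_{1}$-pages, and naturality together with the fact that the differential $d_{1}$ is induced by $D$ (which pulls back compatibly because the generating sections $\lambda^{\alpha}$ are pulled back from $X$) identifies the $d_{1}$-differentials. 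Hence the two spectral sequences are isomorphic from $E_{1}$ onward, they abut to isomorphic limits, and combining with Proposition \ref{P:comp11} finishes the proof.

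The main obstacle is not conceptual but bookkeeping: one must verify that Proposition \ref{E:blowvanish} actually delivers the sheaf-level statement $R\Bl_{*}\L^{\otimes n}=\O(n)$ in degree $0$ with vanishing higher derived images, rather than only the global statement. This follows from the proof given (fiberwise application of Borel--Weil--Bott on $\Gr(2,5)$ combined with the Leray spectral sequence) but deserves to be stated explicitly before the hypercohomology argument is made.
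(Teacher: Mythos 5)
Your proof is correct and matches the paper's (implicit) argument: the paper simply states that the corollary follows from combining Proposition \ref{P:comp11} with Proposition \ref{E:blowvanish}, and your two variants—the derived-pushforward route and the spectral-sequence comparison—are both careful unpackings of exactly that combination. Your closing remark is a fair observation: the proof of Proposition \ref{E:blowvanish} (fiberwise Borel--Weil--Bott on $\Gr(2,5)$ plus the Leray spectral sequence for $\Bl$) does in fact establish the sheaf-level statement $R^{i}\Bl_{*}\L^{\otimes n}=0$ for $i>0$ and $\Bl_{*}\L^{\otimes n}=\O(n)$, which is the form one actually needs here, and making that explicit is a small but genuine improvement in exposition over the paper.
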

We define a sheaf of differential graded algebras 
\begin{equation}\label{E:grdef}
\H=p_*\Bl^*\cGr
\end{equation}
 on $\OGr(2,11)$.
\begin{corollary}\label{C:quasiHGr}
The hypercohomology of $\H$ coincide with $H(Gr)$.
\end{corollary}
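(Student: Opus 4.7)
The plan is to deduce the corollary by combining Corollary \ref{C:equiv} with a vanishing argument for the higher direct images $R^j p_*$ applied to each graded piece of $\Bl^*\cGr$, so that the Leray spectral sequence for $p$ collapses at $E_2$.

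First I would unwind the definition. The sheaf of DGAs $\Bl^*\cGr$ has underlying graded object
$$\Bl^*\cGr \;=\; \bigoplus_{n\geq 0} \L^{\otimes n}\otimes \Lambda[\xi^1,\dots,\xi^{32}]\otimes \Sym[\vv^{11}],$$
where $\L^{\otimes n}=\Bl^*\O(n)$ (cf.\ (\ref{E:ldef})), and the factors $\Lambda[\xi^\bullet]$ and $\Sym[\vv^{11}]$ are constant sheaves of finite-dimensional or polynomial spaces. Since these constant factors are flat and do not affect direct images, Proposition \ref{r:direct} immediately yields
$$R^j p_*(\Bl^*\cGr)^{k}\;=\;0 \qquad \text{for all } j>0 \text{ and all graded pieces } k,$$
because only the $\L^{\otimes n}$ factor with $n\geq 0$ carries sheaf cohomology, and for $n\geq 0\geq -7$ the higher direct images vanish.

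Next, I would invoke the hypercohomology Leray spectral sequence
$$E_2^{ij}\;=\;H^i\bigl(\OGr(2,11),\,R^j p_*\,\Bl^*\cGr\bigr)\;\Longrightarrow\;\mathbb{H}^{i+j}\bigl(\wX,\,\Bl^*\cGr\bigr).$$
The vanishing established in the previous step forces $E_2^{ij}=0$ for $j>0$, so the spectral sequence degenerates on the $j=0$ row and gives an isomorphism
$$\mathbb{H}^{i}\bigl(\wX,\,\Bl^*\cGr\bigr)\;\cong\;H^i\bigl(\OGr(2,11),\,p_*\Bl^*\cGr\bigr)\;=\;\mathbb{H}^{i}\bigl(\OGr(2,11),\H\bigr),$$
where I have used the definition (\ref{E:grdef}) of $\H$. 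Combined with Corollary \ref{C:equiv}, which identifies the left-hand side with $H(Gr)$, this proves the claim.

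The only subtle point, and what I would treat most carefully, is the passage from individual vanishing $R^jp_*\L^{\otimes n}=0$ to the collapse of the spectral sequence for the full DGA: one should take an injective (or $p_*$-acyclic, e.g.\ \v{C}ech) resolution compatible with the grading, so that taking direct images commutes with the direct sum $\bigoplus_{n\geq 0}$, and then observe that the differential $D$ respects the filtration used to build the spectral sequence. Everything else is a routine assembly of results already established in the excerpt.
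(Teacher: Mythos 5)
Your proof is correct and takes essentially the same route as the paper: the paper's one-line argument also cites Corollary \ref{C:equiv}, Proposition \ref{r:direct}, and the spectral sequence for direct images of $p$, and your write-up simply makes the Leray degeneration explicit. The only ingredient you add is the (sound) remark about compatibility of the filtration with the differential $D$ and the direct sum over $n$, which the paper leaves implicit.
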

\begin{proof}
This follows from Corollary \ref{C:equiv}, Proposition \ref{r:direct} and spectral sequence for direct images of a morphism.
\end{proof}

\subsection{A homogeneous space of super-Poincar\'{e} group}\label{S:homodef}
In this section we define a homogeneous space $L$ of  super-Poincar\'{e} group, that 

has some significance in eleven-dimensional supergravity.

Let \[\so_{11}\ltimes \susy\]be the complexified super-Poincare Lie algebra.
As usual $\susy$ is a direct sum of  $\vv^{11}$ (even part), $s_{11}$ (odd part). The only nontrivial bracket is defined by the formula $[s,s']=\Gamma(s,s'),s,s'\in s_{11}$.
As far as global structure of the corresponding group is concerned then  according to (\cite{Bernstein} 2.10. Super Lie groups) a supergroup $G$, i.e. a super-manifold, equipped with a group structure is completely determined by the Lie algebra $\mathfrak{g}$ of left-invariant vector fields and the topology of  purely even subgroup $G_0$. In context of super-Poincar\'e group  we require that underlying even group to be $\Spin(11)\ltimes \vv^{11}$.

Citing the same source we  claim that a homogenous space $X=G/H$ of a supergroup $G$ and isotropy subgroup $H$ is completely determined by Lie algebras $\g,\h$ and supporting manifold $G_0/H_0$.

We define $L$ to be a quotient of super-Poincar\'{e} group $\Spin(11)\ltimes \susy$ by an isotropy subgroup $P_2\ltimes \Pi \t$ . ( $P_2$ as in (\ref{E:pstab}), $\t=s^{-1}$ is as in formula (\ref{E:grading})).
\begin{remark}
By Proposition \ref{E:vanish} $\Pi \t$ is an abelian subalgebra in $\susy$.
\end{remark}

Super-Poinca\'e Lie algebra also has its own $\mathbb{Z}$ grading, The $\so_{11}$-part has zero grading, $\Pi s_{11}$ has grading minus one (opposite to $\deg(\xi^{\alpha})$), translations have grading minus two (opposite to $\deg(x^i)$).
\begin{remark}
The grading on $\so_{11}\ltimes \susy$ induces a $\mathbb{C}^*$ action on the manifold $L$. The fixed points of the action are $\OGr(2,11)\subset L$. $\mathbb{C}^*$ acts on cohomology of any equivariant coherent sheaf on $L$. In particular  $H^i(L,\O)$ splits into a direct sum \[H^i(L,\O)=\bigoplus_{k}H^{i,k}(L,\O)\] of $\mathbb{C}^*$-weight spaces. It will be convenient for us to modify cohomological grading \[H_{tot}^k(L,\O)=\bigoplus_{i+j=k}H^{i,j}(L,\O)\]
\end{remark}

\subsection{Algebra of linearized supergravity}

\begin{proposition}
There is an isomorphism of cohomology $H^k(Gr^{pol})$ and 
and cohomology of the structure sheaf $H_{tot}^k(L,\O)$. 
\end{proposition}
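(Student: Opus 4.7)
The plan is to combine Corollary \ref{C:quasiHGr}, which already identifies $H^{k}(Gr^{pol})$ with $\mathbb{H}^{k}(\OGr(2,11),\H)$, with a Koszul contraction that identifies the sheaf $\H$ with the direct image $\pi_{*}\O_{L}$, where $\pi:L\to \OGr(2,11)$ is the base projection. I begin by computing $\pi_{*}\O_{L}$. Since $\t$ is abelian in $\susy$ (Proposition \ref{E:vanish} applied to $\t=s^{-1}$), the exponential identifies $\susy/\Pi\t$ with the affine super space $\vv^{11}\oplus \Pi(s_{11}/\t)$, so the fiber of $\pi$ is a super affine space of dimension $(11\mid 24)$. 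Higher direct images vanish, and as a $\Spin(11)$-equivariant sheaf of algebras on $\OGr(2,11)$ we have
\[\pi_{*}\O_{L}=\Sym[\vv^{11*}]\otimes \Lambda[(s_{11}/\t)^{*}],\]
both factors being the homogeneous bundles associated to the corresponding $P_{2}$-representations. Consequently $H^{k}_{tot}(L,\O_{L})=\mathbb{H}^{k}(\OGr(2,11),\pi_{*}\O_{L})$.

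Next I compare with $\H$ fiberwise. By Proposition \ref{r:direct} and Remark \ref{r:fiberexact}, the fiber of $\bigoplus_{n\geq 0}p_{*}\L^{\otimes n}$ at $x\in \OGr(2,11)$ is $\Sym[\t^{*}]$ (identifying $s_{7}\cong s^{-1}=\t$ as $P_{2}$-modules, up to the line-bundle twist encoded in $\L$). Using the $P_{2}$-invariant filtration $0\subset \t\subset s_{11}$, on the associated graded one has $\Lambda[s_{11}^{*}]\cong \Lambda[\t^{*}]\otimes \Lambda[(s_{11}/\t)^{*}]$. Since $\Gamma^{i}_{\alpha\beta}=0$ for $\alpha,\beta\in \t$, the restriction of $D$ to the fiber splits as $D=\delta+D'$ with
\[\delta=\sum_{\alpha\in \t}\lambda^{\alpha}\frac{\partial}{\partial \theta^{\alpha}},\qquad D'=-\sum_{\alpha\in \t,\ \beta\notin \t}\Gamma^{i}_{\alpha\beta}\lambda^{\alpha}\theta^{\beta}\frac{\partial}{\partial x^{i}},\]
where $\delta$ is the standard Koszul differential on $\Sym[\t^{*}]\otimes \Lambda[\t^{*}]$.

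Filter the fiber by the Grassmann degree in $(s_{11}/\t)^{*}$: $\delta$ preserves it and $D'$ strictly raises it by one, so $\delta$ is the zeroth differential of the associated spectral sequence and $D'$ the first. The Koszul complex $(\Sym[\t^{*}]\otimes \Lambda[\t^{*}],\delta)$ is a resolution of $\mathbb{C}$, so
\[E_{1}\cong \mathbb{C}\otimes \Lambda[(s_{11}/\t)^{*}]\otimes \Sym[\vv^{11*}].\]
A Koszul-degree-zero representative has no $\lambda$ and no $\theta^{\alpha}_{\t}$; applying $D'$ produces an element with one $\lambda^{\alpha}$ and one $\theta^{\beta}$ with $\beta\notin \t$, which is automatically $\delta$-exact via $\delta(\theta^{\alpha}\theta^{\beta}X)=\lambda^{\alpha}\theta^{\beta}X$. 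Iterating this argument shows that every higher differential $d_{r}$ vanishes, so the spectral sequence degenerates with trivial induced differential. Fiberwise the cohomology of $\H$ is therefore $\Lambda[(s_{11}/\t)^{*}]\otimes \Sym[\vv^{11*}]$, which matches $\pi_{*}\O_{L}$. All ingredients (the filtration, $\delta$, and the contracting homotopy) are $\Spin(11)$-equivariant, so the fiberwise quasi-isomorphism upgrades to a quasi-isomorphism of sheaves of differential graded algebras $\H\simeq \pi_{*}\O_{L}$ on $\OGr(2,11)$. Taking hypercohomology and combining with Corollary \ref{C:quasiHGr} yields the asserted isomorphism, with the bigradings matching via the natural $\mathbb{Z}$-grading $\deg(\theta^{\alpha})=1,\ \deg(\lambda^{\alpha})=\deg(x^{i})=2$ that passes through each step of the construction.

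The main obstacle is the globalization of the Koszul contraction: the fiberwise argument must be upgraded to a genuine quasi-isomorphism of sheaves of DG algebras, not a mere pointwise identification. The key technical point is that the filtration on $s_{11}$ and the contracting homotopy for $\delta$ are both $P_{2}$-equivariant (in fact the homotopy is even $\Spin(7)\times \GL(2)$-equivariant), so the filtration spectral sequence can be globalized and its degeneration verified sheaf-theoretically; matching the $\mathbb{C}^{*}$-weights on the two sides to produce the total grading requires a further bookkeeping step.
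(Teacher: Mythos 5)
Your argument is correct in outline and reaches the same conclusion, but it takes a different route from the paper's main proof, landing closer to the remark that the author places immediately after it. The paper first computes $\pi_*\O_L$ intrinsically: it interprets the fiber of $\H$ at $x$ as the Cartan--Chevalley complex $C(\Pi\t,\Lambda[s^*_{11}]\otimes\Sym[\vv^{11}])$ computing Lie algebra cohomology of the abelian superalgebra $\Pi\t$ with coefficients in $\O(SUSY)$, notes that this coefficient module is coinduced, and then invokes Shapiro's lemma to kill the higher cohomology; the degree-zero cohomology is, by construction, $\O(SUSY/\Pi T)$. You instead filter the fiber of $\H$ by the Grassmann degree in $(s_{11}/\t)^*$, identify the associated-graded differential with the Koszul differential $\delta$ on $\Sym[\t^*]\otimes\Lambda[\t^*]$, and argue for degeneration at $E_1$ by exhibiting $\delta$-preimages for the $D'$-images. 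This is essentially the ``explicit computation'' version of the argument which the paper supplies as an alternative immediately after the proof, where an explicit change of coordinates turns the fiber complex into $\Omega[\Pi\t]\otimes F$ and K\"unneth finishes. The Shapiro argument buys brevity and makes the vanishing of higher cohomology immediate; your spectral-sequence route buys transparency but requires more care in propagating the $\delta$-exactness to the higher pages (the one displayed identity only controls $d_1$; the clean way to kill all $d_r$ is the coordinate change of the paper's remark, or a homological perturbation lemma argument exhibiting the full contracting homotopy). Two more small cautions: the identification $\pi_*\O_L=\Sym[\vv^{11*}]\otimes\Lambda[(s_{11}/\t)^*]$ as a $\Spin(11)$-equivariant sheaf of algebras is only literally true on the associated graded of the natural filtration, since $P_2$ is not reductive and the $P_2$-filtration $\t\subset s_{11}$ need not split $P_2$-equivariantly --- but this does not affect the hypercohomology comparison, since both sides carry compatible filtrations; and the asserted $\Spin(7)\times\GL(2)$-equivariance of the Koszul contracting homotopy, while true, is weaker than full $P_2$-equivariance, so the globalization should be phrased in terms of the filtration spectral sequence of the homogeneous bundle $\H$ rather than a literal sheaf-level homotopy.
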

\begin{proof}
We know that cohomology $Gr$ coincide with hypercohomology of $\H$ (Corollary \ref{C:quasiHGr}).
The fiber of $\H$ at the point $x\in \OGr(2,11)$ is isomorphic to $H=\Sym[\t^*]\otimes \Lambda[s^*_{11}]\otimes \Sym[\vv^{11}]$. 
Homogeneous space $\OGr(2,11)$ is trivially a $\Spin(11)\ltimes \susy$ space.
We have an equivariant fibration
$L\rightarrow \OGr(2,11)$, corresponding to inclusion of isotropy subalgebras $\p_2\ltimes \Pi \t \subset \p_2\ltimes \susy$. Let $SUSY$ and $\Pi T$ be the super group schemes corresponding to Lie algebras $\susy$ and $\Pi\t$.
A fiber of the projection  is isomorphic to $SUSY/\Pi T$.
The algebra of polynomial functions on the fiber coincides with some subalgebra of functions on $SUSY$. The subalgebra consists of elements, invariant with respect to right translations on $\Pi T$. The algebra of global functions $\O(SUSY)$ is isomorphic to $\Lambda[s^*_{11}]\otimes \Sym[\vv^{11}]$, because $\susy$ is a nilpotent Lie algebra. Left invariant vector fields $\eta_{\alpha}\in \Pi s_{11}\subset \susy$ act by the formula $\frac{\sd \ }{\sd \theta^{\alpha}}-\Gamma^i_{\alpha\beta}\lambda^{\alpha}\theta^{\beta}\frac{\sd}{\sd x^i}$ (see e.g. \cite{DF} formula 1.18). The subalgebra of invariants is the zero cohomology of the Cartan-Chevalley complex (see e.g. \cite{Fuchs} on cohomology of Lie algebras) $\Sym[\t^{*}]\otimes \Lambda[s^*_{11}]\otimes \Sym[\vv^{11}]=C(\Pi\t,\Lambda[s^*_{11}]\otimes \Sym[\vv^{11}])$. The module $\Lambda[s^*_{11}]\otimes \Sym[\vv^{11}]$ is co-induced. By Shapiro lemma its higher cohomology vanish. 
We see that $\O(SUSY/\Pi T)$ is quasi-isomorphic to $\Sym[\t^{*}]\otimes \Lambda[s^*_{11}]\otimes \Sym[\vv^{11}]$, which isomorphic to $H$. 
The fibers of projection $L\rightarrow \OGr(2,11)$ are affine. The proposition follows from the spectral sequence for cohomology of fibrations.

\end{proof}

Reference to Shapiro lemma in the proof of the last proposition can be replaced by explicit computation.
The computation uses decomposition (\ref{E:grading}).  Let $t^k,t^{'k} u^{ik}, k=1,\dots,8, i=1,2$ be linear coordinates in spaces $\t=s^{-1}, s^{1}$ and $s^0$; $\tau^k,\tau^{'k},\nu^{ik}$ -  coordinates in  $\Pi s^{-1}, \Pi s^{1}$ and $\Pi s^0$.  

Under the map  \[Gr^{\infty}\rightarrow H^{\infty}=\Sym[\t^*]\otimes \Lambda[s^*_{11}]\otimes C^{\infty}(\mathbb{R}^{11})\]   the image of  $e^i=\Gamma^i_{\alpha\beta}\lambda^{\alpha}\theta^{\beta}\in Gr^{\infty}$ 
can be written as a linear combination of 
\[\tilde{e}^p=\gamma^p_{kl}t^k\tau^{'l}\quad p=1,\dots,7,\]
\[\tilde{e}^i=t^{k}\nu^{ki}, i=1,2.\]
We use  $\Spin(7)$ notations and  formulas (\ref{E:goct1},\ref{E:goct2}).

 We have  $\tilde{e}^p=D(\gamma^p_{kl}\tau^k\tau^{'l})$, $\tilde{e}^i=D(\tau^{k}\nu^{ki})$ in  $H^{\infty}$. Introduce notations $f^p=\gamma^p_{kl}\tau^k\tau^{'l}$, $g^i=\tau^{k}\nu^{ki}$. We choose  coordinates  $x^p$ on the space $\vv^7$, $u^i$ on $U$, $u_i$ on $U^{'}$ in decomposition (\ref{E:decomposition1}) to 
  satisfy $Dx^p=\tilde{e}^p$ , $Du^i={e}^i$ and  $Du_i=0$ . By construction elements 
  \begin{equation}\label{E:superfunct}
  y^p=x^p-f^p, h^i=g^i-u^i,  u_i, \tau^{'k}, \nu^{il}
  \end{equation} are $D$-cocycles in $H^{\infty}$. We use (\ref{E:superfunct}) augmented by $\tau^k$ as coordinates on $\vv^{11}\times \Pi s_{11}$. In these coordinates the complex $H^{\infty}$ becomes a tensor product of  odd De Rham complex $\Omega[\Pi \t]$ and the algebra $F^{\infty}$ of $C^{\infty}$ functions in  (\ref{E:superfunct}). The cohomology $H^{\infty}$ can be computed with the aid of  the K\"{u}neth formula and coincide with $F$.
  
  These arguments let to compute cohomology $H^{an, \mathbb{R}}=\Sym[\t^*]\otimes \Lambda[s^*_{11}]\otimes \O^{an}(\mathbb{R}^{11})$ and $H^{an}=\Sym[\t^*]\otimes \Lambda[s^*_{11}]\otimes \O^{an}(\vv^{11})$.
Dolbeault theory on complex super-manifolds has been treated in \cite{Haske}. In particular Theorem 3.4 in \cite{Haske} shows equivalence of Dolbeault and \v{C}ech approaches to cohomology of coherent sheaves.
\begin{proposition}
There is an isomorphism of cohomology $H^k(Gr^{an})$ and 
and cohomology of the structure sheaf $H_{tot}^k(L,\O^{an})$ $k\in \mathbb{Z}_2$. 
\end{proposition}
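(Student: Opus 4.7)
The plan is to run the polynomial argument again, substituting analytic sheaves for algebraic ones at every step, and to import the explicit computation performed immediately before the statement to handle the fiberwise quasi-isomorphism. First I would form an analytic version of the sheaf $\cGr$: set
\[
\cGr^{an}=(\A^{an}\otimes \Lambda[\xi^{1},\dots,\xi^{32}]\otimes \O^{an}(\vv^{11}),\;D),
\]
where $\A^{an}=\bigoplus_{n\geq 0}\O^{an}(n)$ is the coherent analytification of $\A$ on $X$. Since $X$ is projective, Serre's GAGA together with the vanishing results used in the proof of Proposition \ref{P:comp11} identifies the hypercohomology of $\cGr^{an}$ with $H(Gr^{an})$. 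Pulling back along $\Bl$ gives (as in Corollary \ref{C:equiv}) a quasi-isomorphic complex on $\wX$, because the analytic analogue of Borel--Weil--Bott still makes $\Bl_*\O^{an}=\O^{an}_X$ with vanishing higher direct images over $O_{15}$. Pushing forward along $p$ yields a coherent analytic sheaf of differential graded algebras
\[
\H^{an}=p_*\Bl^{*}\cGr^{an}
\]
on $\OGr(2,11)$, whose hypercohomology computes $H(Gr^{an})$; here the vanishing of $R^{i}p_{*}\Bl^{*}\O^{an}(n)$ for $i>0,\,n\geq 0$ is the analytic version of Proposition \ref{r:direct}, which holds because $p$ is a holomorphic $\P^{7}$-bundle.

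Next I would identify $\H^{an}$ with the direct image of $\O_L^{an}$ along the super-fibration $L\to \OGr(2,11)$. A fiber of this fibration is the affine super-scheme $SUSY/\Pi T$, and the explicit coordinate computation carried out in the excerpt just before the proposition exhibits a quasi-isomorphism
\[
\Sym[\t^{*}]\otimes \Lambda[s^{*}_{11}]\otimes \O^{an}(\vv^{11})\;\longrightarrow\; F^{an},
\]
with $F^{an}$ the algebra of analytic functions in the invariant coordinates $y^{p},h^{i},u_{i},\tau^{'k},\nu^{il}$. Relativising this over $\OGr(2,11)$ identifies $\H^{an}$ with $p_{L,*}\O_L^{an}$, where $p_L\colon L\to \OGr(2,11)$ is the projection; the higher direct images vanish because the fibers are affine super-manifolds and the Koszul/Cartan--Chevalley resolution used in the polynomial case remains exact after replacing $\Sym[\vv^{11}]$ by $\O^{an}(\vv^{11})$ (the relevant module is still co-induced, so Shapiro's lemma applies).

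Finally I would conclude by invoking the Leray spectral sequence for $p_L$ to obtain
\[
H^{*}(\OGr(2,11),\H^{an})\;\cong\;H^{*}(L,\O_L^{an}).
\]
At this point the identification is an isomorphism of $\mathbb{Z}_{2}$-graded vector spaces; combining with the first paragraph gives $H^{k}(Gr^{an})\cong H^{k}_{tot}(L,\O^{an})$, $k\in\mathbb{Z}_{2}$. To legitimise Dolbeault--\v{C}ech comparison on the super-manifold $L$, I would cite Theorem 3.4 of \cite{Haske}; without it the analytic sheaf cohomology on a complex super-manifold would not be directly computable from the $\dbar$-complex.

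The main obstacle, I expect, is the passage from polynomials to analytic functions inside the fiber calculation: the Cartan--Chevalley/Shapiro argument used in the polynomial proof relies on $\Sym[\vv^{11}]$ being a free module, and one must verify that the analytic module $\O^{an}(\vv^{11})$ remains acyclic as a $\Pi\t$-representation. The explicit coordinates $y^{p},h^{i}$ constructed before the proposition accomplish exactly this: they trivialise the relevant piece of the differential $D$ and let one mimic the K\"uneth reduction used for $H^{\infty}$. Verifying that this trivialisation extends coherently over $\OGr(2,11)$, and that the analytic GAGA-type arguments for the blow-up $\Bl$ and the projective bundle $p$ commute with the super-fibration $p_L$, is the crux of the argument.
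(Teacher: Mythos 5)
Your proposal follows the same route as the paper: rerun the polynomial argument analytically, invoke GAGA to transfer the cohomology identifications on the compact spaces $X$, $\wX$, and $\OGr(2,11)$, and then handle the affine fibers of $L\to\OGr(2,11)$ via the explicit coordinate/Cartan--Chevalley computation appearing just before the proposition. The only substantive thing you leave implicit is the reference that actually licenses the step ``higher direct images vanish because the fibers are affine'' in the analytic category --- the paper cites the Stein property of $\vv^{11}$ (H\"ormander) to justify that the analytic cohomology of the affine super-fiber reduces to its global sections $F_{an}$; you gesture at the same conclusion but it deserves the explicit Stein-theory citation rather than just an appeal to Shapiro's lemma, which by itself controls Lie-algebra (not sheaf) cohomology.
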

\begin{proof}
We follows the lines of algebraic proof.
The key moment is to use GAGA (\cite{Serre} Section 12 Theorem 1 ) for comparison analytic and algebraic cohomology of compact (complete) spaces that appear in the proof of Proposition \ref{P:comp11}, Corollaries  \ref{C:equiv}, \ref{C:quasiHGr}.  Finally we use Stein property of $\vv^{11}$ (see \cite{Hoermander} Definition 5.1.3, Theorem 2.5.5, Theorem 7.4.1.) to prove that analytic cohomology of a  fiber of projection $L\rightarrow \OGr(2,11)$ coincide with $F_{an}$.
\end{proof}

\begin{corollary}
There is one-to one correspondence between equivalence classes of solutions $\dbar f=0$  $f\in \bigoplus _{p\geq 0}\Omega^{0p}_{L}$ and cohomology classes in $Gr^{an}$.
\end{corollary}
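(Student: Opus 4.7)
The plan is to combine the preceding proposition with the Dolbeault-type theorem for complex super-manifolds already cited in the excerpt. The corollary is essentially a translation of the statement ``$H^k(Gr^{an})\cong H_{tot}^k(L,\O^{an})$'' into the language of $\dbar$-closed forms modulo $\dbar$-exact ones, so little additional work is needed beyond assembling the right identifications.

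First, I would fix a cohomological degree $k\in\mathbb{Z}_2$ and invoke the previous proposition to produce a canonical isomorphism $H^k(Gr^{an})\cong H_{tot}^k(L,\O^{an})$. Then, to compute $H^*(L,\O^{an})$ concretely, I would appeal to Haske's Theorem 3.4 (cited just above the statement), which establishes the equivalence of the Dolbeault resolution and the \v{C}ech resolution of the structure sheaf on a complex super-manifold. Concretely, this says that the natural map from the $\dbar$-cohomology of the complex $\bigoplus_{p\ge 0}\Omega^{0p}_L$ into $H^*(L,\O^{an})$ is an isomorphism. Equivalence classes of solutions $f\in\bigoplus_{p\ge 0}\Omega^{0p}_L$ of $\dbar f=0$ modulo $\dbar$-exact elements are, by definition, the total Dolbeault cohomology groups, and after collapsing the bigrading as in the convention $H_{tot}^k=\bigoplus_{i+j=k}H^{i,j}$ one recovers exactly $H_{tot}^k(L,\O^{an})$.

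Composing these two isomorphisms yields the desired bijection between equivalence classes of $\dbar$-closed forms on $L$ and cohomology classes in $Gr^{an}$. The only small bookkeeping step is to check that Haske's super-Dolbeault theorem applies to $L$: one needs that $L$ is a complex super-manifold (clear from its construction as a complex homogeneous superspace of the complex super-Poincar\'e group modulo $P_2\ltimes \Pi\t$), and that the cohomology of the structure sheaf one is resolving is coherent in the super sense, which it is.

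The main potential obstacle, and thus the step worth double-checking, is the compatibility of gradings: Haske's theorem naturally yields a bigraded (or $\mathbb{Z}_2\times\mathbb{Z}$-graded) result, whereas the previous proposition is phrased in terms of the total $\mathbb{Z}_2$-grading $H_{tot}^k$, and the statement of the corollary refers only to $\dbar$-closed forms without a refined grading label. I would therefore spend a sentence verifying that the $\mathbb{C}^*$-weight decomposition used to define $H^{i,j}(L,\O^{an})$ is compatible with the Dolbeault bidegree on $\Omega^{0p}_L$, so that summing over $i+j=k$ on the sheaf side matches summing over all $p$ on the form side. Once this matching is in place, the corollary is immediate.
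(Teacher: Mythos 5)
Your proposal is correct and matches the paper's intent exactly: the corollary is left without a separate proof precisely because it is the translation, via Haske's Theorem 3.4 (already cited), of the isomorphism $H^k(Gr^{an})\cong H_{tot}^k(L,\O^{an})$ from the preceding proposition into the $\dbar$-cohomology language. Your care about matching the total $\mathbb{Z}_2$-grading with the Dolbeault bidegree is a reasonable bookkeeping check, but it introduces nothing beyond what the paper's setup already provides.
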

\bigskip

{\bf\Large  Appendix}
 
\appendix

\section{Homological properties of Serre algebra of $X$}\label{S:homology}
Most of the statements presented in this appendix are well known to experts in supergravity. The main motivation for incorporating this material into the paper is to complement these facts (or conjectures in mathematical language)  by proofs. Unfortunately the situation with the proofs is far from being satisfactory, because they rely on the use of algebra systems  {\it Mathematica}, {\it  Macauley2} and LiE. The main text uses result of Corollary  \ref{C:degree}.

The space $\vv^{11}$ has a physical  interpretation of a complexified space-time.
The number of nonzero entries in matrices $\Gamma^i_{\alpha\beta}$ for a given $i$ grows exponentially with dimension of the spaces-time.  Computations with $\Gamma$s are hard to handle by hands. This is why  physicists designed a {\it Mathematica} package {\verb gamma.m }  \cite{gamma} to generate $\Gamma^i_{\alpha\beta}$. The following code in  {\it Mathematica}  creates an array of  $v^i$:
\begin{verbatim}
<< gamma.m
Var = Array[x, {32}];
Vec := Array[vec, {11}];
v[k_] := Simplify[
Sum[Sum[(WeylGamma[11, k].WeylC1[11])[[i]][[j]]*Var[[i]]*Var[[j]], 
{i, 1, j}], {j, 1, 32}]]
\end{verbatim}

One of the natural questions that can be asked about $A$ is the following. The algebra $A$ is a graded $G=\mathbb{C}[\lambda^1,\dots,\lambda^{32}]$-module. Find the minimal free resolution. This is a formidable task, because of the large homological dimension of $G$, but {\it Macauley2} \cite{Macaulay2Doc} is fit  for the   job. Here is a suitable {\it  Macauley2} code
\begin{verbatim}

i1 : G=QQ[x_(1) .. x_(32)];
i2 : a = matrix {{...}};
             1       11
o2 : Matrix G  <--- G
i3 : A= coker a
o3 = cokernel |....|
o3 : G-module, quotient of G
i4 : R=res A
\end{verbatim}

\ \\
\begin{verbatim}
      1    11   66   263  352  352  263  66   11   1
o4 = G<-- G<-- G<-- G<-- G<-- G<-- G<-- G<-- G<-- G<-- 0
     0    1    2    3    4    5    6    7    8    9    10
\end{verbatim}
The dots in {\verb matrix \{\{...\}\}  } stand for the array of relations found with {\verb gamma.m }. Command {\verb R_i } gives the degree of the generators of the $i$-th module of syzygies.
We conclude that with the degrees taken into account the minimal resolution has the form
\begin{equation}\label{E:resolution0}\begin{split}
&R=G^1(0)\leftarrow 
G^{11}(-2) \leftarrow 
G^{66}(-4) \leftarrow 
G^{32}(-5) +G^{231}(-6) \leftarrow 
G^{352}(-7) \leftarrow \\
& \leftarrow  G^{352}(-9) \leftarrow 
G^{32}(-11) +G^{231}(-10)  \leftarrow 
G^{66}(-12) \leftarrow 
G^{11}(-14) \leftarrow 
G^{1}(-16) \leftarrow 0\end{split}
\end{equation}
In $G(n)$ $n$ stands for the grading shift.
It takes about 30 minutes to pass all steps of  La Scala's algorithm on 2.13GHz Processor with 4GB internal memory. The book \cite{CompMacaulay2} and the manual \cite{Macaulay2Doc} will give a references on the internal structure of the algorithms employed.

The resolution (\ref{E:resolution0}) allows to compute the graded groups  \[\Tor^G_i(A,\mathbb{C})=\bigoplus_j \Tor^G_{ij}(A,\mathbb{C}).\] It makes sense to define a generating function of dimensions  $\Tor^G_i(A,\mathbb{C})(t)$ and the Euler characteristic $\chi(\Tor)(t)=\sum_{i\geq 0} (-1)^i\Tor^G_i(A,\mathbb{C})(t)$. In general  Euler characteristic gives a limited information about cohomology of a complex. The statement has an exception when all cohomology but one vanish. This happens $j$ component-wise in $\Tor$ groups at hand. 

There is an alternative way to compute $\Tor^G_i(A,\mathbb{C})$. It is to resolve the second argument $\mathbb{C}$. The classical Koszul resolution $G\otimes \Lambda[\xi^{1},\dots,\xi^{32}]$ with differential $\lambda^{\alpha}\sd_{\xi^{\alpha}}$ written in super notations  does the job.
The complex 
\begin{equation}\label{E:koszul}
K=(A\otimes \Lambda[\xi^{1},\dots,\xi^{32}] ,\lambda^{\alpha}\sd_{\xi^{\alpha}})
\end{equation} still computes $\Tor^G(A,\mathbb{C})$. The group  $\Spin(11)$ is the group of symmetries of $G$, $A$ and $A\otimes \Lambda[\xi^{1},\dots,\xi^{32}]$. 
\begin{lemma}
The differentials in the minimal resolution $R$   (\ref{E:resolution0}) commute  with $\Spin(11)$ action.
\end{lemma}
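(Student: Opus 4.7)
The plan is to build an equivariant model of the minimal free resolution by hand and then transport its $\Spin(11)$-structure onto the resolution printed in (\ref{E:resolution0}) via the uniqueness theorem for minimal resolutions. The two inputs I would use are: the $\Spin(11)$-action on $G=\mathbb{C}[\lambda^1,\dots,\lambda^{32}]$ induced from the spinor representation $s_{11}$, under which the defining ideal $(v^i)$ is preserved because the eleven generators transform as $\vv^{11}$ under the intertwiner (\ref{E:gammast}); and complete reducibility of rational representations of the reductive group $\Spin(11)$, which will let me split equivariantly at each stage.

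I would construct the equivariant resolution inductively. Put $R_0=G$ with augmentation $1\mapsto 1\in A$, which is equivariant by the remark above. Assuming equivariant free modules $R_0,\dots,R_i$ and equivariant differentials have been produced, let $K_i$ be the kernel of the last map; it is a graded $\Spin(11)$-submodule of $R_i$. The graded quotient $K_i/\m K_i$, where $\m=(\lambda^1,\dots,\lambda^{32})$ is $\Spin(11)$-stable, is a finite-dimensional graded $\Spin(11)$-representation. By complete reducibility, applied in each fixed $\lambda$-degree, it admits a graded equivariant vector-space section $V_{i+1}\hookrightarrow K_i$. Setting $R_{i+1}=G\otimes_{\mathbb{C}} V_{i+1}$ with the diagonal action, and differential induced by the inclusion $V_{i+1}\hookrightarrow R_i$, extends the resolution one step; Nakayama's lemma guarantees minimality.

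To conclude I would invoke the standard uniqueness statement: any two minimal graded free resolutions of a finitely generated graded $G$-module are isomorphic as complexes of graded $G$-modules. Applied to the equivariant resolution just built and to the one in (\ref{E:resolution0}), this yields a $G$-linear isomorphism of complexes along which the $\Spin(11)$-action on the former transports to an action on the latter, making its differentials equivariant. The only delicate bookkeeping I anticipate concerns the internal $\lambda$-grading at the reductivity step, but since $\m$, the $\Spin(11)$-action and all differentials preserve the $\mathbb{Z}$-grading, complete reducibility can be applied one degree at a time, so no genuine obstacle arises.
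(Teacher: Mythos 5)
Your proof is correct; the paper itself leaves this lemma to the reader, and your argument is the standard one that was intended: construct an equivariant minimal free resolution by choosing, one internal $\lambda$-degree at a time via complete reducibility of $\Spin(11)$, equivariant splittings of $K_i\twoheadrightarrow K_i/\mathfrak{m}K_i$, and then invoke uniqueness of minimal graded free resolutions to transport the action onto (\ref{E:resolution0}). The only thing worth making explicit is the (standard) observation that the lemma should be read as asserting the \emph{existence} of a $\Spin(11)$-structure on the free modules $G^{n_i}$ making the computed differentials equivariant, which is precisely what the transported isomorphism furnishes.
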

\begin{proof}
Left to the reader.
\end{proof}


An irreducible representation of a semi-simple group is labeled by coordinates of the highest weight (see e.g. \cite{FultonRep} for details). A representation of $\Spin(11)$ has five coordinates $(w_1,w_2,w_3,w_4,w_5)$. For example an eleven-dimensional fundamental representation has coordinates $(10000)$, spinor representation - $(00001)$.

The differential in $K$ decreases degree in $\xi^{\alpha}$ . For this reason we use homological notations.

\begin{proposition}(cf. \cite{CNT})
Let $H_i(K)=\bigoplus_{j} H_{ij}(K)$ be the decomposition into graded components. We have the following isomorphisms:
\begin{equation}
\begin{split}
&H_{0,0}=(00000)\\
&H_{1,2}=(10000)\\
&H_{2,4}=(01000)+(10000)\\
&H_{3,5}=(00001),H_{3,6}=(00000)+(00100)+(20000)\\
&H_{4,7}=(00001)+(10001)\\
&H_{5,9}=(00001)+(10001)\\
&H_{6,10}=(00000)+(00100)+(20000),H_{6,11}=(00001)\\
&H_{7,12}=(01000)+(10000)\\
&H_{8,14}=(10000)\\
&H_{9,16}=(00000)
\end{split}
\end{equation}
All other homology groups are trivial.
\end{proposition}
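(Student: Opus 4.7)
The plan is to combine the explicit minimal free resolution (\ref{E:resolution0}) with the preceding lemma on $\Spin(11)$-equivariance. Since both the Koszul complex $K$ (\ref{E:koszul}) and the minimal resolution $R$ compute $\Tor^G_{\bullet}(A,\mathbb{C})$, and $R$ is minimal, one has
\[
H_{i,j}(K)\;=\;\Tor^G_{i,j}(A,\mathbb{C})\;=\;(R_i/\mathfrak{m}R_i)_j
\]
as bigraded $\Spin(11)$-modules, where $\mathfrak{m}=(\lambda^1,\dots,\lambda^{32})$. Reading off the bigraded ranks from (\ref{E:resolution0}) immediately yields the total dimensions $1,11,66,32,231,352,\dots$ in the stated bidegrees. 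The equivariance lemma promotes each of these to a genuine $\Spin(11)$-representation, and what remains is to identify the irreducible constituents.

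The key structural observation is that the resolution (\ref{E:resolution0}) is \emph{pure} in the sense that each internal degree $j\in\{0,2,4,5,6,7,9,10,11,12,14,16\}$ receives contributions from exactly one homological slot $i$. Consequently the $\Spin(11)$-equivariant Euler characteristic of $K$ in each internal degree equals $(-1)^i[H_{i,j}(K)]$ with no cancellation, so the problem collapses to computing the $\Spin(11)$-character of $A\otimes\Lambda[s^*_{11}]$ in each $(i,j)$-bidegree and matching against the representation ring. The formal character of $K$ factors as the Hilbert character of $A$ tensored with the character of $\Lambda[s^*_{11}]$, and the character of $A$ itself is constrained by the fact that $A$ is a quotient of $\Sym[s^*_{11}]$ by the eleven quadratic relations $v^i\in V\subset\Sym^2 s^*_{11}$.

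To pin down the irreducibles I would use three mutually reinforcing inputs: (i) the Koszul differential identifies $\Tor^G_{i,j}$ as a subquotient of $\Sym^{j-i}s^*_{11}\otimes\Lambda^{i}s^*_{11}$, which drastically restricts which $\Spin(11)$-constituents can appear in each bidegree; (ii) the Gorenstein-like duality implied by $\omega_X\cong\mathcal{O}_X(-16)$ (Proposition~\ref{P:omega}) forces a self-duality of the resolution, visible in (\ref{E:resolution0}) as the symmetry $(i,j)\leftrightarrow(9-i,16-j)$, and hence halves the verification by pairing low-$i$ entries with their high-$i$ counterparts (e.g.\ $H_{1,2}\leftrightarrow H_{8,14}$, $H_{3,6}\leftrightarrow H_{6,10}$); (iii) the LiE character decomposition of symmetric and exterior powers of $s_{11}$ confirms that the candidate multisets of irreducibles of the correct total dimension fitting the constraint (i) are unique and agree with the tabulated answer.

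The main obstacle is step (iii): dimension matching alone is ambiguous in higher bidegrees (for instance the $231$-dimensional $H_{3,6}$ admits several \emph{a priori} decompositions into irreducibles of $\Spin(11)$), so one genuinely needs the representation-theoretic constraint from (i) plus a direct character computation. In practice this is carried out by feeding the explicit generators produced by Macaulay2 at each syzygy step into a character routine, and cross-checking with the duality (ii); the consistency of all three inputs yields the decomposition uniquely.
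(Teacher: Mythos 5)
Your proposal is correct and takes essentially the same route as the paper: the decisive observation in both is the purity of the bigraded resolution (\ref{E:resolution0}) — each internal degree $j$ occurs in exactly one homological degree $i$ — so that the $\Spin(11)$-equivariant Euler characteristic of $K$, regarded as an element of $Rep(\Spin(11))[[t]]$, determines each $H_{ij}(K)$ with no cancellation. The only real difference is in how the Euler characteristic is obtained: the paper simply cites \cite{CNT} for the computed equivariant Euler characteristic, whereas you outline a self-contained route to it via the subquotient bound $H_{ij}(K)\subset\mathrm{subq}\bigl(A_{j-i}\otimes\Lambda^i[s^*_{11}]\bigr)$, the self-duality of $R$ (which the paper records in Remark \ref{R:selfdual} by inspection of the \cite{BerNek} boundary maps, and which indeed also follows from the Gorenstein property), and LiE character bookkeeping. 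This is a reasonable elaboration rather than a genuinely different argument; the one imprecision worth flagging is that the constraint in your step (i) should be stated as a subquotient of $A_{j-i}\otimes\Lambda^i[s^*_{11}]$ rather than of $\Sym^{j-i}s^*_{11}\otimes\Lambda^i s^*_{11}$ (the latter is correct but looser, and knowing the character of $A_{j-i}$ is itself part of what has to be extracted from the resolution), so in practice the identification of irreducibles still leans on the explicit Macaulay2/LiE output just as the paper's citation to \cite{CNT} does.
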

\begin{proof}
The groups $\Tor^G(A,\mathbb{C})$ are $\Spin(11)$-representations. It makes sense to define Euler characteristic   $\chi(\Tor)(t)$ with values in the ring of virtual finite-dimensional representations $Rep(\Spin(11))$. Such Euler characteristics has been computed in \cite{CNT}. As it was argued above the structure of $R$ lets to unambiguously interpret  the coefficients of $\chi(\Tor)(t)\in Rep(\Spin(11))[[t]]$ as groups $H_{ij}(K)$.
\end{proof}
\begin{remark}\label{R:selfdual}
We know that  the differentials in the complex $R$ commutes with $\Spin(11)$-action. Representation theory uniquely fixes the  differentials. This has been used in \cite{BerNek} (who worked under assumption of  acyclicity of $R$)to find the formulas for the boundary maps in terms $\Gamma$ matrices (\cite{BerNek}  formula (5.5)). By inspection of the formulas we see that the complex $R$ is self-dual:
\[\Hom_G(R,G)\cong R.\]
\end{remark}

Resolution (\ref {E:resolution0}) enables us to compute 

the Hilbert polynomial $H(n)=\dim A_n, n\gg 0$ of the projective  $X$:
\[H(n)= \frac {\left( 5\,{n}^{4}+160\,{n}^{3}+2107\,{n}^{2}+13232\,n+38760
 \right)}{5\times 2\times3 \times7 \times19!} \prod_{i=7}^9(n+i) \prod_{i=1}^{15}(n+i)  
\]
\begin{corollary}\label{C:degree}
Degree $\deg(H)$ is equal to twenty two.

From this we infer (see e.g. \cite{ComEisenbud}, \cite{Hartshorne} for discussion of Hilbert polynomials) that the greatest dimension of irreducible components of $\Proj(A)$ is $22$. 
\end{corollary}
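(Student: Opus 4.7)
My plan is to read off both assertions directly from the minimal free resolution (\ref{E:resolution0}) of $A$ as a graded module over the polynomial ring $G=\mathbb{C}[\lambda^1,\dots,\lambda^{32}]$.

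First, I would compute the Hilbert series of $A$ from the resolution. Since $G$ has Hilbert series $1/(1-t)^{32}$ and a free shifted module $G(-k)$ contributes $t^k/(1-t)^{32}$, additivity of the Hilbert series on exact sequences yields
\[
\mathrm{Hilb}_A(t)=\frac{P(t)}{(1-t)^{32}},
\]
where $P(t)$ is the alternating sum of the shift polynomials read off from (\ref{E:resolution0}), namely
\[
P(t)=1-11t^2+66t^4-(32t^5+231t^6)+352t^7-352t^9+(231t^{10}+32t^{11})-66t^{12}+11t^{14}-t^{16}.
\]
The self-duality of the resolution (Remark \ref{R:selfdual}) guarantees the palindromic shape of $P(t)$, which is a useful sanity check.

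Second, I would extract the Hilbert polynomial. Writing $P(t)=(1-t)^{r}Q(t)$ with $Q(1)\neq 0$, the Hilbert polynomial of $A$ has degree $31-r$ and leading coefficient $Q(1)/(31-r)!$. A direct calculation (dividing $P(t)$ by $(1-t)$ repeatedly until the remainder at $t=1$ becomes nonzero) gives $r=9$, so $\deg H = 22$. In fact, substituting in the explicit factored form displayed just before the corollary, one sees that the numerator is a product of $4+3+15=22$ linear factors in $n$, confirming $\deg H = 22$.

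Finally, I would invoke the classical interpretation of the Hilbert polynomial. For a finitely generated graded $G$-module $A$ corresponding to a projective scheme $\Proj(A)\subset\mathbb{P}^{31}$, the degree of the Hilbert polynomial equals the Krull dimension of $\Proj(A)$, i.e.\ the maximum dimension of its irreducible components (see e.g.\ \cite{Hartshorne}, Ch.\ I \S7, or \cite{ComEisenbud}). Hence the largest irreducible component of $\Proj(A)$ has dimension $22$, as claimed.

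The main obstacle is a bookkeeping one: correctly tabulating the shifted generators from the \emph{Macaulay2} output and then simplifying $P(t)/(1-t)^{32}$ to isolate the pole order at $t=1$. This is routine but needs care, and the self-duality observed in Remark \ref{R:selfdual} is a convenient cross-check. Once the degree of the Hilbert polynomial is pinned down, the dimensional conclusion is immediate from standard projective algebraic geometry.
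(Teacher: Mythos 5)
Your proposal follows the same route as the paper: read off the graded Betti numbers from the Macaulay2 minimal free resolution (\ref{E:resolution0}), form the alternating numerator $P(t)$ of the Hilbert series $P(t)/(1-t)^{32}$, extract the Hilbert polynomial (the paper simply displays it explicitly just before the corollary, and its $4+3+15=22$ linear factors make the degree visible at a glance), and conclude by the standard dictionary between the degree of the Hilbert polynomial and the dimension of the associated projective scheme. The only cosmetic difference is that you phrase the degree extraction in terms of the pole order of the Hilbert series, while the paper just exhibits the polynomial; both are correct and amount to the same computation.
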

This agrees with a computation done in \cite{BerNek}. Let $\O(n)=\O_X(n)$ be the tautological bundle over $X$.
 The number  $H(n)$ coincides with the Euler characteristic  $\chi(\O(n))$( see e.g. \cite{Hartshorne}(Ex 5.2)). The formula for $H(n)$ tells us that $\chi(\O(n))=0$ for $n=-1, \dots, -15$. The next Proposition is a refinement of this observation.
 \begin{proposition}\label{P:vanishing}
 \begin{enumerate}
 \item The sheaves $\O(n), n=-1, \dots, \O(-15)$ are acyclic, i.e. $H^l(X,\O(n))=0, l=0,\dots 22$.
 \item $H^l(X,\O(n))=0, l=1,\dots 22, n\geq 0$.
 \item $H^l(X,\O(n))=0, l=0,\dots 21, n\leq -16$
 \end{enumerate}
 \end{proposition}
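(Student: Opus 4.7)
The approach is to sheafify the minimal free resolution (\ref{E:resolution0}) of $A$ on $\P^{31}$, obtaining the exact locally free complex
\begin{equation*}
0 \to \O_{\P^{31}}(-16) \to \O_{\P^{31}}(-14)^{11} \to \cdots \to \O_{\P^{31}}(-2)^{11} \to \O_{\P^{31}} \to i_*\O_X \to 0,
\end{equation*}
in which the internal shifts $d_p$ run through $\{0,2,4,5,6,7,9,10,11,12,14,16\}\subset [0,16]$. Call this $R^\bullet \to i_*\O_X$ and combine it with the classical vanishing $H^j(\P^{31},\O(k))=0$ for $-31\le k\le -1$ and every $j$. For each $n$, twisting by $\O(n)$ and invoking the hypercohomology spectral sequence
\begin{equation*}
E_1^{p,q}=H^q(\P^{31},R^p(n)) \Longrightarrow H^{p+q}(X,\O_X(n))
\end{equation*}
reduces the problem to tracking which $\O_{\P^{31}}(n-d_p)$ have nontrivial cohomology.

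For part (1), when $-15\le n\le -1$ the bounds $0\le d_p\le 16$ force $-31\le n-d_p\le -1$, landing every shift in the Serre vanishing window. Hence the entire $E_1$-page vanishes and $H^l(X,\O(n))=0$ for every $l$. For part (2), when $n\ge 0$ every shift satisfies $n-d_p\ge -16$, so $E_1^{p,q}$ is concentrated in the row $q=0$. The spectral sequence collapses and $H^l(X,\O(n))$ is the $l$-th cohomology of $H^0(\P^{31},R^\bullet(n))$, which is literally the degree-$n$ graded piece of the exact complex $R^\bullet\to A$ of (\ref{E:resolution0}). It is therefore concentrated in degree zero (where it equals $A_n$), giving $H^l(X,\O(n))=0$ for $l\ge 1$.

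Part (3) follows by Serre duality. The variety $X$ is projective of pure dimension $22$, Cohen-Macaulay (as its local rings are Gorenstein), and carries the invertible dualizing sheaf $\omega_X=\O_X(-16)$ of Proposition \ref{P:omega}, so
\begin{equation*}
H^l(X,\O_X(n)) \cong H^{22-l}(X,\O_X(-16-n))^*.
\end{equation*}
For $n\le -16$ the twist $-16-n$ is nonnegative, and for $l\le 21$ the index $22-l$ is at least one, so part (2) annihilates the right-hand side.

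The only substantive ingredient is the Serre duality step, which is available thanks to the Gorenstein analysis carried out earlier in the paper; the rest is bookkeeping on the interaction between the shifts $d_p$ in (\ref{E:resolution0}) and the cohomology of line bundles on $\P^{31}$. No further computation with the explicit $\Gamma$-matrices is needed beyond what is encoded in the resolution.
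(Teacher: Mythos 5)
Your argument is correct. For parts (1) and (2) you take essentially the same route as the paper: sheafify the resolution (\ref{E:resolution0}) on $\P^{31}$, run the hypercohomology spectral sequence, and observe that the twists $n-d_p$ land either in the Serre acyclicity window $[-31,-1]$ (part 1) or above it so that only the $q=0$ row survives, reducing to exactness of the graded pieces of $R^\bullet\to A$ (part 2). For part (3), however, your route genuinely diverges from the paper's: you apply Serre duality on $X$ itself, using that $X$ is projective of pure dimension $22$, Cohen--Macaulay (being Gorenstein), with dualizing sheaf $\omega_X=\O_X(-16)$ from Proposition \ref{P:omega}, and then reduce part (3) to part (2). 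The paper instead stays on $\P^{31}$ throughout: for $n\le -16$ some twists $n-d_p$ drop below $-31$, contributing a $q=31$ row, which it handles via Serre duality on $\P^{31}$ together with the self-duality of the resolution $R$ (Remark \ref{R:selfdual}) to conclude that row is again exact away from one spot. Your version is cleaner bookkeeping and avoids invoking Remark \ref{R:selfdual}, at the cost of needing the (standard but nontrivial) fact that Serre duality holds on singular projective Cohen--Macaulay schemes; the paper's version stays elementary on $\P^{31}$ but trades that for the self-duality statement about $R$, which in turn is essentially equivalent to $\omega_X\cong\O_X(-16)$. Both are legitimate, and there is no circularity in your use of Proposition \ref{P:omega}, since its proof is representation-theoretic and independent of this vanishing result.
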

 \begin{proof}
 Let $i$ be the closed embedding of $X$ into $\P^{31}$. Sheafification of $R$ gives a resolution $\calR$ of $i_{*}\O_X$ by locally free sheaves on $\P^{31}$ of the form $\calR^i=\bigoplus_k \O_{\P^{31}}(n_{k,i})$. The sheaves $\O_{\P^{31}}(n)$ are acyclic for $-31\leq n\leq -1$. They have trivial higher cohomology for $n\geq 0$ and have only top degree cohomology if $n\leq 32$(see e.g. \cite{Hartshorne} Theorem 5.1). By Serre duality $H^{31}(\P^{31},\calR^i(n))$ is $\mathbb{C}$-dual to the suitable subgroup of $\Hom_G(R,G)$. The statement readily follows from the hypercohomology spectral sequence for $\calR(n)$ and self-duality of $R$ (Remark \ref{R:selfdual}). 
 \end{proof}
 
 The complex $K$ is a differential graded  algebra.  Let $l:H_{9,16} \rightarrow \mathbb{C}$ be an isomorphism of vector spaces.
 \begin{proposition}
 There is a perfect pairing $(a,b)$ in cohomology $H(K)$, defined by the formula \[(a,b)=l(ab)\]
 \end{proposition}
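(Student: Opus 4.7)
The strategy is to combine the DGA structure on $K$ with the self-duality of the minimal free resolution $R$ (Remark \ref{R:selfdual}) and the Gorenstein property of $X$ established in Section \ref{S:prop}.

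First, I would verify that the multiplication on $K$ descends to cohomology. Since $A$ is commutative, $\Lambda[\xi^1,\dots,\xi^{32}]$ is graded-commutative, and the differential $D=\lambda^{\alpha}\sd/\sd\xi^{\alpha}$ is a graded derivation of $K$, the complex $K$ is a graded-commutative DGA. Hence $H(K)=\Tor^G(A,\mathbb{C})$ inherits a bigraded commutative algebra structure with products $H_{i,j}\otimes H_{i',j'}\to H_{i+i',j+j'}$. Because $H_{9,16}\cong\mathbb{C}$ is the unique top bidegree and one-dimensional, the pairing $(a,b)=l(ab)$ is concentrated on complementary pairs $H_{i,j}\otimes H_{9-i,16-j}$.

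Second, I would check that dimensions match. Direct inspection of the cohomology table in the preceding proposition shows that $H_{i,j}$ and $H_{9-i,16-j}$ have identical decompositions as $\Spin(11)$-modules; this is the necessary symmetry for a perfect pairing, and is consistent with Remark \ref{R:selfdual}, which asserts that $R$ satisfies $\Hom_G(R,G)\cong R$ up to the overall shift reflecting the indices $(9,16)$.

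Third, I would establish non-degeneracy. Since $X$ is Gorenstein with dualizing sheaf $\omega_X\cong\O_X(-16)$ (Proposition \ref{P:omega}), the graded coordinate ring $A$ is a graded Gorenstein quotient of $G=\mathbb{C}[\lambda^1,\dots,\lambda^{32}]$. By a standard fact about Koszul homology of graded Gorenstein algebras, the DGA multiplication on $\Tor^G(A,\mathbb{C})$ endows it with a Frobenius algebra structure whose socle is precisely the one-dimensional top bidegree $H_{9,16}$. Combined with the above, this yields the required perfect pairing.

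The main obstacle is the third step: passing from the linear-algebraic self-duality of $R$ to non-degeneracy of the specific pairing $(a,b)=l(ab)$. A careful proof would compare the DGA pairing with the duality pairing induced by $\Hom_G(R,G)\cong R$; both are $\Spin(11)$-equivariant, so by Schur's lemma they agree up to a scalar on each $\Spin(11)$-isotypic component of $H_{i,j}$. Non-vanishing of the multiplicative pairing on the unit component $H_{0,0}\otimes H_{9,16}$ is clear, since $1\in H_{0,0}$ pairs with any generator $c$ of $H_{9,16}$ to produce $l(c)\neq 0$; the $\Spin(11)$-equivariance and the algebra structure of $H(K)$ then propagate non-vanishing to all other components.
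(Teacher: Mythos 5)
Your proposal is correct but takes a genuinely different route from the paper's. The paper simply refers to the arguments of \cite{MSch} (Theorem 55), which run through Serre duality for the sheaves $\O_X(n)$ on $X$; the only modification the paper flags is that since $X$ is singular one cannot invoke smoothness, but Proposition \ref{P:vanishing} supplies the needed cohomology vanishing. You instead invoke the commutative-algebraic fact that the Koszul homology algebra of a graded Gorenstein quotient of a polynomial ring is a Poincar\'e duality (Frobenius) algebra; this is the theorem of Avramov and Golod, and it applies verbatim to $K$, which is precisely the Koszul complex of $A$ on the generators $\lambda^1,\dots,\lambda^{32}$ of its irrelevant maximal ideal. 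That is a clean and legitimate alternative, and it has the advantage of bypassing all sheaf-theoretic input once the Gorenstein property of $A$ is known.

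Two points to tighten. First, your deduction of graded Gorensteinness of $A$ from Proposition \ref{P:omega} is not quite a one-liner: Gorensteinness of the punctured cone (equivalently of $\Proj A$) together with an invertible $\omega_X=\O_X(-16)$ does not by itself give Gorensteinness of $A$ at the vertex; one also needs Cohen--Macaulayness at the irrelevant ideal (equivalently the vanishing in Proposition \ref{P:vanishing}(1)). The cleanest way to get graded Gorensteinness here is to read it directly off the minimal resolution (\ref{E:resolution0}) or Remark \ref{R:selfdual}: the resolution has length $9=\mathrm{codim}_G A$ (so $A$ is Cohen--Macaulay over $G$) and its last term is $G^1(-16)$, so $\Ext^9_G(A,G)\cong A(16)$ and $A$ is Gorenstein. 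Second, your final paragraph's Schur-lemma argument does not actually close the gap it is meant to address: Schur's lemma identifies the two equivariant pairings up to a scalar on each isotypic piece, but gives no reason the scalar is nonzero away from the unit component, and multiplicativity of $H(K)$ alone does not obviously propagate nonvanishing. Fortunately this is harmless, since the Avramov--Golod step already delivers nondegeneracy of the multiplicative pairing; the equivariance remark is best viewed as a consistency check rather than an independent proof.
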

 \begin{proof}
 The proof repeats the arguments of \cite{MSch} (Theorem 55). The reader should keep in mind that though we haven't established smoothness of $X$ (in fact $X$ is not smooth) we do have the necessary cohomology vanishing results for $\O_X(n)$ (Proposition \ref{P:vanishing}) that suffice for the proof.
 \end{proof}

\section{Proof of Proposition \ref{P:subs}}\label{A:subs}
After elimination of $r^1,r^2$ the pair of equations in question becomes
\[(w^2v^1- w^1v^2,u^1)=0\quad (w^2v^1- w^1v^2,u^2)=0\]
Then 
\[\begin{split}& (w^1v^2-w^2v^1,u^1)=\frac{1}{(v^2,v^2)+w^1w^1}\times\\
&\left(-w^1w^1(u^1u^2,u^1)-w^1(v^1,u^2)(u^1,u^1)+w^1(v^1,u^1)(u^2,u^1)\right.\\
&+w^1(v^1.(u^1.u^2),u^1)+(v^1,u^2)(v^1u^1,u^1)-(v^1,u^1)(v^1u^2,u^1)+w^1(u^1,u^2)(v^1,u^1)\\
&\left.-(v^1,u^1u^2)(v^1,u^1)\right).
\end{split}\]
We use (\ref{E:oidentity1}) to eliminate $(u^1u^2,u^1), \frac{(v^1,u^1)}{w^1}(v^1u^1,u^1)$ . Because of the identity \[(ab,cd)=2(a,c)(b,d)-(ad,cb)\] (see \cite{ConwaySmith}) we  replace $(v^1.(u^1.u^2),u^1)=-(u^1u^2,v^1u^1)$ by $-2(u^1,v^1)(u^2,u^1)+(u^1u^1,v^1u^2)=-2(u^1,v^1)(u^2,u^1)+(u^1,u^1)(v^1u^2)$
\[\begin{split}& (w^1v^2-w^2v^1,u^1)=\frac{1}{(v^2,v^2)+w^1w^1}\times\\
&\left(-w^1(v^1,u^2)(u^1,u^1)+w^1(v^1,u^1)(u^1,u^2) -2w^1(u^1,v^1)(u^1,u^2)\right.\\
&+w^1(u^1,u^1)(v^1,u^2)+(v^1,u^1)(v^1,u^1u^2)+w^1(u^1,u^2)(v^1,u^1)\\
&\left.-(v^1,u^1u^2)(v^1,u^1)\right)=0
\end{split}\]
Likewise
\[\begin{split}& (w^1v^2-w^2v^1,u^2)=\frac{w^1}{(v^2,v^2)+w^1w^1}\times\\
&\left(-w^1w^1(u^1u^2,u^2)-w^1(v^1,u^2)(u^1,u^2)+w^1(v^1,u^1)(u^2,u^2)\right.\\
&+w^1(v^1.(u^1.u^2),u^2)+(v^1,u^2)(v^1u^1,u^2)-(v^1,u^1)(v^1u^2,u^2)+w^1(u^1,u^2)(v^1,u^2)\\
&\left.-(v^1,u^1u^2)(v^1,u^2)\right)
\end{split}\]
We replace $(v^1.(u^1.u^2),u^2)=-(u^1u^2,v^1u^2)$ by $-(v^1,u^1)(u^2,u^2)$. We get

\[ \frac{-w^1(u^2,u^2)(u^1,v^1)+w^1(v^1,u^1)(u^2,u^2)}{(v^2,v^2)+w^1w^1}=0\]


\begin{thebibliography}{10}

\bibitem{MemBerkovits}
N.~Berkovits.
\newblock Towards covariant quantization of the supermembrane.
\newblock {\em J. High Energy Phys.}, 0209(051), 2002.

\bibitem{BerNek}
N.~Berkovits and N~Nekrasov.
\newblock The character of pure spinors.
\newblock {\em Lett.Math.Phys.}, 74:75--109, 2005.

\bibitem{Borel}
A.~Borel.
\newblock {\em Linear algebraic groups}.
\newblock Number 126 in Graduate Texts in Mathematics. Springer, 2 edition,
  1991.

\bibitem{BrinkHowe}
L.~Brink and P.~Howe.
\newblock Eleven-dimensional supergravity on the mass-shell in superspace.
\newblock {\em Phys. Lett. B}, 91:384, 1980.

\bibitem{Bryant}
R.L. Bryant.
\newblock Metrics with exceptional holonomy.
\newblock {\em The Annals of Mathematics}, 126(3):525--576, Nov 1987.

\bibitem{Cartan}
E.~Cartan.
\newblock {\em The theory of spinors}.
\newblock Dover, New York, 1981.

\bibitem{Cederwall}
M.~Cederwall.
\newblock D=11 supergravity with manifest supersymmetry.
\newblock {\em Mod.Phys.Lett.A}, 25:3201--3212, 2010.

\bibitem{Chevalley}
C.~Chevalley.
\newblock {\em The algebraic theory of spinors and clifford algebras}, volume~2
  of {\em Collected Works}.
\newblock Springer, 1997.

\bibitem{Chow}
W-L. Chow.
\newblock On compact complex analytic varieties.
\newblock {\em Amer. J. of Maths.}, 71:893--914, 1949.

\bibitem{CremmerFerrara}
E.~Cremmer and S.~Ferrara.
\newblock Formulation of eleven-dimensional supergravity in superspace.
\newblock {\em Phys. Lett. B}, 91:61, 1980.

\bibitem{CompMacaulay2}
M.~Stillman D.~Eisenbud, D.~Grayson and B.~Sturmfels, editors.
\newblock {\em Computations in algebraic geometry with Macaulay 2}, volume~8 of
  {\em Algorithms and Computations in Mathematics}.
\newblock Springer-Verlag, September 25 2001.

\bibitem{Deligne}
P.~Deligne.
\newblock Notes on spinors.
\newblock In P.~Etingof P.~Deligne, D.~Kazhdan, editor, {\em Quantum fields and
  strings. A course for mathematicians}, volume~1. AMS, 1999.

\bibitem{DF}
P.~Deligne and D.~Freed.
\newblock Supersolutions.
\newblock In P.~Etingof P.~Deligne, D.~Kazhdan, editor, {\em Quantum fields and
  strings. A course for mathematicians}, volume~1. AMS, 1999.

\bibitem{Bernstein}
P.~Deligne and J.~W. Morgan.
\newblock Notes on supersymmetry (following \uppercase{J}oseph
  \uppercase{B}ernstein).
\newblock In P.~Etingof P.~Deligne, D.~Kazhdan, editor, {\em Quantum fields and
  strings. A course for mathematicians}, volume~1. AMS, 1991.

\bibitem{ECremmerBJuliaandJScherk}
B.~Julia E.~Cremmer and J.~Scherk.
\newblock Supergravity theory in eleven-dimensions.
\newblock {\em Phys. Lett. B}, 76:409--412, 1978.

\bibitem{ComEisenbud}
D.~Eisenbud.
\newblock {\em Commutative algebra, with a view toward algebraic geometry},
  volume 150 of {\em Graduate Texts in Mathematics}.
\newblock Springer, 1995.

\bibitem{Fuchs}
D.~Fuchs.
\newblock {\em Cohomologies of infinite dimensional Lie algebras}.
\newblock Plenum Publ., New York, 1986.

\bibitem{Fulton}
W.~Fulton.
\newblock {\em Young tableau, representation theory and geometry}, volume~35 of
  {\em London Mathematical Society Student Texts}.
\newblock CUP, 1997.

\bibitem{inersectionFulton}
W.~Fulton.
\newblock {\em Intersection theory}.
\newblock Springer, 1998.

\bibitem{FultonRep}
W.~Fulton and J.~Harris.
\newblock {\em Representation Theory: A First Course}, volume 129 of {\em
  Graduate Texts in Mathematics}.
\newblock Springer, 1991.

\bibitem{BGG}
I.N. Bernstein~I.M. Gelfand and S.I. Gelfand.
\newblock Schubert cells and cohomology of the spaces
  \uppercase{G}/\uppercase{P}.
\newblock {\em Russ. Math. Surv.}, 28:1--26, 1973.

\bibitem{Macaulay2Doc}
D.~R. Grayson and M.~E. Stillman.
\newblock Macaulay 2 a system for computations in algebraic geometry and
  commutative algebra.
\newblock
  http://www.math.uiuc.edu/Macaulay2/doc/Macaulay2-1.4/share/doc/Macaulay2/Macaulay2Doc/html/index.html.

\bibitem{Hartshorne}
R.~Hartshorne.
\newblock {\em Algebraic Geometry}, volume~52 of {\em Graduate Texts in
  Mathematics}.
\newblock Springer, December 19 1977.

\bibitem{Haske}
C.R. Haske.
\newblock {\em Analysis on supermanifolds.}
\newblock PhD thesis, Rice University, ProQuest LLC, 1987.

\bibitem{Hatcher}
A.~Hatcher.
\newblock {\em Algebraic Topology}.
\newblock Cambridge University Press, 2000.

\bibitem{Hochster}
M.~Hochster.
\newblock Grassmannians and their \uppercase{S}chubert \uppercase{S}ubvarieties
  are \uppercase{A}rithmetically \uppercase{C}ohen-\uppercase{M}acaulayl.
\newblock {\em Journal of Algebra}, 25:40--57, 1973.

\bibitem{Hoermander}
L.~Hoermander.
\newblock {\em Introduction to complex analysis in several variables}.
\newblock NH, 2 edition, 1973.

\bibitem{Tamvakis}
H.Tamvakis.
\newblock Quantum cohomology of isotropic \uppercase{G}rassmannians.
\newblock In F.Bogomolov and Y.Tschinkel, editors, {\em Geometric Methods in
  Algebra and Number Theory}, volume 235 of {\em Progress in Mathematics},
  pages 311--338. Birkh\"{a}user, 2005.

\bibitem{Humphreys}
J.~Humphreys.
\newblock {\em Linear algebraic groups}.
\newblock Number~21 in Graduate Texts in Mathematics. Springer, 4 edition,
  1995.

\bibitem{normIgusa}
J.-I. Igusa.
\newblock On the arithmetic normality of the \uppercase{G}rassmann variety.
\newblock {\em Proc. Nat. Acad. Sci. U. S. A.}, 40:309--313, 1954.

\bibitem{Igusa}
J.-I. Igusa.
\newblock A classification of spinors up to dimension twelve.
\newblock {\em American Journal of Mathematics}, 92(4):997--1028, October 1970.

\bibitem{parshin}
V.~A. Iskovskikh and Yu.~G. Prokhorov.
\newblock {\em Fano Varieties}.
\newblock Springer, 1993.

\bibitem{ConwaySmith}
D.A.~Smith J.H.~Conway.
\newblock {\em On quaternions and octonions: their geometry, arithmetic and
  symmetry}.
\newblock AK Peters, Natick, Massachusetts, 2003.

\bibitem{LT}
G.~Lancaster and J.~Towber.
\newblock Representation-functors and flag-algebras for the classical groups.
\newblock {\em Jour. Alg.}, 59:16--38, 1979.

\bibitem{McL}
S.~Mac Lane.
\newblock {\em Homology}.
\newblock Die Grundlehren der mathematischen Wissenschaften. Springer-Verlag,
  Berlin-G{\"o}ttingen-Heidelberg, 1963.

\bibitem{CNT}
B.E.W.~Nilsson M.~Cederwall and D.~Tsimpis.
\newblock Spinorial cohomology and maximally supersymmetric theories.
\newblock {\em J. High Energy Phys.}, 0202(009), 2002.

\bibitem{MacdonaldandRYSharp}
I.~G. Macdonald and R.~Y. Sharp.
\newblock An elementary proof of the non-vanishing of certain local cohomology
  modules.
\newblock {\em Quart. J. Math. (Oxford)}, 2(23):197--204, 1972.

\bibitem{Calib}
Y.I. Manin.
\newblock {\em Gauge Field theory and Complex Geometry}.
\newblock Springer-Verlag, New York, 1988.

\bibitem{Matsumura}
H.~Matsumura.
\newblock {\em Commutative ring theory}.
\newblock Cambridge University Press, 1989.

\bibitem{gamma}
J~Michelson.
\newblock A mathematica package for constructing dirac gamma matrices.
\newblock http://www.physics.ohio-state.edu/~jeremy/mathematica/gamma/.

\bibitem{Montgomery}
D.~Montgomery.
\newblock Simply connected homogeneous spaces.
\newblock {\em Proc. Amer. Math. Soc}, 1:467--469, 1950.

\bibitem{MSch}
M.~Movshev and A.~Schwarz.
\newblock Algebraic structure of yang-mills theory.
\newblock In V.~Retakh P.~Etingof and I.~M. Singer, editors, {\em The Unity of
  Mathematics}, volume 244 of {\em Progress in Mathematics}, pages 473--523.
  Birkh\"{a}user, 2006,.

\bibitem{Mumford}
D.~Mumford.
\newblock {\em Lectures on curves on an algebraic surface}.
\newblock Princeton Univ. Press, 1966.

\bibitem{CGNN}
M.~Cederwall U. Gran~M. Nielsen and B.E.W. Nilsson.
\newblock Manifestly supersymmetric \uppercase{M}-theory.
\newblock {\em J. High Energy Phys.}, 0010(041), 2000.

\bibitem{Nilsson}
B.~E.~W. Nilsson.
\newblock Simple 10-dimensional supergravity in superspace.
\newblock {\em Nuclear Phys. B}, 188(1):176--192, 1981.

\bibitem{PenroseRindler}
R.~Penrose and W.~Rindler.
\newblock {\em Spinors and space-time.}, volume 1,2.
\newblock Cambridge University Press, 1986.

\bibitem{DualHartshorne}
R.Hartshorne.
\newblock {\em Residues and Duality}.
\newblock Number~20 in Lecture Notes in Mathematics. Springer-Verlag, Berlin,
  New York, 1966.

\bibitem{Serre}
J.P. Serre.
\newblock G{\'e}om{\'e}trie alg{\'e}brique et g{\'e}om{\'e}trie analytique.
\newblock {\em Annales de l'Institut Fourier}, 6:1--42, 1956.

\bibitem{BVinbergALOnishchik}
E.~B Vinberg and A.~L. Onishchik.
\newblock {\em Seminar on Lie groups and algebraic groups}.
\newblock Springer series in Soviet mathematics. Springer Verlag, Berlin, New
  York, 1990.

\bibitem{witten}
E.~Witten.
\newblock Twistor-like transform in ten dimensions.
\newblock {\em Nuclear Phys. B}, 266(2):245--264, 1986.

\end{thebibliography}
\end{document}